\tikzset{snake it/.style={decorate, decoration=snake}}
 \newtheorem{definition}{Definition}
 \newtheorem{lemma}{Lemma}
 \newtheorem{proposition}[]{Proposition}
 \newtheorem{theorem}[]{Theorem}
 \newcommand{\dist}{\texttt{dist}}
 \newcommand{\eps}{\varepsilon}
 \newcommand{\Oish}{\widetilde{O}}
 \newcommand{\zz}{\mathbb{Z}}
 \newcommand{\rr}{\mathbb{R}}
 \newcommand{\poly}{\text{poly}}
 \newcommand{\coord}{\normalfont\texttt{coord}}
 \newcommand{\proj}{\normalfont\texttt{proj}}
\title{New Additive Spanner Lower Bounds by an Unlayered Obstacle Product\footnote{This work was supported by NSF:AF 2153680.}}
\date{}
\author{
\begin{tabular}{c c}
   Greg Bodwin  & Gary Hoppenworth  \\
   University of Michigan & University of Michigan
   \\
     \texttt{bodwin@umich.edu}  &  \texttt{garytho@umich.edu} \\
\end{tabular}
}
\begin{document}
\maketitle
\thispagestyle{empty}
\begin{abstract}
For an input graph $G$, an \emph{additive spanner} is a sparse subgraph $H$ whose shortest paths match those of $G$ up to small additive error.
We prove two new lower bounds in the area of additive spanners:
\begin{itemize}
\item We construct $n$-node graphs $G$ for which any spanner on $O(n)$ edges must increase a pairwise distance by $+\Omega(n^{1/7})$.
This improves on a recent lower bound of $+\Omega(n^{1/10.5})$ by Lu, Wein, Vassilevska Williams, and Xu [SODA '22]. 

\item A classic result by Coppersmith and Elkin [SODA '05] proves that for any $n$-node graph $G$ and set of $p = O(n^{1/2})$ demand pairs, one can \emph{exactly} preserve all pairwise distances among demand pairs using a spanner on $O(n)$ edges. 
They also provided a lower bound construction, establishing that  this range $p = O(n^{1/2})$ cannot be improved.
We strengthen this lower bound by proving that, for any constant $k$, this range of $p$ is still unimprovable \emph{even if the spanner is allowed $+k$ additive error among the demand pairs}.
This negatively resolves an open question asked by Coppersmith and Elkin [SODA '05] and again by Cygan, Grandoni, and Kavitha [STACS 
'13] and Abboud and Bodwin [SODA '16].
\end{itemize}
At a technical level, our lower bounds are obtained by an improvement to the entire \emph{obstacle product} framework used to compose ``inner'' and ``outer'' graphs into lower bound instances.
In particular, we develop a new strategy for analysis that allows certain \emph{non-layered} graphs to be used in the product, and we use this freedom to design better inner and outer graphs that lead to our new lower bounds.
%
\end{abstract}


\pagebreak
\setcounter{page}{1}
\section{Introduction}

A basic question arising in robotics \cite{Cai94}, circuit design \cite{CKRS90, CKRS92, SCRS01}, distributed algorithms \cite{Awerbuch85, BCLR86}, and many other areas of computer science (see survey \cite{ABSHJKS20}) is to compress a graph metric $G$ into small space while approximately preserving its shortest path distances.
When this compression is achieved by a sparse subgraph $H \subseteq G$ whose distance metric is similar to that of $G$, we call $H$ a \emph{spanner} of $G$.
The setting of spanners on a linear or near-linear number of edges is considered particularly important in applications; that is, for an input graph on $n$ nodes, we often want spanners on $O(n)$ or perhaps $O(n^{1+\eps})$ edges \cite{ABSHJKS20}.

Spanners were first abstracted by Peleg and Upfal \cite{PU89jacm} and Peleg and Ullman \cite{PU89sicomp} after arising implicitly in prior work.
Their initial work studied spanners in the setting of \emph{multiplicative} error:
\begin{definition} [Multiplicative Spanners]
For a graph $G$, a subgraph $H \subseteq G$ over the same vertex set is a $\cdot k$ multiplicative spanner of $G$ if, for all nodes $s, t$, we have $\dist_H(s, t) \le k \cdot \dist_G(s, t)$.
\end{definition}

Optimal bounds for multiplicative spanners were quickly closed in a classic paper by Alth{\" o}fer, Das, Dobkin, Joseph, and Soares \cite{ADDJS93}.
For the specific case of $O(n)$-size spanners, they proved:
\begin{theorem} [\cite{ADDJS93}] \label{thm:multspan}
Every $n$-node graph has a $\cdot O(\log n)$ multiplicative spanner on $O(n)$ edges.
Moreover, there are graphs that do not have an $\cdot o(\log n)$ multiplicative spanner on $O(n)$ edges.
\end{theorem}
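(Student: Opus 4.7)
The plan is to handle the two directions of Theorem~\ref{thm:multspan} separately using classical techniques from the extremal theory of girth.

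For the upper bound, I would use the \emph{greedy algorithm}: process the edges of $G$ in any order, and add an edge $(u,v)$ to $H$ if and only if the current distance $\dist_H(u,v) > k$, for a parameter $k$ to be chosen. The stretch is at most $k$ by construction, since whenever an edge $(u,v)$ is rejected a $u$-$v$ path of length at most $k$ already exists in $H$, and this path only shortens as more edges are added. For sparsity, the key observation is that $H$ has girth greater than $k+1$: the last edge added to any cycle of length $\ell \le k+1$ would have closed a $u$-$v$ path of length $\ell - 1 \le k$ in $H$ and hence been rejected. A standard Moore bound argument then gives $|E(H)| \le O(n^{1 + 2/(k+1)})$, which is $O(n)$ for $k = \Theta(\log n)$.

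For the lower bound, I would use \emph{high-girth extremal graphs}. The core observation is that if $G$ has girth $g$, then deleting any single edge $e = (u,v)$ forces $\dist_{G - e}(u,v) \ge g - 1$, since any alternative $u$-$v$ path combined with $e$ would form a cycle of length less than $g$. Consequently, any $\cdot k$-spanner of $G$ with $k < g - 1$ must equal $G$ itself. It therefore suffices to exhibit, for every constant $c$, an $n$-node graph $G$ with at least $cn$ edges and girth $\Omega(\log n)$: any $\cdot o(\log n)$-spanner of such a $G$ must retain every edge of $G$, so it has at least $cn$ edges. Since $c$ is arbitrary, no $\cdot o(\log n)$-spanner has $O(n)$ edges in the worst case.

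The main obstacle is the existence statement in the lower bound: constructing $n$-vertex graphs of arbitrary linear edge density with girth $\Omega(\log n)$. I would appeal to the probabilistic deletion argument of Erd\H{o}s and Sachs -- sample $G \sim G(n, C/n)$, bound the expected number of cycles of length less than $c \log n$ by $o(n)$, and delete one edge from each short cycle while retaining $\Theta(Cn)$ edges overall -- or alternatively invoke explicit algebraic constructions such as incidence graphs of generalized polygons or Lubotzky-Phillips-Sarnak Ramanujan graphs. The delicate part is calibrating the density parameter $C$ against the target $c$ so that enough edges survive deletion while the girth remains $\Omega(\log n)$.
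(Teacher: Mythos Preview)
Your proposal is correct and follows exactly the classical argument of Alth{\"o}fer, Das, Dobkin, Joseph, and Soares~\cite{ADDJS93}: the greedy algorithm plus a Moore-bound girth argument for the upper bound, and high-girth extremal graphs for the lower bound. Note, however, that the paper does not actually give a proof of this theorem; it is stated purely as a cited background result, so there is no ``paper's own proof'' to compare against.
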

Thus, the question of compression by multiplicative spanners was closed.
However, for many problems, the paradigm of multiplicative error is considered unacceptable.
For example, $\cdot O(\log n)$ multiplicative blowup in travel time would be unacceptable for a cross-country trucking route.
This generated interest in the community in new error paradigms that perform better on the \emph{long} distances in the input graph.
Several new types of spanners were suggested and studied in the following years \cite{EP04, TZ06}.
The most optimistic was \emph{purely additive spanners}, where pairwise distances in the spanner increase only by an \emph{additive} error term:
\begin{definition} [Additive Spanners \cite{LS91}]
For a graph $G$, a subgraph $H \subseteq G$ over the same vertex set is a $+k$ additive spanner of $G$ if, for all nodes $s, t$, we have $\dist_H(s, t) \le \dist_G(s, t) + k$.
\end{definition}

Unfortunately, high-quality constructions of near-linear-size spanners with purely additive error remained elusive.
That is, the community faced the following question:

\begin{center}
    \emph{Do all graphs admit $+k$ additive spanners of near-linear size, with $k$ a constant, or at least a small enough function of $n$ to be practically efficient?}
\end{center}

This problem became a central focus of the community following a sequence of upper bound results.
First was the seminal 1995 work of Aingworth, Chekuri, Indyk, and Motwani \cite{ACIM99}, which proved that every $n$-node graph has a $+2$ additive spanner on $O(n^{3/2})$ edges.
This was followed by a theorem of Chechik \cite{Chechik13soda} that all graphs have $+4$ spanners on $\Oish(n^{7/5})$ edges (see also \cite{AlDhalaan21}), and from Baswana, Kavitha, Mehlhorn, and Pettie \cite{BKMP10} who proved that all graphs have $+6$ spanners on $O(n^{4/3})$ edges (see also \cite{Woodruff10, Knudsen14}).
Thus, it seemed that one might be able to continue this trend, paying more and more constant additive error in exchange for sparser and sparser spanners.
Unfortunately, a barrier to further progress was discovered by Abboud and Bodwin \cite{AB17jacm}, who constructed graphs for which any additive spanner on $O(n^{4/3 - c})$ edges suffers $+n^{\Omega(1)}$ additive error.
Thus, near-linear spanners with subpolynomial error are not generally possible.

That said, the lower bound of \cite{AB17jacm} is a small enough polynomial to be entirely practical even on enormous input graphs.
This work showed graphs for which any $O(n)$-size spanner pays at least $+n^{1/22}$ error, which is a small enough polynomial to be entirely practical even on huge input graphs.
However, the upper bounds for $O(n)$-size spanners were far from this ideal.
The first nontrivial constructions of $O(n)$-size spanners \cite{Pettie09} (following \cite{BCE05}) had $+n^{9/16}$ error.
Thus began a focused effort by the community to narrow these upper and lower bounds on additive error towards the middle, with the goal to determine whether practically-significant constructions of near-linear additive spanners can be generally achieved.

Despite a high throughput of recent work on the topic, the attainable error bounds for linear-size additive spanners remain wide open.
The initial lower bound of $+n^{1/22}$ additive error was improved to $+n^{1/11}$ in two concurrent papers \cite{HP18, Lu19}, and then recently to $+n^{1/10.5}$ by Lu, Vassilevska Williams, Wein, and Xu \cite{LVWX22} where it stands today.
Meanwhile, the upper bound on additive error was improved to $+\Oish(n^{1/2})$ \cite{BV15} and then to $+n^{3/7 + \eps}$ \cite{BV16} (see also \cite{Chechik13soda, Pettie09}).

\subsection{New Lower Bounds}

The main results of this paper are two new lower bound for additive spanners:
\begin{theorem} [First Main Result]
There are $n$-node graphs that do not admit a $+o(n^{1/7})$ additive spanner on $O(n)$ edges.
\end{theorem}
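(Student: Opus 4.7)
The plan is to prove this by improving the obstacle product framework of Abboud and Bodwin and its recent successors. Recall the high-level strategy: construct an inner graph $G_{\mathrm{in}}$ on a small number of vertices that has many \emph{critical} edges, meaning each of several demand pairs in $G_{\mathrm{in}}$ has a unique shortest path using a distinguished edge whose removal increases its distance by a constant. Then build a larger outer graph $G_{\mathrm{out}}$ whose edges are each replaced by a copy of $G_{\mathrm{in}}$ (gluing endpoints appropriately) to form the composite lower-bound instance $G$. The demand pairs of $G$ are pairs whose shortest paths traverse a long sequence of inner copies, one critical edge per copy. If an $O(n)$-edge spanner omits many such critical edges, then the additive error on some demand pair must grow proportionally.

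To push the exponent from $n^{1/10.5}$ up to $n^{1/7}$, I would design new inner and outer graphs that are jointly denser than what the previously used layered framework permits. For the inner graph, the goal is a construction in which a large fraction of edges are simultaneously critical; a natural candidate is a geometric gadget whose edges correspond to vectors in a convex or sum-free configuration, so that no one critical edge can be replaced by the concatenation of two others. For the outer graph, I would drop the standard requirement that $G_{\mathrm{out}}$ is layered, and instead use a graph whose demand pairs and shortest paths are specified by its own geometry rather than by a layer ordering. This is the source of the unlayered product promised by the title: dispensing with layering gives the freedom to choose outer graphs with a better ratio between the number of demand pairs and the lengths of the corresponding shortest paths, and that ratio is exactly what enters the final exponent.

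The main obstacle will be the distance analysis in the composite graph $G$. In previous layered compositions, any path between the endpoints of a demand pair is forced to traverse the inner copies in the prescribed order, because any deviation would have to cross a layer boundary in the forbidden direction; in an unlayered product this convenient bookkeeping disappears. I would therefore need a new argument certifying that any \emph{shortcut} path in $G$ that avoids critical edges of several inner copies is strictly longer than the intended path, so that missing critical edges really do translate into additive error. I expect this to require a structural condition on $G_{\mathrm{out}}$ (in the spirit of high girth, or a convex-position constraint on its shortest paths) combined with a careful charging scheme that routes each detour back through the skipped inner gadgets with a quantifiable overhead. Once shortcut-freeness is established, the remaining work is a parameter optimization: balancing the size of $G_{\mathrm{in}}$, the density of critical edges, the number of demand pairs per inner copy, and the length of demand paths in $G_{\mathrm{out}}$ so that every $O(n)$-edge spanner necessarily incurs $+\Omega(n^{1/7})$ additive error on some demand pair.
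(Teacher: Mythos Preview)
Your high-level framing is in the right neighborhood, but the proposal is missing the concrete mechanisms that actually make the $n^{1/7}$ bound go through, and it also misdescribes the obstacle product in a way that would derail the construction.

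First, the obstacle product does not replace \emph{edges} of $G_{\mathrm{out}}$ by inner-graph copies; it replaces \emph{nodes} of $G_{\mathrm{out}}$ by copies of $G_{\mathrm{in}}$, and the outer edges become subdivided paths connecting ports of adjacent inner copies. This distinction matters: the relevant parameter is the number of outer-canonical paths through a typical outer node (which must match the number of inner critical pairs), and your edge-replacement picture loses this bookkeeping entirely. Relatedly, in the paper both $G_{\mathrm{in}}$ and $G_{\mathrm{out}}$ are unlayered Coppersmith--Elkin style distance-preserver lower bounds; unlayering only the outer graph, as you suggest, does not give the improvement.

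Second, and more importantly, your proposal identifies the right open problem (``how does one certify that detours are long without a layer structure?'') but does not supply the idea that solves it. ``High girth or convex-position plus a charging scheme'' is not enough: once the inner graph is unlayered, a forward move through an inner copy can genuinely be \emph{shorter} than the canonical inner path, so naive per-move accounting fails. The paper's fix has two tightly coupled ingredients you do not have. (i) The bijection between outer canonical vectors and inner critical pairs is \emph{not arbitrary}: the outer convex set is partitioned into well-separated angular ``stripes,'' and vectors in the same stripe are mapped to inner critical pairs sharing the same inner canonical vector. This alignment is what forces detours that stay in the canonical stripe to obey a controllable update rule on the inner-port coordinates. (ii) The per-move deficit is then handled by an explicit potential function $\Phi$ on inner input ports, yielding an amortized move-length difference that is provably nonnegative for the problematic same-stripe forward moves. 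Without both the stripe-aligned bijection and the potential-function amortization, the distance analysis does not close, and no amount of parameter balancing will recover $n^{1/7}$.
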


Our construction follows the \emph{obstacle product framework} used to prove lower bounds in prior work, but with a key generalization.
Previously, the obstacle product carefully composes a \emph{layered outer graph} with a set of \emph{layered inner graphs} in a way that causes the shortest paths in the composed graphs to inherit desirable structural properties from each.
We provide a stronger framework for analysis that allows one to compose \emph{unlayered} inner and outer graphs.

It turns out that this unlayering technique also addresses a related open problem in the area.
One can relax additive spanners to \emph{pairwise additive spanners}, where we only need to approximate distances among a set of demand pairs $P$ taken on input:
\begin{definition} [Pairwise Additive Spanners \cite{CE06, cygan2013pairwise}]
For a graph $G = (V, E)$ and a set of demand pairs $P \subseteq V \times V$, a subgraph $H \subseteq G$ is a $+k$ additive spanner of $G$ with respect to $P$ if, for all $(s, t) \in P$, we have $\dist_H(s, t) \le \dist_G(s, t) + k$.
\end{definition}
We can then hope for better error bounds in the setting where the number of demand pairs $|P|$ is not too large.
Pairwise spanners were introduced by Coppersmith and Elkin \cite{CE06}, who specifically focused on the \emph{exact} $k=0$ case (also called \emph{distance preservers}); the \emph{approximate} $k > 0$ case has been studied repeatedly in followup work \cite{abboud2016error, kavitha2017new, kavitha2013small, doi:10.1137/140953927, Knudsen14, cygan2013pairwise}.
The initial paper by Coppersmith and Elkin \cite{CE06} established the following fundamental result:
\begin{theorem} [\cite{CE06}]~
\begin{itemize}
\item (Upper Bound) For any $n$-node graph $G = (V, E)$ and set $P \subseteq V \times V$ of $|P| = O(\sqrt{n})$ demand pairs, there is a distance preserver ($+0$ pairwise spanner) on $O(n)$ edges.

\item (Lower Bound) For any $p = \omega(\sqrt{n})$, there are $n$-node graphs $G = (V, E)$ and sets of $|P|=p$ demand pairs that do not admit a distance preserver on $O(n)$ edges.
\end{itemize}
\end{theorem}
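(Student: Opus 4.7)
My plan is to pick, for each demand pair $(s,t) \in P$, a canonical shortest path $\pi_{s,t}$ using a consistent tie-breaking rule (e.g., lexicographic). The centerpiece is the \emph{branching lemma}: if two such canonical paths $\pi_i$ and $\pi_j$ share two distinct vertices $u$ and $v$, then they agree on the entire $u$-$v$ subpath. This is proved by a surgical swap argument---any disagreement on the $u$-$v$ portion would let us splice the two paths together to produce a different shortest path for one of the pairs that is lexicographically smaller, contradicting canonicity. Consequently, any two paths diverge at most twice, once on each ``side'' of their shared contiguous intersection. To bound $|E(H)|$ for $H = \bigcup_{(s,t) \in P} \pi_{s,t}$, I would charge excess degree at \emph{branching vertices} (vertices where two paths through them exit via different edges). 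Each pair of paths contributes $O(1)$ branching events by the branching lemma, so $\sum_v (\deg_H(v) - 2)_+ = O(p^2)$; combined with the baseline contribution of $2n$ from the degree bound, this gives $|E(H)| \le n + O(p^2)$, which is $O(n)$ whenever $p = O(\sqrt{n})$.

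\textbf{Lower bound.} For the lower bound, the plan is to exhibit, for each $p = \omega(\sqrt{n})$, a graph $G$ on $n$ vertices together with $p$ demand pairs whose union of canonical shortest paths has $\omega(n)$ edges, forcing any preserver to exceed $O(n)$. I would construct $G$ as a sublattice of $\zz^2$: place vertices on a $\sqrt{n} \times \sqrt{n}$ grid with nearest-neighbor edges, then choose $p$ demand pairs so that each one corresponds to a distinct \emph{primitive} lattice direction and has a unique shortest path of length $\Theta(\sqrt{n})$ going in that direction. For distinct primitive directions, the associated shortest paths use edges that are essentially edge-disjoint in the interior of the grid, so the total number of edges among all chosen paths is $\Omega(p\sqrt{n})$, which exceeds $cn$ for any constant $c$ when $p = \omega(\sqrt{n})$. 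To ensure uniqueness of shortest paths, I would either use generic edge weights or appeal to a standard unweighted convex-position construction from the distance-preserver literature.

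\textbf{Main obstacle.} I do not expect significant difficulty with the upper bound; the branching lemma is a classical tool and the excess-degree charging is routine. The main obstacle lies in the lower bound construction: establishing that shortest paths for distinct primitive directions are genuinely edge-disjoint enough that the total edge count is $\omega(n)$, not merely $\Omega(n)$. This requires a careful choice of direction set---for instance, using primitive vectors realizable as consecutive edges of a convex lattice polygon, of which one can pack $\Theta(n^{2/3})$ into a $\sqrt{n} \times \sqrt{n}$ box by Andrews's theorem, comfortably exceeding the $\sqrt{n}$ threshold. A secondary subtlety is handling the regime where $p$ is close to $n$, where the shortest paths begin to overlap substantially; here a slightly different construction, iterating the basic gadget, should give the required bound.
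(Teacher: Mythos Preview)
This theorem is not proved in the paper; it is quoted from \cite{CE06} as background. So there is no ``paper's own proof'' to compare against, but the paper does present (Section~3.1, the base graph $G_B$) a generalized form of the Coppersmith--Elkin lower bound construction, which lets me evaluate your lower bound sketch against the actual construction.

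Your upper bound plan is correct and is exactly the Coppersmith--Elkin argument: consistent tie-breaking gives the branching property, and the $O(p^2)$ branching-event count yields $n + O(p^2)$ edges.

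Your lower bound sketch has a real gap in the first version you describe. A $\sqrt{n}\times\sqrt{n}$ grid with \emph{nearest-neighbor} edges does not work: in that graph a pair at primitive direction $(a,b)$ has many shortest paths (any monotone staircase), so there is no unique shortest path to force into the preserver, and indeed the whole graph already has only $O(n)$ edges. The actual construction---which the paper reproduces as its base graph $G_B$---does not use grid edges at all. The vertex set is a grid, but the \emph{edges} are the vectors of a strongly convex set $W\subset\zz^2$ of size $\Theta(r^{2/3})$ contained in a ball of radius $r$; each demand pair $(s,t)$ has $t-s$ a multiple of some $\vec v\in W$, and strong convexity is exactly what forces the straight-line path $s,s+\vec v,\dots,t$ to be the \emph{unique} shortest path (Lemma~\ref{lem:basegraph}, item 5). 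Edge-disjointness then follows because an edge determines its vector and hence its canonical path. Your fallback sentence (``appeal to a standard unweighted convex-position construction'') is pointing at the right object, but you should commit to it from the start rather than leading with the nearest-neighbor grid, which is a dead end. The ``secondary subtlety'' you flag about large $p$ is not an issue once you use the convex-set construction: you simply scale $r$ with $p$.
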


Thus, with a budget of $O(n)$ edges, we can exactly preserve distances among $O(\sqrt{n})$ demand pairs and no more.
A question asked repeatedly in followup work \cite{CE06, cygan2013pairwise, abboud2016error} is whether this $\sqrt{n}$ threshold can be improved if we allow constant $+k$ error; this question was explicitly studied in \cite{cygan2013pairwise, abboud2016error}, without resolution.
We settle this question negatively:
\begin{theorem} [Second Main Result] \label{thm:intropairwiselb}
For any constant $k > 0$ and $p = \omega(\sqrt{n})$, there are $n$-node graphs $G = (V, E)$ and sets of $|P|=p$ demand pairs that do not admit a $+k$ pairwise spanner on $O(n)$ edges.
\end{theorem}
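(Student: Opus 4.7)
The plan is to reduce the $+k$ statement to the exact ($+0$) lower bound of Coppersmith and Elkin via an edge-subdivision amplification, which can be seen as the simplest instance of the paper's unlayered obstacle product (the ``inner graph'' being merely a path of length $k+1$). Concretely, I begin with the CE lower bound instance: an $n_0$-vertex graph $G_0$ with $|E(G_0)| = O(n_0)$ edges and a set $P$ of $p = \omega(\sqrt{n_0})$ demand pairs such that every distance preserver of $(G_0, P)$ has $\omega(n_0)$ edges. I form $G$ by replacing every edge of $G_0$ with an internally disjoint path of $k+1$ edges. Since $k$ is constant and $|E(G_0)| = O(n_0)$, the new graph has $n := |V(G)| = n_0 + k|E(G_0)| = \Theta(n_0)$ vertices, and $P \subseteq V(G_0) \subseteq V(G)$ still satisfies $|P| = \omega(\sqrt{n})$.

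The central observation is that every subdivision vertex has degree $2$ in $G$, so any simple path in $G$ must traverse each subdivided chain atomically. Simple $s$-$t$ paths in $G$ are therefore in bijection with simple $s$-$t$ paths in $G_0$, with length scaled by exactly $k+1$. In particular $\dist_G(s,t) = (k+1)\,\dist_{G_0}(s,t)$, and every $s$-$t$ path length in $G$ lies in $(k+1)\zz$. Given any $+k$ pairwise spanner $H \subseteq G$, for each $(s,t)\in P$ the value $\dist_H(s,t)$ is a multiple of $k+1$ lying in the interval $[(k+1)\dist_{G_0}(s,t),\, (k+1)\dist_{G_0}(s,t)+k]$, and the only such multiple is $(k+1)\dist_{G_0}(s,t)$ itself. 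Define $H_0 \subseteq G_0$ to consist of those $G_0$-edges whose entire $(k+1)$-edge chain is present in $H$. Then $H_0$ is an exact distance preserver of $(G_0,P)$, so the Coppersmith--Elkin lower bound gives $|E(H_0)| = \omega(n_0)$. Since each edge of $H_0$ contributes $k+1$ edges to $H$, we conclude $|E(H)| \geq (k+1)|E(H_0)| = \omega(n_0) = \omega(n)$, as desired.

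The main technical point to verify is the sparsity premise $|E(G_0)| = O(n_0)$: this is required so that subdivision does not super-linearly inflate the vertex count (which would wipe out the $\omega(n)$ conclusion). The standard CE lower-bound construction is a sparse grid/convex-set arrangement that does satisfy this, so the reduction goes through cleanly. A minor remaining subtlety is that $H$ may include edges that sit on partially present (broken) chains; these edges support no $s$-$t$ path in $G$ (the broken chain has a degree-$1$ interior endpoint) and so contribute nothing to the spanner guarantee, only inflating $|E(H)|$ in $H$'s favor. The hardest step conceptually is simply recognizing that subdivision by $k+1$ converts the $+k$ approximate preservation requirement into an exact one via the integrality of path lengths modulo $k+1$; once this is in hand, the rest of the reduction is essentially mechanical.
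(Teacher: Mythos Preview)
Your reduction contains a fatal internal contradiction. You assume a Coppersmith--Elkin instance $G_0$ on $n_0$ vertices with $|E(G_0)|=O(n_0)$ edges such that \emph{every} distance preserver of $(G_0,P)$ has $\omega(n_0)$ edges. But $G_0$ is itself a distance preserver of $(G_0,P)$, so this forces $|E(G_0)|=\omega(n_0)$, directly contradicting your sparsity premise. The CE lower-bound graphs are \emph{not} sparse; indeed the whole content of the lower bound is that they are dense and that none of their edges can be dropped.

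Even if you drop the sparsity premise and carry the computation through honestly, the argument still fails. With $m_0:=|E(G_0)|$, the subdivided graph $G$ has $n=n_0+k m_0$ vertices and $(k+1)m_0$ edges, so
\[
\frac{|E(G)|}{|V(G)|}=\frac{(k+1)m_0}{n_0+km_0}\le \frac{k+1}{k}=1+\frac{1}{k}.
\]
Thus $G$ itself has at most $(1+1/k)\,n=O(n)$ edges and is trivially a $+0$ pairwise spanner on $O(n)$ edges. Your construction therefore \emph{does} admit a $+k$ spanner on $O(n)$ edges, and no lower bound follows. This is precisely the density loss the paper warns about in Section~\ref{sec:techoverview}: edge subdivision amplifies error but destroys density, and the inner-graph replacement step of the obstacle product exists exactly to undo that loss. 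A bare path as ``inner graph'' cannot restore density, which is why the paper needs a genuine inner graph with many edge-disjoint canonical paths and the accompanying unlayered analysis.
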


On a technical level, this stronger lower bound is again proved using an unlayered obstacle product.
In our view, our new lower bound further cements the importance of the $\sqrt{n}$ demand pair/$O(n)$ size threshold by Coppersmith and Elkin: it is even robust to \emph{any constant additive error}.

\section{Technical Overview of Main Result and Comparison to Prior Work \label{sec:techoverview}}

At a technical level, our main result departs from prior work by altering a fundamentally different piece of the construction than the one typically considered previously.
We overview the construction and our improvement in the next section.
Here we overview the parts of our construction that match prior work, and we overview the new technical ingredients that give our improved lower bounds.

\subsection{The Obstacle Product Framework}

Like every other lower bound in the area, we follow the \emph{obstacle product} framework.
This framework involves composing an \emph{outer graph} and many copies of an \emph{inner graph}.

The essential property of the outer graph $G_O$ is that it has a set of \emph{critical pairs} $P_O \subseteq V(G_O) \times V(G_O)$, such that:
\begin{itemize}
    \item for each critical pair $(s, t)$ there is a unique shortest $s \leadsto t$ path $\pi(s, t)$ in $G_O$, called the \emph{canonical path}, and 
    \item the canonical paths are pairwise edge-disjoint.
\end{itemize}

If we remove an edge from a canonical path $\pi(s, t)$, then since it is a unique shortest path, $\dist(s, t)$ will increase by at least $+1$.
To amplify this error, we apply an \emph{edge-extension step} in which we add $k=n^{\Omega(1)}$ new nodes along every edge.
Thus, removing an edge from an (edge-extended) canonical path $\pi(s, t)$ will cause $\dist(s, t)$ to increase by at least $+k$.

\begin{figure} [H]
\begin{center}
    \begin{tikzpicture}
    \draw [fill=black] (0, 0) circle [radius=0.15];
    \draw [fill=black] (6, 0) circle [radius=0.15];
    \draw [ultra thick] (0, 0) -- (6, 0);
    
    \node at (0, -0.5) {$s$};
    \node at (6, -0.5) {$t$};
    \node at (3, -0.5) {$\pi(s, t)$};
    
    \draw [fill=black] (2, 0) circle [radius=0.1];
    \draw [fill=black] (2.5, 0) circle [radius=0.1];
    \node [red] at (2.25, 0) {\Huge $\times$};
    
    \draw [fill=black] (-1, 1) circle [radius=0.15];
    \draw [ultra thick, gray] (0, 0) to (-1, 1) to[bend left=10] (6, 0);
    \node [gray, rotate=-5] at (3, 1.2) {$\dist(s, t) \ge |\pi(s, t)| + 1$};
    
    \draw [ultra thick, ->] (6.5, 0.5) -- (7.5, 0.5);
    
\begin{scope}[shift={(9,0)}]

\draw [fill=black] (0, 0) circle [radius=0.15];
    \draw [fill=black] (6, 0) circle [radius=0.15];
    \draw [ultra thick] (0, 0) -- (6, 0);
    
    \node at (0, -0.5) {$s$};
    \node at (6, -0.5) {$t$};
    \node at (3, -0.5) {$\pi(s, t)$};
    
    \draw [fill=black] (2, 0) circle [radius=0.1];
    \draw [fill=black] (2.5, 0) circle [radius=0.1];
    \node [red] at (2.25, 0) {\Huge $\times$};
    
    \draw [fill=black] (-1, 1) circle [radius=0.15];
    \draw [ultra thick, gray] (0, 0) to (-1, 1) to[bend left=10] (6, 0);
    \node [gray, rotate=-5] at (3, 1.2) {$\dist(s, t) \ge |\pi(s, t)| + k$};
    
    \draw [fill=gray] (-0.25, 0.25) circle [radius=0.1];
    \draw [fill=gray] (-0.5, 0.5) circle [radius=0.1];
    \draw [fill=gray] (-0.75, 0.75) circle [radius=0.1];
    
    \draw [fill=gray] (0.4, 0) circle [radius=0.1];
    \draw [fill=gray] (0.8, 0) circle [radius=0.1];
    \draw [fill=gray] (1.2, 0) circle [radius=0.1];
    \node [gray] at (1.6, 0.2) {$\cdots$};
    
    \draw [fill=gray] (-0.6, 1) circle [radius=0.1];
    \draw [fill=gray] (-0.2, 1) circle [radius=0.1];
    \draw [fill=gray] (0.2, 1) circle [radius=0.1];
    \node [gray] at (0.6, 1.2) {$\cdots$};

\end{scope}
    \end{tikzpicture}
\end{center}
    
    \caption{Deleting an edge from a canonical path stretches its distance by at least $+1$.  After the \emph{edge extension step}, deleting an edge stretches distance by at least $+k$.}
\end{figure}
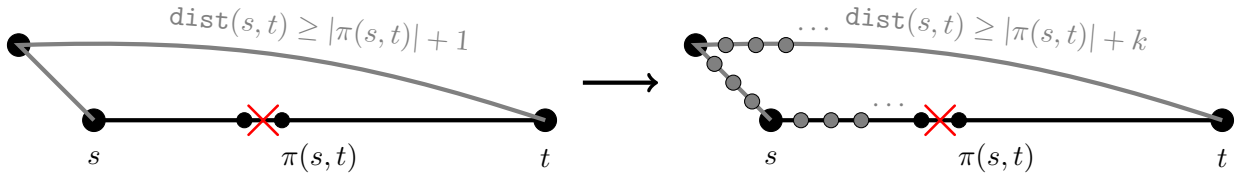

Although edge-extension forces any nontrivial spanner of the outer graph to suffer $+k$ distance error, the problem is that the extended outer graph is now very sparse: the vast majority of the nodes are new edge-extension nodes of degree $2$, while only a small handful of nodes are from the original outer graph and may have higher degree.
Thus the trivial spanner, that keeps the entire outer graph, has $O(n)$ size (where $n$ is the number of nodes after edge extension) and can be used.
The purpose of the next \emph{inner graph replacement step} in the obstacle product is to regain this lost density, so that an $O(n)$-size spanner actually has to remove a significant number of edges from the graph.

The inner graph $G_I$ is equipped with a set of critical pairs $P_I$ with unique edge-disjoint shortest paths, just like the outer graph.
We will call these \emph{inner-canonical paths}, and the paths in the outer graph \emph{outer-canonical paths} to make clear the distinction.
Let $v$ be a node in the outer graph that is contained in exactly $d$ outer-canonical paths, and suppose the inner graph $G_I$ has exactly $|P_I|=d$ critical pairs (the case where these quantities only \emph{approximately} match, instead of both being exactly $d$, can be handled easily).
We then replace the node $v$ with a copy of the entire graph $G_I$.
To preserve outer-canonical paths, we associate each outer-canonical path $\pi(s, t)$ that intersects $v$ to some inner-canonical path $\pi(s_I, t_I)$ in $G_I$.
We arrange the composed graph in such a way that the unique shortest $s \leadsto t$ path in the composed graph is exactly the original outer-canonical path $\pi(s, t)$, with each node $v$ replaced by the corresponding inner-canonical path $\pi(s_I, t_I)$ in the copy of $G_I$ that replaced $v$.
We call this unique shortest $s \leadsto t$ path the \emph{composed-canonical path}.

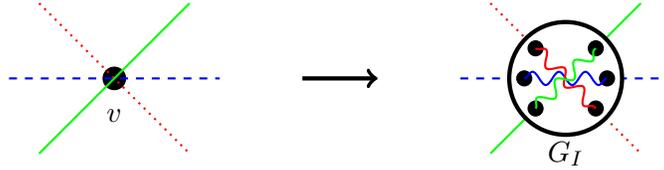
\begin{figure} [H]
\begin{center}
\begin{tikzpicture}
    \draw [fill=black] (0, 0) circle [radius=0.15];
    \draw [thick, dotted, red] (-1, 1) -- (1, -1);
    \draw [thick, dashed, blue] (-1.4, 0) -- (1.4, 0);
    \draw [thick, green] (-1, -1) -- (1, 1);
    \node at (0, -0.5) {$v$};
    
    \draw [ultra thick, ->] (2.5, 0) -- (3.5, 0);
    
\begin{scope}[shift={(6,0)}]
    
    \draw [ultra thick, black, fill=white] (0, 0) circle [radius=0.5];
    \draw [thick, dotted, red] (-1, 1) -- (1, -1);
    \draw [thick, dashed, blue] (-1.4, 0) -- (1.4, 0);
    \draw [thick, green] (-1, -1) -- (1, 1);
    \draw [ultra thick, black, fill=white] (0, 0) circle [radius=0.75];
    \node at (0, -1) {$G_I$};
    
    \draw [fill=black] (-0.55, 0) circle [radius=0.1];
    \draw [fill=black] (0.55, 0) circle [radius=0.1];
    
    \draw [fill=black] (-0.4, 0.4) circle [radius=0.1];
    \draw [fill=black] (0.4, 0.4) circle [radius=0.1];
    
    \draw [fill=black] (-0.4, -0.4) circle [radius=0.1];
    \draw [fill=black] (0.4, -0.4) circle [radius=0.1];
    
    \draw [thick, blue, snake it] (-0.55, 0) -- (0.55, 0);
    \draw [thick, red, snake it] (-0.4, 0.4) -- (0.4, -0.4);
    \draw [thick, green, snake it] (-0.4, -0.4) -- (0.4, 0.4);

\end{scope}
\end{tikzpicture}
\end{center}
    
    \caption{To regain lost density from edge extension, the \emph{inner graph replacement step} replaces nodes in the outer graph with copies of the inner graph, carefully attaching canonical paths in the outer graph to canonical paths in the inner graph.}
\end{figure}

Most of the edges in the composed graph lie in inner graphs.
This implies that, in any $O(n)$-size spanner $H$ of the composed graph $G_O \otimes G_I$, there will be a composed-canonical path $\pi(s, t)$ where the spanner is missing most of the edges used by $\pi(s, t)$ in inner graphs.
We use two cases to argue that $\dist_H(s, t)$ must be much longer than $\dist_G(s, t)$.
In the first case, suppose that the shortest $s \leadsto t$ path in $H$ uses the same sequence of inner graphs as the composed-canonical path.
It then suffers at least $+1$ error in \emph{most} of its inner graphs due to missing edges; it intersects $k = n^{\Omega(1)}$ total inner graphs, for a total cost of $+\Theta(k)$ additive error.
The other case is when the shortest $s \leadsto t$ path in the spanner avoids these gutted inner graphs by instead following a path that corresponds to a non-canonical $s \leadsto t$ path in the outer graph.
This other kind of path also suffers $+\Theta(k)$ error over the composed-canonical path, essentially due to the edge-extension step.

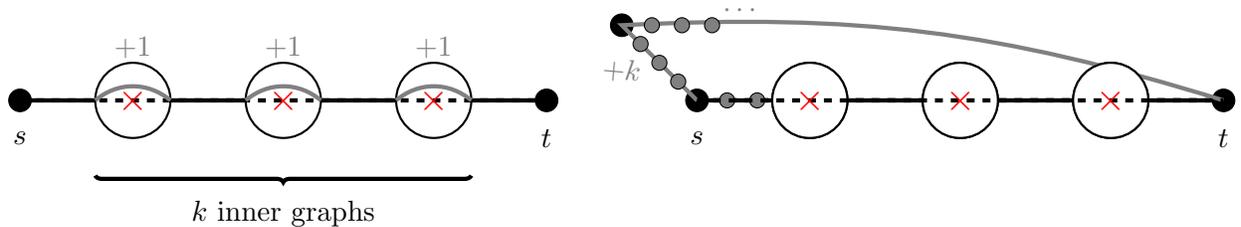
\begin{figure} [h]
    \begin{center}
    \begin{tikzpicture}
    
    \draw [fill=black] (0, 0) circle [radius=0.15];
    \draw [fill=black] (7, 0) circle [radius=0.15];
    \draw [ultra thick] (0, 0) -- (7, 0);
    
    \node at (0, -0.5) {$s$};
    \node at (7, -0.5) {$t$};
    
    \draw [black, thick, fill=white] (1.5, 0) circle [radius=0.5];
    \draw [black, thick, fill=white] (3.5, 0) circle [radius=0.5];
    \draw [black, thick, fill=white] (5.5, 0) circle [radius=0.5];

    \draw [ultra thick, dashed] (0, 0) -- (7, 0);
    \node [red] at (1.5, 0) {\Large $\times$};
    \node [red] at (3.5, 0) {\Large $\times$};
    \node [red] at (5.5, 0) {\Large $\times$};
    
    \draw [ultra thick, gray] (1, 0) to[bend left=40] (2, 0);
    \draw [ultra thick, gray] (3, 0) to[bend left=40] (4, 0);
    \draw [ultra thick, gray] (5, 0) to[bend left=40] (6, 0);
    \node [gray] at (1.5, 0.7) {$+1$};
    \node [gray] at (3.5, 0.7) {$+1$};
    \node [gray] at (5.5, 0.7) {$+1$};

    \draw [thick, decorate, decoration={brace}, ultra thick] (6, -1) -- (1, -1);
    \node at (3.5, -1.5) {$k$ inner graphs};

\begin{scope}[shift={(9,0)}]

    \draw [fill=black] (0, 0) circle [radius=0.15];
    \draw [fill=black] (7, 0) circle [radius=0.15];
    \draw [ultra thick] (0, 0) -- (7, 0);
    
    \node at (0, -0.5) {$s$};
    \node at (7, -0.5) {$t$};
    
    \draw [black, thick, fill=white] (1.5, 0) circle [radius=0.5];
    \draw [black, thick, fill=white] (3.5, 0) circle [radius=0.5];
    \draw [black, thick, fill=white] (5.5, 0) circle [radius=0.5];

    \draw [ultra thick, dashed] (0, 0) -- (7, 0);
    \node [red] at (1.5, 0) {\Huge $\times$};
    \node [red] at (3.5, 0) {\Huge $\times$};
    \node [red] at (5.5, 0) {\Huge $\times$};
    
    \draw [fill=black] (-1, 1) circle [radius=0.15];
    \draw [ultra thick, gray] (0, 0) to (-1, 1) to[bend left=10] (7, 0);
    \node [gray, rotate=-5] at (-1, 0.4) {$+ k$};
    
    \draw [fill=gray] (-0.25, 0.25) circle [radius=0.1];
    \draw [fill=gray] (-0.5, 0.5) circle [radius=0.1];
    \draw [fill=gray] (-0.75, 0.75) circle [radius=0.1];
    
    \draw [fill=gray] (0.4, 0) circle [radius=0.1];
    \draw [fill=gray] (0.8, 0) circle [radius=0.1];
    \draw [fill=gray] (1.2, 0) circle [radius=0.1];
    \node [gray] at (1.6, 0.2) {$\cdots$};
    
    \draw [fill=gray] (-0.6, 1) circle [radius=0.1];
    \draw [fill=gray] (-0.2, 1) circle [radius=0.1];
    \draw [fill=gray] (0.2, 1) circle [radius=0.1];
    \node [gray] at (0.6, 1.2) {$\cdots$};
    
    \draw [black, thick, fill=white] (1.5, 0) circle [radius=0.5];
    \draw [black, thick, fill=white] (3.5, 0) circle [radius=0.5];
    \draw [black, thick, fill=white] (5.5, 0) circle [radius=0.5];

    \draw [ultra thick, dashed] (0, 0) -- (7, 0);
    \node [red] at (1.5, 0) {\Large $\times$};
    \node [red] at (3.5, 0) {\Large $\times$};
    \node [red] at (5.5, 0) {\Large $\times$};

\end{scope}

    \end{tikzpicture}
    \end{center}
    \caption{In any $O(n)$-size spanner $H$, there is a canonical path $\pi(s, t)$ that is missing most of its edges in inner graphs.  The additive error of $s \leadsto t$ paths in the spanner are analyzed in two types: those that suffer $+1$ error in each inner graph (left), and those that avoid the problematic inner graphs entirely by following a non-outer-canonical path (right).}
\end{figure}

\subsection{Improvements in Prior Work: Changes to the Alternation Product}

Essentially every major improvement in the lower bound has thusfar been achieved by an improvement to the \emph{alternation product}.
Although the alternation product is not used at all in the technical part of this paper, it is worth overviewing, to highlight the core difference between the new improvements in this paper and those obtained in prior work.

The motivating observation behind the alternation product is that, for correctness of the lower bound, one needs that each edge in an inner graph is only used by one composed-canonical path.
To enforce this, it is actually \emph{overkill} to require edge-disjoint outer-canonical paths.
Rather, we can allow the pairwise intersections of outer-canonical paths to contain at most one edge.

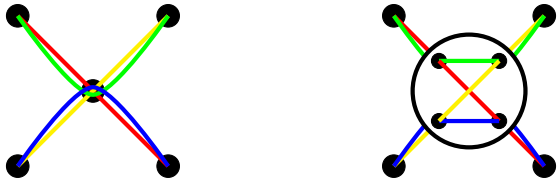
\begin{figure} [H]
\begin{center}
\begin{tikzpicture}
    \draw [fill=black] (0, 0) circle [radius=0.15];
    \draw [fill=black] (-1, 1) circle [radius=0.15];
    \draw [fill=black] (-1, -1) circle [radius=0.15];
    \draw [fill=black] (1, 1) circle [radius=0.15];
    \draw [fill=black] (1, -1) circle [radius=0.15];
    
    \draw [ultra thick, red] (-1, 1) -- (1, -1);
    \draw [ultra thick, yellow] (-1, -1) -- (1, 1);
    \draw [ultra thick, green] plot[smooth] coordinates {(-1, 1) (0, -0.05) (1, 1)};
    \draw [ultra thick, blue] plot[smooth] coordinates {(-1, -1) (0, 0.05) (1, -1)};
    
\begin{scope} [shift={(5, 0)}]
    \draw [fill=black] (-1, 1) circle [radius=0.15];
    \draw [fill=black] (-1, -1) circle [radius=0.15];
    \draw [fill=black] (1, 1) circle [radius=0.15];
    \draw [fill=black] (1, -1) circle [radius=0.15];
    
    \draw [ultra thick, red] (-1, 1) -- (1, -1);
    \draw [ultra thick, yellow] (-1, -1) -- (1, 1);
    \draw [ultra thick, green] plot[smooth] coordinates {(-1, 1) (0, -0.05) (1, 1)};
    \draw [ultra thick, blue] plot[smooth] coordinates {(-1, -1) (0, 0.05) (1, -1)};
    
    \draw [ultra thick, black, fill=white] (0, 0) circle [radius=0.75];
    
    \draw [fill=black] (-0.4, -0.4) circle [radius=0.1];
    \draw [fill=black] (-0.4, 0.4) circle [radius=0.1];
    \draw [fill=black] (0.4, -0.4) circle [radius=0.1];
    \draw [fill=black] (0.4, 0.4) circle [radius=0.1];
    
    \draw [ultra thick, green] (-0.4, 0.4) -- (0.4, 0.4);
    \draw [ultra thick, red] (-0.4, 0.4) -- (0.4, -0.4);
    \draw [ultra thick, blue] (-0.4, -0.4) -- (0.4, -0.4);
    \draw [ultra thick, yellow] (-0.4, -0.4) -- (0.4, 0.4);
\end{scope}

\end{tikzpicture}
    \end{center}
    \caption{We can allow the outer-canonical paths to contain an edge in their pairwise intersections (left), and then after the inner graph replacement step, the composed-canonical paths will become edge-disjoint on inner graphs (right).}
\end{figure}

Thus, there is some additional freedom in outer graph design.
The alternation product tries to leverage this freedom for a stronger lower bound.
It changes the edge-disjoint outer-canonical paths into $2$-path-disjoint outer-canonical paths, in exchange for different relative counts of nodes, critical pairs, and canonical path lengths in the outer graph.
These parameter changes are favorable when the goal is to build a lower bound \emph{against denser spanners}; for example, the alternation product is necessary to establish the existence of graphs that need $+n^{\Omega(1)}$ error for any spanner on $O(n^{4/3 - \eps})$ edges.
But, in a naive implementation of the alternation product, these parameter changes are \emph{unfavorable} when the goal is lower bounds against $O(n)$-size spanners.

The improved lower bounds of $\Omega(n^{1/11})$ from \cite{HP18, Lu19} are mostly achieved by removing the alternation product from \cite{AB17jacm} entirely.
The subsequent $\Omega(n^{1/10.5})$ lower bounds of Lu et al \cite{LVWX22} reintroduce the alternation product; their main technical innovation is a clever new implementation of the alternation product that takes advantage of the geometric structure of the outer graph to obtain more favorable parameter changes, which make it beneficial even in the setting of sparse spanners.

\subsection{Improvements in Our Work: New Inner/Outer Graphs}

The current paper obtains achievements in a different way from prior work.
We omit the alternation product entirely; in that sense, our construction forks \cite{HP18, Lu19} rather than \cite{LVWX22} (we briefly explain why in the following section).
Rather, we enable a \emph{significant technical change in the design of inner/outer graphs}, which we explain next.

Roughly, the goal of the inner/outer graphs is to pack in as many critical pairs as possible, with as long canonical paths as possible.
The main technical innovation in the $\Omega(n^{1/22})$ lower bound of \cite{AB17jacm} was to replace the ``butterfly'' outer graph implicitly used by Woodruff \cite{Woodruff06} with a ``distance preserver lower bound graph,'' along the lines of a construction by Coppersmith and Elkin \cite{CE06} (see also \cite{Alon02, Hesse03} for prior graph constructions based on a similar technique).
Ideally, one would like to use the Coppersmith-Elkin distance preserver lower bound graphs \emph{exactly} for inner/outer graphs.
But there's a catch.
When we execute the inner graph replacement step, we need to make sure that the composed-canonical paths are unique shortest paths in the composed graph.
This property is not immediate, and in fact it does not hold for arbitrary choices of inner/outer graphs.
Rather, it holds if the inner and outer graphs are both \emph{layered}.
But layeredness is not free; introducing layeredness to the Coppersmith-Elkin construction significantly harms the inner/outer graph quality, leading to worse lower bounds.
All previous lower bound constructions \cite{Woodruff06, AB17jacm, Lu19, HP18, LVWX22} pay this penalty in order to layer their graphs.
\begin{center}
    \begin{tabular}{|lllll|}
        \hline
        Lower Bound & Outer Graph & Inner Graph & Alternation Product? & Citation \\
        \hline
        $\Omega(\log n)$ & Butterfly & Biclique & No & \cite{Woodruff06}\\
        $\Omega(n^{1/22})$ & Layered Dist Pres LB & Biclique & Yes & \cite{AB17jacm}\\
        $\Omega(n^{1/11})$ & Layered Dist Pres LB & Layered Dist Pres LB & No & \cite{HP18, Lu19}\\
        $\Omega(n^{1/10.5})$ & Layered Dist Pres LB & Biclique & Improved & \cite{LVWX22}\\
        $\Omega(n^{1/7})$ & Dist Pres LB & Dist Pres LB & No & this paper\\
        \hline
    \end{tabular}
\end{center}






 The technical contribution of our paper can be summarized as follows:
\begin{lemma} [Main Technical Lemma, Informal]
There are certain amended versions of the Coppersmith-Elkin graphs in \cite{CE06} whose structure allows them to be used as inner/outer graphs in the obstacle product, despite being unlayered.
\end{lemma}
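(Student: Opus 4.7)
The plan is to start from the Coppersmith-Elkin distance-preserver construction \cite{CE06}, which places nodes at integer lattice points inside a small box in $\mathbb{Z}^d$ and, for a convex direction set $V \subseteq \mathbb{Z}^d$, includes the edge $\{x, x+v\}$ for each lattice point $x$ and each $v \in V$. Each critical pair $(s, s+Lv)$ has a canonical path consisting of $L$ straight hops in direction $v$, and convexity of $V$ already makes the canonical path the unique shortest path between its endpoints inside a single copy of the graph. The challenge is that the CE graph is not layered, so the standard obstacle-product analysis cannot certify that composed-canonical paths remain unique shortest paths in $G_O \otimes G_I$ after inner-graph replacement.

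My proposal is to replace layeredness with a global \emph{potential function} argument. Pick a generic vector $u \in \mathbb{R}^d$ with $u \cdot v > 0$ for every direction $v \in V$ used in the inner or outer graph, and set $\phi(x) := u \cdot x$. By convexity of $V$, every edge strictly increases $\phi$, and every canonical edge increases $\phi$ by a uniform quantity tied to its direction. In the composed graph I would define a combined potential $\Phi$ by adding the outer-graph potential of the host node, scaled by the inner canonical path length, to the local inner-graph potential. Composed-canonical paths then gain $\Phi$ at a fixed rate per edge, while a case analysis should show that any other edge in the composed graph gains at most this rate, with strict inequality away from canonical edges. This is the ``new strategy for analysis'' that would replace the layered argument used in all prior lower bounds in this line of work.

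The step I expect to be the main technical obstacle is handling alternative walks that cross between inner-graph copies through non-canonical boundary nodes. In a layered graph, parity of the layer index rules such crossings out automatically; an unlayered CE graph has no such built-in constraint, and a clever walk might leave an inner copy at one critical endpoint, traverse a non-canonical outer route, and re-enter a later inner copy at another critical endpoint without paying any potential cost. To block this, I would amend each inner graph by attaching a short ``identification gadget'' at every critical endpoint: a constant-length chain of edge-extension nodes shared between the inner and outer canonical paths, so that any walk using a non-canonical boundary crossing must either incur the extension penalty $+k$ or violate strictness of the potential. The gadget must be large enough to guarantee a strict potential drop on every non-canonical boundary event, yet small enough not to erode the CE parameter counts that drive the lower bound.

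With these amendments and the potential-based analysis in place, verifying that composed-canonical paths are the unique shortest paths should reduce to checking that each canonical edge is the unique $\Phi$-maximizing outgoing option at its tail, which follows from the strictness of the convex-set argument. The remaining size/length accounting that drives the $\Omega(n^{1/7})$ lower bound then proceeds along the lines of the $\Omega(n^{1/11})$ blueprint of \cite{HP18, Lu19}, but now with the genuine unlayered CE parameters rather than the layer-cost-adjusted ones, which is exactly where the improvement in the exponent comes from.
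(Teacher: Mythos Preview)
Your high-level instinct to replace layeredness with a potential function is right, and the paper indeed uses an amortized potential argument. But the specific potential you propose, a single linear functional $\phi(x)=u\cdot x$ on the lattice, cannot do the job, and the ``identification gadget'' does not correspond to anything in the actual construction.

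The failure point is the sentence ``any other edge in the composed graph gains at most this rate, with strict inequality away from canonical edges.'' In an unlayered CE inner graph the direction set $W_I$ contains several vectors, and for any fixed $u$ some of them will have \emph{larger} $u\cdot v$ than the canonical direction of the pair $(s,t)$ you are analyzing. A forward move through an inner copy that uses those steeper directions reaches an output port in \emph{fewer} hops than the canonical inner path, so its potential-per-edge rate strictly exceeds the canonical one; this is exactly the phenomenon the paper calls $\Delta(m_i)<0$ for moves in \textsc{Forward-S}. No linear functional on $\mathbb{Z}^2$ can make every inner edge simultaneously weakly worse than the canonical edge, because the canonical direction varies with the demand pair.

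What the paper does instead is substantially more delicate than a single potential and a gadget. First, the inner convex set $W_I$ is hand-built (vectors $(r_I-c+i,\sum_{j\ge i}j)$) so that a precise \emph{Graph Distance Property} holds: the shortest-path length between any port pair is lower-bounded by an explicit affine function of the displacement relative to the canonical pair. Second, the bijection $\phi:W_O\to P_I$ is \emph{not arbitrary}: the outer convex set is partitioned into angular ``stripes,'' and each stripe is mapped to inner critical pairs sharing the same inner canonical vector. This alignment yields an \emph{Update Property} saying that after a same-stripe forward move the new input port equals the old one plus the displacement vector $\vec\delta_i$. Third, the potential $\Phi$ is defined only on input ports and depends on the canonical inner vector $\vec v_I^*$ of the specific pair $(s,t)$ being analyzed; combining it with the Graph Distance and Update Properties gives $\widehat\Delta(m_i)\ge 0$ for same-stripe forward moves by an amortization, not by a per-edge comparison. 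Cross-stripe forward moves, backward moves, and zigzags are handled separately using the angular separation between stripes. None of this is captured by a universal linear potential, and there is no boundary gadget; the subdivided outer edges already supply the penalty for leaving the canonical outer route.
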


In addition to its use in spanner lower bounds, this technical lemma is also the essential missing ingredient towards our extension of pairwise distance preserver lower bounds of Coppersmith and Elkin to pairwise additive spanners with $+k$ error (Theorem \ref{thm:intropairwiselb}).
Our lower bound matches the $\sqrt{n}$ demand pair threshold obtained by the Coppersmith-Elkin lower bound  against distance preservers precisely because we can use an amended Coppersmith-Elkin distance preserver lower bound for the outer graph of our obstacle product.

One can view the Coppersmith-Elkin graph construction as parametrized by a \emph{convex set of vectors} taken on input.
The original graphs in \cite{CE06} use a standard convex set construction from \cite{BL98}.
We need to design very precise convex sets that have essentially the same size as those used by Coppersmith and Elkin, but with some additional technical properties that enable use with the obstacle product.
Some of our main new technical contributions lie in the design of these convex sets, which we describe in Section \ref{sec:specifying}.

The other major technical step in this paper lies in the part of the obstacle product where outer-canonical paths are attached to inner-canonical paths in the inner graph replacement step.
In all prior work, it has been completely arbitrary which outer-canonical path was attached to which inner-canonical path.
In this work, it is non-arbitrary: we show that the obstacle product benefits from a specific \emph{alignment} between these paths; roughly, outer-canonical paths are attached to inner-canonical paths based on the direction they are travelling. 
Details are given in Section \ref{sec:specifyingfinal}.

With this, we employ a more complex version of the error analysis from prior work, that leverages our convex set design and alignment between inner and outer canonical paths.
We introduce a \emph{move decomposition framework} to do so, which enables an amortized version of the convexity argument used for error analysis in prior work. A high-level overview of the move decomposition framework can be found in Section \ref{sec:analysis_1}, but we will not overview it further here.

\subsection{Future Directions: Can the Obstacle Product Achieve Tight Error Bounds?}

This research project was initiated by a thought experiment: is it even \emph{conceivable} for the obstacle product framework to produce lower bounds matching the upper bounds obtained by the \emph{path-buying} framework currently used for the spanner upper bounds in this paper and in \cite{BKMP10, BV16}?
In this subsection, we argue that the answer is a resounding ``sort of:'' on all axes except one, there is clear potential for these frameworks to produce matching upper and lower bounds.
This problematic axis likely spoils the possibility of pinning down an exact error bound for $O(n)$-size spanners in the near future, but this problematic axis is more of a barrier in analysis than in construction.

Suppose we run the current state-of-the-art upper bound construction from \cite{BV16} on lower bound graphs produced by the obstacle product.
One arrives at three main points of technical disagreement, where the upper bound analysis makes pessimistic assumptions not actually realized on the current lower bound graph.
Thus, a hypothetical tight analysis would have to either introduce a lower bound graph that realizes these pessimistic assumptions, or it would need to improve the upper bound argument to avoid making these pessimistic assumptions in the first place.
We overview these points, and their implications for future work, below.

\paragraph{Our Unlayered Inner Graphs Are Probably Necessary.}

The path-buying framework used in previous spanner upper bounds \cite{AB16soda, Knudsen14, BKMP10}  begins with a \emph{clustering} phase,  in which the graph nodes are partitioned into low-radius ``clusters.''
In the second \emph{path-buying phase}, one adds a collection of shortest paths that connect far-away clusters with small additive error.
A key piece of the upper bound analysis argues that we only add a small number of shortest paths through each cluster.
More specifically, leveraging distance preserver upper bounds from \cite{CE06} for a cluster $C$ with $n_C$ nodes, we can afford $O(n_C^{1/2})$ shortest paths while paying only a linear number of edges for this cluster $C$.

When the path-buying framework is run on an obstacle product construction, it precisely picks out the inner graphs (plus a few nodes in the attached edge-extended paths) as clusters, and it picks out the outer-canonical paths as the shortest paths added in the second phase.
Thus, for tightness, one would hope that parameters balance in such a way that we have $\Omega(n_I^{1/2})$ canonical paths through each inner graph on $n_I$ nodes.
Prior to this work, this was not so: the forced layered structure of the inner graphs meant that one actually had to place $\Omega(n_I^{2/3})$ canonical paths \cite{HP18} or even $\Omega(n_I)$ canonical paths \cite{AB17jacm, LVWX22} to achieve a lower bound.
By unlayering our inner graphs, we are able to rebalance parameters to have $\Omega(n_I^{1/2})$ canonical paths through each inner graph for the first time.
Thus, this particular axis is no longer a point of disagreement following our work; we view our main conceptual contribution as demonstrating tightness between the path buying and obstacle product frameworks in this regard.

\paragraph{An Improved Alternation Product Is Probably Still Needed.}
In the clustering phase of the path-buying  framework, each cluster is either ``small'' or ``large,'' depending on its number of nodes. The worst-case input graphs for the spanner upper bounds would have the structure that all clusters are right on the small/large borderline; it is a good case when all clusters are significantly above or below this threshold.
As mentioned, when one clusters obstacle product graphs, the clusters are precisely the inner graphs plus some of their attached edge-extended paths. However, they would specifically be classified as \emph{large} clusters in the upper bound, far away from this borderline. So the upper bound makes a pessimistic assumption of borderline clusters that is not actually realized in the current lower bound construction.

Let us engage for a moment in some wishful thinking.
Suppose that we could apply an alternation product on the outer graph, and then replace in inner graphs with our current density of canonical paths but with the additional structure $S \times S$ on their demand pairs (such graphs are constructed in \cite{CE06}).
This would substantially reduce the number of edge-extended paths attached to each inner graph, and this change would put our inner graphs right on the small/large borderline when viewed as clusters.
Thus, to resolve this discrepancy between the path buying and obstacle product frameworks, we think that the alternation product or something similar is very likely to be the right tool.

There is no intrinsic barrier to applying an alternation product on top of our unlayering method, but it complicates the already-delicate geometric details of our argument in a way that we have not resolved.
So we wish to emphasize that the improved alternation product in \cite{LVWX22} remains an important technical idea, and the natural next step for the area is to integrate this alternation product (or perhaps one with even further-improved parameters) with our unlayered inner graphs.
In this sense, although our lower bounds improve numerically on \cite{LVWX22}, we think it is more conceptually correct to consider our respective constructions as concurrent state-of-the-art that achieve two different desirable features of the ultimate lower bound construction, which will need to be unified in future work.

\paragraph{Optimal Outer Graphs Will Probably Be Hard To Achieve.}

In order to replace in inner graphs of nontrivial size, we need to start the obstacle product construction with a relatively dense outer graph, that has $\poly(n)$ canonical paths passing through a typical node.
Such an outer graph would essentially need to be a distance preserver lower bound as in \cite{CE06} (perhaps passed through an obstacle product).
The trouble is that it is arguably out of reach of current techniques to determine the optimal quality of a distance preserver lower bound graph.
Distance preserver lower bounds have close relationships to several long-standing open problems in extremal combinatorics, like bounds for the triangle removal lemma \cite{Bodwin21}, and it will probably be difficult to settle the bounds for distance preservers without also making a breakthrough on these difficult combinatorial problems.

This paper is the first that is able to use state-of-the-art distance preserver lower bounds for the outer graph.
One can explain a substantial part of the remaining numerical upper/lower bound gap for additive spanners by acknowledging that the upper bound implicitly uses off-the-shelf distance preserver upper bounds, and the lower bound uses distance preserver lower bounds, and these are far apart.
Thus: while it is quite reasonable to think that the obstacle product might produce tight lower bounds \emph{when an optimal distance preserver lower bound outer graph is plugged in}, it will likely be hard to find a concrete graph to plug in.






\section{Construction Framework}
We now present our lower bound construction.
We refer back to the technical overview (Section \ref{sec:techoverview}) for intuition, comparison to prior work, and to highlight the part of this paper that is new.
For simplicity of presentation, we will frequently ignore issues related to non-integrality of expressions that arise in our analysis; these issues affect our bounds only by lower-order terms.

\subsection{Base Graph $G_B$}
We start by describing a template for a \emph{base graph} $G_B$, which is a generalized version of the lower bound construction for distance preservers by Coppersmith and Elkin \cite{CE06} (they use the following construction with a particular choice of $W$).
The outer and inner graphs in our obstacle product will both be versions of the base graph, instantiated a bit differently. Graph $G_B$ will have parameters $(x, y, r, W)$. 

\paragraph{Vertices.} The vertices of the base graph are $[1, x] \times [1, y]$, where $x, y$ are positive integers that are inputs to the construction.
We imagine these vertices as a subset of $\zz^2$, i.e., embedded as points in the Euclidean plane.

\paragraph{The Strongly Convex Set $W$.}
The edges and critical pairs of the base graph both depend on an additional input $W$, which is required to be a \emph{strongly convex set} of vectors in $\zz^2$.
We recall the definition:
\begin{definition}[Strongly Convex Set]
A set of vectors $W$ is strongly convex if the equation $\vec{v}_0 = \lambda_1\vec{v}_1 + \cdots + \lambda_k\vec{v}_k$ has no nontrivial solutions with all $\vec{v}_i \in W$, $k$ any positive integer, and $\lambda_i$ (possibly negative) scalars with $\sum_i|\lambda_i| \leq 1$. The trivial solutions are when $\vec{v}_0 = \vec{v}_1 = \cdots = \vec{v}_k$.
\end{definition}

We will write our strongly convex set as $W(r, \psi)$ to mean that (1) the $x$-coordinate of all vectors is between $r/2$ and $r$, and (2) the angle between any vector and the horizontal is in the range $[0, \psi]$ radians.
For a technical reason to follow, we further require that the parameter $r$ satisfies $r \leq \frac{x}{4}$.

\paragraph{Critical Pairs.}

We next define the set of critical pairs $P$ for the base graph.
Let $r$ be an integer parameter of $G_B$ and $W(r, \psi)$ be our chosen strongly convex set. 
Let $S = [1, r/2] \times [1, y/2]$, and let $T = [x - r, x] \times [1, y]$.
The critical pairs $P$ are a subset of $S \times T$.
Specifically: for each $s \in S$ and each $\vec{v} \in W(r, \psi)$, let $t = s + k\vec{v}$ where $k$ is the largest positive integer such that $t \in V$, and include $(s, t) \in P$.
We quickly confirm that this node $t$ is well-defined:
\begin{lemma}
If we choose $\psi$ so that  $0 \le \psi \le \pi/4$  and $\tan \psi \leq yx^{-1} / 2$, then for all $s \in S, \vec{v} \in W(r, \psi)$ there exists a positive integer $k$ with $t := s + k\vec{v} \in T$.
\end{lemma}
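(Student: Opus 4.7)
The plan is to take $k$ as specified in the construction --- the largest positive integer with $s + k\vec{v} \in V$ --- and verify directly that $t = s + k\vec{v}$ lands in $T = [x-r, x] \times [1, y]$. Write $\vec{v} = (v_x, v_y)$ and $s = (s_x, s_y)$; by hypothesis $v_x \in [r/2, r]$, $v_y \in [0, v_x \tan \psi]$, $s_x \in [1, r/2]$, and $s_y \in [1, y/2]$.

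First I would argue that as $k$ grows, the binding constraint is the $x$-coordinate of $s + k\vec{v}$, not the $y$-coordinate --- i.e., that $(x - s_x)/v_x \le (y - s_y)/v_y$ (reading the right-hand side as $+\infty$ when $v_y = 0$). Combining $v_y/v_x \le \tan \psi \le y/(2x)$ with $s_x \ge 0$ and $s_y \le y/2$ gives
\[
v_y(x - s_x) \;\le\; v_x \cdot x \cdot \tfrac{y}{2x} \;=\; v_x \cdot \tfrac{y}{2} \;\le\; v_x(y - s_y),
\]
which is the claim. Consequently the $k$ from the construction equals $\lfloor (x - s_x)/v_x \rfloor$.

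Second I would check $k \ge 1$, which reduces to showing $s + \vec{v} \in V$. The hypothesis $r \le x/4$ gives $s_x + v_x \le r/2 + r \le 3x/8 \le x$, while $s_y + v_y \le y/2 + r \cdot y/(2x) \le y/2 + y/8 \le y$ follows from the angle hypothesis; positivity of both coordinates is immediate. Third I would check $t \in T$: by maximality, $x - r \le x - v_x < s_x + k v_x \le x$, placing the $x$-coordinate of $t$ in $[x-r, x]$; and $s_y + k v_y \le s_y + (x - s_x)\tan \psi \le y/2 + y/2 = y$ together with $s_y + k v_y \ge s_y \ge 1$ places the $y$-coordinate in $[1, y]$.

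There is no real obstacle here --- the hypothesis $\tan \psi \le y/(2x)$ is calibrated precisely so that the $y$-coordinate has not yet escaped $[1,y]$ by the time the $x$-coordinate traverses the full range $[1,x]$, and the hypothesis $r \le x/4$ is what ensures one can take at least one step. The only point requiring any care is the first step: one must argue that the $x$-constraint binds before the $y$-constraint, since otherwise the construction's choice of $k$ could in principle leave $t$ strictly to the left of $T$.
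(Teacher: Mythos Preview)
Your proof is correct and follows essentially the same approach as the paper: pick $k$ so that the $x$-coordinate of $s+k\vec{v}$ is as large as possible while remaining $\le x$, then bound each coordinate of $t$ directly using $v_x \le r$, $s_y \le y/2$, and $\tan\psi \le y/(2x)$. Your version is a bit more thorough---you explicitly verify $k\ge 1$ and argue that the $x$-constraint binds before the $y$-constraint to tie $k$ back to the construction's definition---whereas the paper simply defines $k$ via the $x$-coordinate and leaves those points implicit.
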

\begin{proof} 
Choose $k$ to be the largest integer such that $s_1 + kv_1 \le x$, where $s_1, v_1$ are the first coordinates of $s, \vec{v}$ respectively. 
Since $\|\vec{v}\| \le r$, this implies $s_1 + kv_1 \ge x-r$, and so the first coordinate of $s + k\vec{v}$ is in the appropriate range $[x-r, x]$.

For the second coordinate: since $s_2 \ge 1$ and $\tan \psi \ge 0$, we have $s_2 + kv_2 \ge 1$.
Additionally, since $s_2 \le y/2$ and $\tan \psi \le yx^{-1}/2$, we have $s_2 + kv_2 \le y/2 + x \cdot yx^{-1}/2 \le y$.
Thus the second coordinate of $s + k\vec{v}$ is in the appropriate range $[1, y]$, completing the proof that $s + k\vec{v} \in T$.
\end{proof}
This lemma imposes a mild constraint on our choice of $\psi$, which will be satisfied in the instantiation of our inner and outer graphs from this base graph. 
 
\paragraph{Edges and Canonical Paths.}

Each critical pair $(s, t) \in P$ is generated using a vector $\vec{v} \in W(r, \psi)$; we call this the \emph{canonical vector} of $(s, t)$.
We define the \emph{canonical $(s, t)$-path $\pi_B^{s, t}$} the $(s \leadsto t)$-path containing exactly the edges of the form $(s + i\vec{v}, s+ (i+1)\vec{v})$ for integers $0 \le i < k$. 
The edges of the graph are exactly those contained in any canonical path.

\begin{figure} [h]
    \centering
    \begin{tikzpicture}
    
    \draw [black, fill=gray] (0.5, 0.5) rectangle (2.5, 5.5);
    \node at (1.5, 0) {\Huge $S$};
    \draw [black, fill=gray] (6.5, 0.5) rectangle (10.5, 10.5);
    \node at (8.5, 0) {\Huge $T$};
    
    \foreach \x in {1,...,10} {
        \foreach \y in {1,...,10} {
            \draw [fill=black] (\x, \y) circle [radius=0.1];
        }
    }
    
    \draw[fill=red!50!white] (5, 2) arc (0:45:4);
    \draw[red!50!white, fill=red!50!white] (1, 2) -- (5, 2) -- (3.84, 4.8) -- cycle;
    
     \draw[fill=red!50!white] (3, 2) arc (0:45:2);
     \draw[red!50!white, fill=red!50!white] (1, 2) -- (3, 2) -- (2.4, 3.4) -- cycle;
     
    \draw [blue!50!white, fill=blue!50!white] (4, 2.5) ellipse (0.5 and 1);
    \draw [blue!50!white, fill=blue!50!white, rotate=45] (5, 0) ellipse (0.5 and 1);

    \draw [blue, fill=blue] (1, 2) circle [radius=0.3];
    \draw [ultra thick, ->, blue] (1, 2) -- (4, 2);
    \draw [blue, fill=blue] (4, 2) circle [radius=0.2];
    \draw [ultra thick, ->, blue] (4, 2) -- (7, 2);
    \draw [blue, fill=blue] (7, 2) circle [radius=0.2];
    \draw [ultra thick, ->, blue] (7, 2) -- (10, 2);
    \draw [blue, fill=blue] (10, 2) circle [radius=0.3];
    
    \draw [ultra thick, ->, blue] (1, 2) -- (3, 4);
    \draw [blue, fill=blue] (3, 4) circle [radius=0.2];
    \draw [ultra thick, ->, blue] (3, 4) -- (5, 6);
    \draw [blue, fill=blue] (5, 6) circle [radius=0.2];
    \draw [ultra thick, ->, blue] (5, 6) -- (7, 8);
    \draw [blue, fill=blue] (7, 8) circle [radius=0.2];
    
    \draw [ultra thick, ->, blue] (7, 8) -- (9, 10);
    \draw [blue, fill=blue] (9, 10) circle [radius=0.3];

    \draw [ultra thick, ->, blue] (1, 2) -- (4, 3);
    \draw [blue, fill=blue] (4, 3) circle [radius=0.2];
    \draw [ultra thick, ->, blue] (4, 3) -- (7, 4);
    \draw [blue, fill=blue] (7, 4) circle [radius=0.2];
    \draw [ultra thick, ->, blue] (7, 4) -- (10, 5);
    \draw [blue, fill=blue] (10, 5) circle [radius=0.3];
    
    \node [fill=red!50!white] at (2, 2.4) {$\psi$};
    \node at (2.75, 3.3) {$\frac{r}{2}$};
    \node at (4.4, 4.4) {$r$};
    
    \node at (3.5, 3.5) {$W(r, \psi)$};

    \end{tikzpicture}
    \caption{For each node $s \in S$, we use the strongly convex set $W(r, \psi)$ to generate the the nodes $t \in T$ for which $(s, t)$ is included as a critical pair, in addition to the generation of the canonical paths connecting these critical pairs.}
\end{figure}
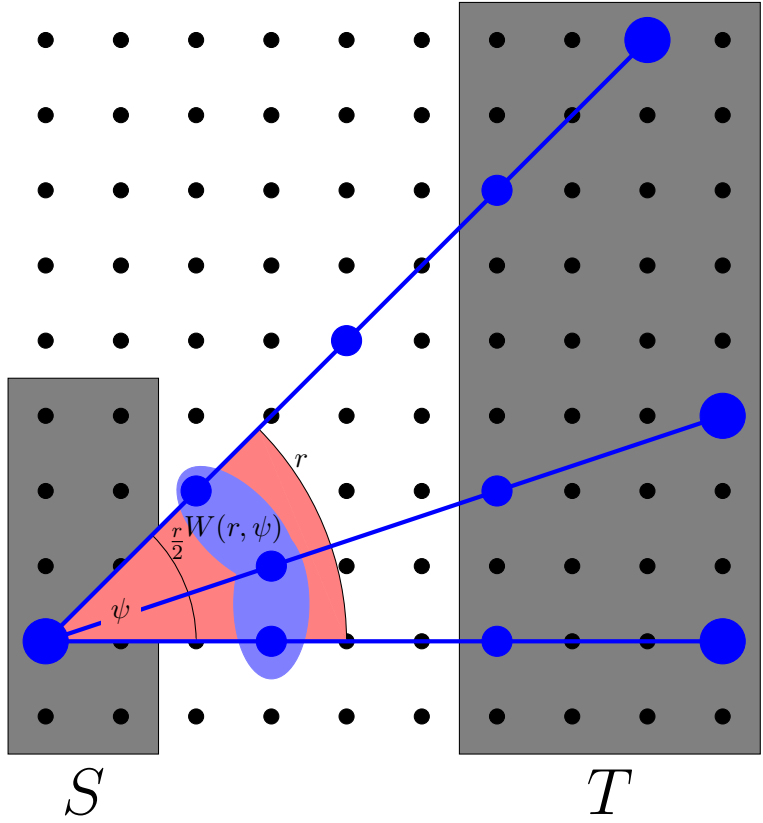



\paragraph{Important Properties of the Base Graph.}

This completes the construction of the base graph; we note its important structural properties before moving on.
A version of this lemma appears frequently in prior work.

\begin{lemma}[Properties of Base Graph $G_B$, similar to lemmas in \cite{ABP17, HP18, LVWX22}] \label{lem:basegraph}
\label{lem:base_graph}
The base graph $G_B = (V, E)$ has the following properties:
\begin{enumerate}
    \item $|V| = xy$
    \item $|P| = \Theta(ry\cdot |W(r, \psi)|)$
    \item The canonical paths $\pi_B^{s, t}$ are pairwise edge-disjoint.
    \item For all $s, t \in P$, $\left|\pi_B^{s, t}\right| = \Theta(\frac{x}{r})$.
    Consequently, $|E| = \Theta(xy \cdot |W(r, \psi)|)$.
    \item Each canonical path $\pi_B^{s, t}$ is the unique shortest $(s, t)$-path in $G_B$.
\end{enumerate}
\end{lemma}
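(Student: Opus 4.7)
The plan is to verify the five properties in turn, with properties 1--2 and 4 being essentially counting, while properties 3 and 5 both reduce to applications of the strong convexity of $W(r,\psi)$.

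Property 1 is immediate from the definition $V = [1,x]\times[1,y]$. For property 2, I would first establish that the map $(s, \vec{v}) \mapsto (s, s+k\vec{v}) = (s, t)$ is injective on $S \times W(r,\psi)$: the coordinate $s$ is trivially recovered, and $\vec{v}$ is determined by $(t-s)$ being a positive scalar multiple of it, since strong convexity forbids two distinct vectors in $W$ from being positive multiples of each other (a relation $\vec{v} = \lambda \vec{v}'$ with $\lambda \leq 1$ would be nontrivial). Then $|P| = |S|\cdot|W(r,\psi)| = (r/2)(y/2)\cdot|W(r,\psi)|$. Property 4 follows from the choice of $k$: since $s_1 \geq 1$, $v_1 \in [r/2, r]$, and $s_1+kv_1 \in [x-r, x]$, we get $k = \Theta(x/r)$, and then $|E| = |P|\cdot\Theta(x/r)$ after we know edge-disjointness from property 3.

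For property 3, I would argue that if two canonical paths with canonical vectors $\vec{v}, \vec{v}'$ shared an edge $\{u, u+\vec{v}\} = \{u', u'+\vec{v}'\}$, then since every $W$-vector has positive first coordinate (so $\vec{v} \ne -\vec{v}'$), we must have $\vec{v}=\vec{v}'$ as oriented vectors. This forces the two startpoints $s, s'$ to lie on the same lattice line $u + \mathbb{Z}\vec{v}$, so $s - s' \in \mathbb{Z}\vec{v}$. But $s, s' \in S = [1,r/2]\times[1,y/2]$ differ by less than $r/2$ in the first coordinate while $v_1 \geq r/2$, so the only possibility is $s = s'$, and hence the two paths coincide.

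Property 5 is the main obstacle and the heart of the argument. Given $(s,t) \in P$ with $t = s + k\vec{v}$, consider any alternate $s \leadsto t$ path $\pi$ of length $m$. Each edge of $\pi$ is of the form $\{u, u+\vec{w}_i\}$ for some $\vec{w}_i \in W$, but may be traversed in either direction, contributing a signed vector $\epsilon_i \vec{w}_i$ with $\epsilon_i \in \{-1,+1\}$. Telescoping the traversal gives
\[
k\vec{v} \;=\; \sum_{i=1}^{m} \epsilon_i \vec{w}_i, \qquad\text{i.e.,}\qquad \vec{v} \;=\; \sum_{i=1}^{m} \tfrac{\epsilon_i}{k}\, \vec{w}_i,
\]
with $\sum_i |\epsilon_i/k| = m/k$. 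If $m \leq k$, the coefficient sum of absolute values is at most $1$, so strong convexity applies and forces the trivial solution: $\vec{w}_i = \vec{v}$ for every $i$, and consistency of the sum then requires $\sum_i \epsilon_i / k = 1$, i.e., $m = k$ and $\epsilon_i = +1$ for all $i$. The unique $(s \leadsto t)$-walk of this form is exactly $\pi_B^{s,t}$, proving both shortest-path status and uniqueness.

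The delicate point to get right is the treatment of the signs $\epsilon_i$ in the application of strong convexity: the definition must allow signed coefficients (it does, via $\sum_i|\lambda_i|\leq 1$), and one must verify that the trivial solution really does pin down not just the direction $\vec{v}$ but also the length $k$ and the orientation of every edge. Once that is in hand, the lemma falls out.
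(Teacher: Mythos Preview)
Your proposal is correct and follows essentially the same approach as the paper: counting for properties 1, 2, 4, the ``edge determines vector, then vector plus width-of-$S$ argument determines startpoint'' for property 3, and the strong convexity of $W$ applied to the telescoping sum for property 5. Your treatment is in fact a bit more careful than the paper's in two places: you explicitly verify injectivity of $(s,\vec{v}) \mapsto (s,t)$ for property 2, and you track the signs $\epsilon_i$ when applying strong convexity in property 5, whereas the paper glosses over backward edge traversals.
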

\begin{proof} ~
\begin{enumerate}
    \item The number of vertices is immediate from construction.
    
    \item There is exactly one critical pair in $P$ for each combination of a vertex in $S$ and  vector in $W(r, \psi)$.
    Thus
    $$|P| = |S| \cdot |W(r, \psi)| = \Theta(ry \cdot |W(r, \psi)|).$$
    
    \item Each edge $(a,b) \in E$ uniquely identifies a vector $\vec{v} \in W(r, \psi)$.
    Since by construction $(a,b)$ lies on a canonical path, we can subtract $\vec{v}$ from $a$ zero or more times to reach a node in $S$; since the first coordinate of $v$ is at least $r/2$ and the width of $S$ is $r/2$, there is a \emph{unique} node in $S$ that we can reach in this way.
    Thus $(a, b)$ also uniquely identifies the first node of its canonical path $s \in S$.
    Since $s, \vec{v}$ determine a canonical path, $(a, b)$ lies on a unique canonical path.
    
    \item Let $\vec{v}$ be the canonical vector of a critical pair $(s, t)$.
    Notice that
    $$\left|\pi_B^{s, t}\right| = \frac{\|t - s\|}{\left\|\vec{v}\right\|}.$$
    Since $P \subseteq S \times T$, the first-coordinate displacement $|t_1 - s_1|$ is at least $x - 2r \geq x/2$. 
    Then since $\|\vec{v}\| \le r$, we have
    $$\left|\pi_B^{s, t}\right| = \frac{\|t - s\|}{\|\vec{v}\|} \geq \frac{x-2r}{r} \geq \frac{x}{2r}$$
    where the last inequality is since we require $r \le x/4$.
    Since critical paths are edge-disjoint and every edge lies on a critical path, it follows that
    $$|E| = |P| \cdot \Theta\left(\frac{x}{2r}\right) = \Theta(xy \cdot |W(r, \psi)|).$$
    
    \item Let $\pi_B^{s, t} = (s, s + \vec{v}, s + 2\vec{v}, \dots, t)$ be a canonical path, where $\pi_B^{s, t}$ has $k$ edges and so $t - s = k\vec{v}$.
    Suppose for the sake of contradiction that there is some other $(s, t)$-path $\pi$ of length $j \leq k$ in $G_B$, and let $\vec{v}_i$ be the vector used to create the $i^{th}$ edge in $\pi$. Then
    $$k \vec{v} = t - s = \sum_{i=1}^j \vec{v}_i.$$
  Since we have assumed that $\pi \neq \pi_B^{s, t}$, this violates strong convexity property of $W(r, \psi)$, completing the contradiction. \qedhere
\end{enumerate}
\end{proof}

\subsection{Composing the Final Graph $G$}
\paragraph{Inner Graph and Outer Graph.} We instantiate two different copies of our base graph, which we will call the \textit{inner graph} $G_I = (V_I, E_I)$ and the \textit{outer graph} $G_O = (V_O, E_O)$.
We will use subscripts $I$ and $O$ to indicate the inputs used to create $G_I, G_O$ respectively.
That is: the inner graph has dimensions $x_I, y_I$, strongly convex set $W_I(r_I, \psi_I)$, critical pairs $P_I$, and canonical paths $\pi_I^{s, t}$.
The outer graph parameters are the same with subscript $O$.
Since $G_I, G_O$ are both instantiations of the base graph, they both satisfy Lemma \ref{lem:base_graph}.

\paragraph{Inner Graph Replacement Step.} 
The next step in the construction of $G$ is to replace each  vertex in $G_O$ with a copy of the inner graph $G_I$.  
For each vertex $u$ in $G_O$ replaced with a copy $G_I^{u}$ of $G_I$, we must reconnect every edge originally incident to $u$ in $G_O$ to some vertex in $G_I^u$. Recall that the critical pair set $P_I$ is a subset of $S_I \times T_I$. We will regard the source vertices $S_I$ to be the \textit{input ports} of $G_I^u$ and the sink vertices $T_I$ to be the \textit{output ports} of $G_I^u$.
We will attach every incoming edge of form $(u - \vec{v}, u)$ in $G_O$ where $\vec{v} \in W_O$ to an input port in $G_I^u$.
Likewise, we will attach every outgoing edge of form $(u, u + \vec{v})$ in $G_O$ to an output port in $G_I^u$.  

To perform this attachment, we define a bijection $\phi: W_O \mapsto P_I$ from the vectors in the outer graph's strongly convex set to the critical pairs in $P_I$.
Later in the analysis, we will specify that $\phi$ is non-arbitrary; for technical reasons we must specifically choose a bijection satisfying certain properties.
But for now we will prove some useful properties of the construction that hold regardless of which bijection $\phi$ is used.
We note that since $\phi$ is a bijection we gain constraints
$$|W_O| = |P_I| = \Theta(r_I y_I \cdot |W_I|).$$

If $\vec{v} \in W_O$ and $\phi(\vec{v}) = (s, t)$, we plug in the incoming edge $(u - \vec{v}, u)$ into input port $s$ in $G_I^u$, and we plug in the outgoing edge of form $(u, u + \vec{v})$ originally in $G_O$ into output port $t$ in $G_I^u$.
We repeat this process for all copies of $G_I$ and all vectors $\vec{v} \in W_O$. Let $G'$ be the graph resulting from this process.

\paragraph{Edge Subdivision Step.} Let $z$ be a new parameter of the construction. To obtain our final graph $G$, we subdivide every edge of $G'$ corresponding to an original edge in outer graph $G_O$ into a path of length $z = |V_I|\cdot |P_I|^{-1}$, by adding  new nodes along the edge. 
We refer to the paths in $G$ replacing edges from $G_O$ as  \textit{subdivided paths}.

\paragraph{Critical pairs.} We define the critical pairs $P$ associated to the final graph $G$ as follows:
\begin{itemize}
    \item For each $(s_O, t_O) \in P_O$, let $G_I^{s_O}$ and $G_I^{t_O}$ be the inner graph copies in $G$ corresponding to vertices $s_O$ and $t_O$ in $G_O$.
    Let $\vec{v}_O$ be the canonical vector corresponding to critical pair $(s_O, t_O) \in P_O$, and let $\phi(\vec{v}_O) = (s_I, t_I) \in P_I$.

    \item We then add a critical pair to $P$ from the vertex $s_I$ in $G_I^{s_O}$ paired with the vertex $s_I$ in $G_I^{t_O}$.\footnote{In principle we could use $t_I$ in place of $s_I$ in $G_I^{t_o}$, but using $s_I$ instead happens to simplify some technical details later on.} Denote these vertices in $G$ as $s$ and $t$, respectively.
    
    \item The associated canonical $(s, t)$-path $\pi^{s, t}$ through the final graph $G$ is the one obtained by starting with $\pi_O^{s_O, t_O}$ and replacing each edge with the corresponding subdivided path in $G$ and each node $u$ with the canonical path $\pi_I^{s_I, t_I}$ in the graph $G_I^u$, except that we replace the final node $t_O$ with the single node $s_I$ in $G_I^{t_O}$.We define vector $\vec{v}_O \in W_O$ to be the canonical vector of $G$ associated with $\pi^{s, t}$. 
\end{itemize}

We summarize the properties of our construction:
\begin{lemma}[Properties of Final Graph $G$]
\label{lem:final_graph}
Graph $G = (V, E)$ has the following properties:
\begin{enumerate}
    \item $|V| = \Theta\left(|V_O\|V_I|\right)$
    \item $|P| = \Theta \left(r_O y_O \cdot |W_O|\right)$
    \item The canonical paths $\pi^{s, t}$ for $(s, t) \in P$ are pairwise edge-disjoint.
    \item For all $(s, t) \in P$, $\left|\pi^{s, t}\right| = \Theta\left(\frac{x_O}{r_O}\cdot \frac{x_I}{r_I}\right)$. Consequently, $|E| = \Omega(|W_I||V|)$. 
\end{enumerate}
\end{lemma}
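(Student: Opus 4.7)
My plan is to verify the four properties in order, mostly by bookkeeping on top of Lemma \ref{lem:basegraph} applied separately to $G_I$ and $G_O$. The key identity that drives the algebra is $|W_O| = |P_I|$, which is built into the bijection $\phi$ and links the density of outer canonical vectors to the critical-pair count of the inner graph. I will also repeatedly use $|V_I| = x_I y_I$, $|V_O| = x_O y_O$, and the inner-graph expressions $|P_I| = \Theta(r_I y_I|W_I|)$ and $|\pi_I^{s_I,t_I}| = \Theta(x_I/r_I)$ from Lemma \ref{lem:basegraph}.

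First I would tackle (1) by partitioning $V$ into inner-graph vertices and edge-subdivision vertices. The $|V_O|$ copies of $G_I$ contribute $|V_O||V_I|$ vertices, while edge subdivision contributes $(z-1)|E_O|$ more, where $z = |V_I|/|P_I|$. Using $|E_O| = \Theta(|V_O|\cdot|W_O|)$ and $|W_O|=|P_I|$, this subdivision term simplifies to $\Theta(|V_O||V_I|)$, matching the first term. Property (2) is immediate because the construction adds exactly one critical pair to $P$ per critical pair in $P_O$, so $|P|=|P_O|=\Theta(r_O y_O|W_O|)$ directly from Lemma \ref{lem:basegraph}. For (4), once (3) is in hand, the length of each $\pi^{s,t}$ splits into a subdivided-path contribution $|\pi_O^{s_O,t_O}|\cdot z$ and an inner-canonical-path contribution of order $|\pi_O^{s_O,t_O}|\cdot (x_I/r_I)$; the inner term dominates because $z/(x_I/r_I) = 1/|W_I|\le 1$, yielding $|\pi^{s,t}|=\Theta((x_O/r_O)(x_I/r_I))$. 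The ``consequently'' edge bound then follows by combining $|E|\geq |P|\cdot|\pi^{s,t}|$ (from edge-disjointness) with the counts in (2) and (4) and the substitution $|W_O|=\Theta(r_I y_I|W_I|)$.

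The main obstacle is (3), edge-disjointness of the canonical paths in $G$. I would argue separately for the two kinds of edges. Subdivided-path edges correspond bijectively with outer-graph edges, and outer canonical paths are pairwise edge-disjoint by Lemma \ref{lem:basegraph}, so each subdivided path is traversed by at most one $\pi^{s,t}$. The delicate case is edges lying inside some inner-graph copy $G_I^u$. If two distinct canonical paths $\pi^{s,t}$ and $\pi^{s',t'}$ both pass through $G_I^u$, then their outer counterparts $\pi_O^{s_O,t_O}, \pi_O^{s'_O,t'_O}$ both pass through vertex $u$ in $G_O$, and edge-disjointness of outer canonical paths forces them to use distinct edges at $u$. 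Since each outer edge uniquely determines its canonical vector (part 3 of Lemma \ref{lem:basegraph}), the two outer paths use distinct vectors $\vec{v}_O \neq \vec{v}_O'$, so the bijection $\phi$ sends them to distinct inner critical pairs; their inner canonical paths are edge-disjoint by Lemma \ref{lem:basegraph} applied to $G_I$. This vector-uniqueness step, together with the fact that the attachment of incoming/outgoing edges to input/output ports is dictated by $\phi$, is the crux; everything else is careful accounting.
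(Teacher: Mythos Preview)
Your proposal is correct and follows essentially the same approach as the paper; the bookkeeping for (1), (2), and (4) matches the paper's proof almost line for line, including the key use of $|W_O|=|P_I|$ and the edge-count algebra. For (3) the paper phrases the inner-graph step as ``two outer canonical paths with the same canonical vector are node-disjoint,'' whereas you deduce distinct vectors from edge-disjointness at the shared node $u$; these are logically equivalent observations derived from the same part of Lemma~\ref{lem:basegraph}, so there is no substantive difference.
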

\begin{proof} ~
\begin{enumerate}
\item The number of vertices in inner graph copies in $G$ is $|V_O\|V_I|$.
Now we just need to count the vertices in the subdivided paths of $G$.
Each inner graph copy $G_I$ in $G$ is incident to at most $2|W_O|$ subdivided paths, each of which has length $z = |V_I| \cdot |P_I|^{-1}$.
Then since $|P_I| = |W_O|$ by our bijection $\phi$, the number of vertices in subdivided paths is at most $|V_O| \cdot 2|W_O| \cdot |V_I\|W_O|^{-1} = 2|V_O\|V_I|$.

\item The number of demand pairs follows immediately from Lemma \ref{lem:basegraph} and the fact that $|P| = |P_O|$.

\item The fact that canonical paths do not share edges along subdivided paths follows from edge-disjointness of canonical paths in the outer graph (Lemma \ref{lem:basegraph}).
The fact that canonical paths do not share edges in inner graph copies follows by noticing that any two canonical paths in $G_O$ with the same canonical vector $\vec{v} \in W_O$ are node-disjoint.
Thus, any two canonical paths in $G$ that use the same inner graph $G_I^u$ have different canonical vectors, and so they use different canonical subpaths through $G_I^u$, as determined by the bijection $\phi$. 
The claim then follows from edge-disjointness of canonical paths in $G_I$.

\item Let $(s, t) \in P$, and let paths $\pi^{s_O, t_O}$ and $\pi^{s_I, t_I}$ be the canonical paths in $G_O$ and $G_I$ respectively used to define $\pi^{s, t}$.
By construction we have
$$\left|\pi^{s, t}\right| \ge \left|\pi^{s_O, t_O}\right|\left|\pi^{s_I, t_I}\right|.$$
Applying Lemma \ref{lem:base_graph} to  bound the lengths of canonical paths, we thus have
$$\left|\pi^{s, t}\right| = \Theta \left(\frac{x_O}{r_O}\cdot \frac{x_I}{r_I}\right).$$
Finally, since canonical paths of $G$ are edge-disjoint and $|W_O| = |P_I|$, we have
\begin{align*}
    |E| \geq |P| \cdot \Theta\left(\frac{x_O}{r_O}\cdot \frac{x_I}{r_I}\right) = \Omega\left(|W_I| \cdot |V|\right). \tag*{\qedhere}
\end{align*}
\end{enumerate}
\end{proof}

\section{Analysis Framework}
\label{sec:analysis_1}



Fix a critical pair of vertices $(s, t) \in P$ in our final graph $G$. 
Let $\pi^*$ denote the canonical path corresponding to critical pair $(s, t)$ in $G$, and let $\pi$ be any alternate $s \leadsto t$ path.
The majority of our analysis will be dedicated to proving that $\pi$ is much longer than $\pi^*$ in the case where $\pi$ takes at least one subdivided path not in $\pi^*$; specifically, we show $|\pi| - |\pi^*| = \Omega(r_O^{2/3})$ (see Lemma \ref{lem:gap_lem}).  
After proving this lemma, the rest of the analysis follows arguments similar to prior work \cite{AB17jacm, HP18, Lu19, LVWX22}.


We begin our analysis by decomposing paths  in $G$ into subpaths we call \textit{moves}. We define a partition of these moves that we call the moveset $\mathcal{M}$.

\begin{definition}[Moveset $\mathcal{M}$] Let $\pi$ be a $(u, v)$-path in $G$ from some input port $u \in S_I$ in some inner graph copy $G_I^{(1)}$ to some input port $v \in S_I$ in some inner graph copy $G_I^{(2)}$. If no internal vertex of $\pi$ is an input port, then we call $\pi$ a move. We define the following categories of moves in $G$.
\begin{itemize}
    \item \textbf{\textsc{Forward Move}.} Path $\pi$ is a forward move if it travels from $u$ to some output port $w \in T_I$ in $G_I^{(1)}$ and then takes a subdivided path $e$ from $w$ to reach input port $v$ in $G_I^{(2)}$.
    \item \textbf{\textsc{Backward Move}.} Path $\pi$ is a backward move if it takes some subdivided path $e$ incident to $u$ to reach some output port $w \in T_I$ in $G_I^{(2)}$ and then travels to input port $v$ in  $G_I^{(2)}$.
    \item \textbf{\textsc{Zigzag Move}.} Path $\pi$ is a zigzag move if it takes some subdivided path $e_1$ incident to $u$ to reach some output port $w_1 \in T_I$ in some inner graph copy $G_I^{(3)}$, then travels to some output port $w_2 \in T_I$ in $G_I^{(3)}$, and then takes a subdivided path $e_2$ incident to $w_2$ to reach vertex $v$ in $G_I^{(2)}$.
    \item \textbf{\textsc{Stationary Move}.} Path  $\pi$ is a stationary move if $G_I^{(1)} = G_I^{(2)}$, i.e. if $u$ and $v$ are input ports in the same inner graph copy. 
\end{itemize}
We define the moveset $\mathcal{M}$ to be the collection of these categories of moves, namely
\[\mathcal{M} = \{\textsc{Forward, Backward, Zigzag, Stationary}\}.\]
\end{definition}


\begin{figure}[h]
    \centering
\begin{tikzpicture}
\draw [black, fill=white] (0, 0) circle [radius=1];

\draw [black, fill=black] (0.6, 0.6) circle [radius=0.15];
\draw [black, fill=white] (4, 0) circle [radius=1];
\draw [black, fill=black] (3.4, 0.6) circle [radius=0.15];
\draw [red, snake it, ultra thick] (-0.6, 0.6) -- (0.6, 0.6);
\draw [red, ultra thick, ->] (0.6, 0.6) -- (3.4, 0.6);
\draw [fill=gray] (1.2, 0.6) circle [radius=0.1];
\draw [fill=gray] (1.6, 0.6) circle [radius=0.1];
\draw [fill=gray] (2, 0.6) circle [radius=0.1];
\draw [fill=gray] (2.4, 0.6) circle [radius=0.1];
\draw [fill=gray] (2.8, 0.6) circle [radius=0.1];
\node [red] at (2, 1) {forward};

\draw [black, fill=white] (-4, 0) circle [radius=1];
\draw [black, fill=black] (-3.4, 0.6) circle [radius=0.15];
\draw [black, fill=black] (-4.6, 0.6) circle [radius=0.15];
\draw [yellow!85!black, ultra thick] (-0.6, 0.6) -- (-3.4, 0.6);
\draw [yellow!85!black, snake it, ultra thick, ->] (-3.4, 0.6) -- (-4.6, 0.6);
\node [yellow!85!black] at (-2, 1) {backward};
\draw [fill=gray] (-1.2, 0.6) circle [radius=0.1];
\draw [fill=gray] (-1.6, 0.6) circle [radius=0.1];
\draw [fill=gray] (-2, 0.6) circle [radius=0.1];
\draw [fill=gray] (-2.4, 0.6) circle [radius=0.1];
\draw [fill=gray] (-2.8, 0.6) circle [radius=0.1];


\draw [black, fill=white] (-4, 4) circle [radius=1];
\draw [black, fill=black] (-3.4, 4.6) circle [radius=0.15];
\draw [black, fill=white] (0, 4) circle [radius=1];
\draw [black, fill=black] (-0.6, 4.6) circle [radius=0.15];
\draw [ultra thick, green] (-0.6, 0.6) -- (-3.4, 4.6) -- (-0.6, 4.6);

\draw [fill=gray] (-1.2, 1.5) circle [radius=0.1];
\draw [fill=gray] (-1.6, 2.05) circle [radius=0.1];
\draw [fill=gray] (-2, 2.6) circle [radius=0.1];
\draw [fill=gray] (-2.4, 3.15) circle [radius=0.1];
\draw [fill=gray] (-2.8, 3.7) circle [radius=0.1];

\draw [fill=gray] (-1.2, 4.6) circle [radius=0.1];
\draw [fill=gray] (-1.6, 4.6) circle [radius=0.1];
\draw [fill=gray] (-2, 4.6) circle [radius=0.1];
\draw [fill=gray] (-2.4, 4.6) circle [radius=0.1];
\draw [fill=gray] (-2.8, 4.6) circle [radius=0.1];

\node [green!85!black] at (-2, 4) {zigzag};

\draw [black, fill=black] (-0.6, -0.6) circle [radius=0.15];
\draw [ultra thick, snake it, blue, ->] (-0.6, 0.6) -- (0, 0) -- (-0.6, -0.6);
\draw [blue] (0, -1.4) -- (-0.3, -0.6);
\node [blue, fill=white] at (0, -1.4) {stationary};

\draw [black, fill=black] (-0.6, 0.6) circle [radius=0.3];
\node [white] at (-0.6, 0.6) {$u$};


\end{tikzpicture}
    \caption{The four types of moves from one input port to another in our move decomposition of paths.  All four moves pictured here use $u$ as their start node, the circles represent different copies of the inner graph, and the dotted lines between inner graph copies  represent subdivided paths.}
    \label{fig:my_label}
\end{figure}
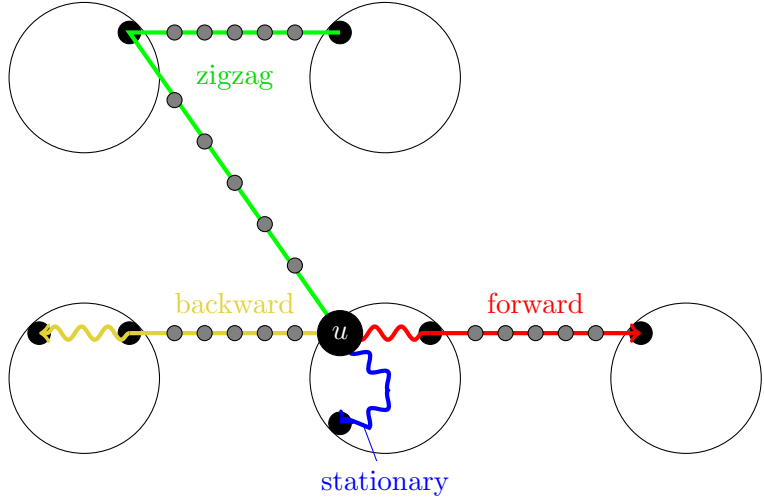

Moves will be the basic unit by which we analyze $(s, t)$-paths in $G$. A useful property of the moveset is the following.

\begin{proposition}
Every simple $(s, t)$-path $\pi$ can be decomposed into a sequence of pairwise internally vertex-disjoint moves from the moveset.
\end{proposition}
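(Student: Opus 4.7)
The plan is to partition $\pi$ at every input port it visits and then check that each resulting subpath belongs to one of the four categories in $\mathcal{M}$. Let $s = u_0, u_1, \ldots, u_m = t$ be the input ports lying on $\pi$ in the order they appear; both endpoints qualify because the critical-pair construction at the end of Section 3 places both $s$ and $t$ at copies of the input port $s_I \in S_I$. Simplicity of $\pi$ makes the $u_j$ distinct, and cutting $\pi$ at them produces subpaths $\pi_1,\ldots,\pi_m$ that share only their boundary input ports. In particular they are pairwise internally vertex-disjoint and no $\pi_j$ has an internal input port, so the real content is classifying each $\pi_j$.

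Before doing the classification I would record the local structure of $G$ around each relevant vertex type. An input port $s_I$ sitting in a copy $G_I^a$ is incident only to (i) edges inside $G_I^a$ and (ii) one subdivided path whose far endpoint is an output port in another copy $G_I^{a-\vec{v}}$, determined by the bijection $\phi$. Output ports are symmetric, non-port vertices of an inner copy see only inner-graph edges, and interior subdivision vertices have exactly two neighbors along the path. These facts follow directly from the inner graph replacement and edge subdivision steps.

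Next I would fix a subpath $\pi_j$ from input port $u$ in $G_I^{(1)}$ to input port $v$ in $G_I^{(2)}$ and split on its first edge. If the first edge lies inside $G_I^{(1)}$ then, because no internal vertex of $\pi_j$ is an input port, the only way for $\pi_j$ to leave $G_I^{(1)}$ is through an output port $w$ along a subdivided path, whose far end would be an input port and therefore must equal $v$; this gives a \textsc{Stationary} move if $\pi_j$ never leaves $G_I^{(1)}$ (forcing $G_I^{(1)}=G_I^{(2)}$), and a \textsc{Forward} move otherwise. If instead the first edge is along the subdivided path incident to $u$, it deposits $\pi_j$ at some output port $w_1$ in a copy $G_I^{(3)}$. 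From $w_1$ the path cannot backtrack along the same subdivided path (by simplicity), so it must enter $G_I^{(3)}$, and the same no-internal-input-port argument shows it either stays in $G_I^{(3)}$ to reach $v$ (forcing $G_I^{(3)}=G_I^{(2)}$, a \textsc{Backward} move) or exits $G_I^{(3)}$ through a second output port $w_2$ along a subdivided path that must end at $v$ (a \textsc{Zigzag} move).

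The main thing to be careful about is completeness of this case split. It ultimately reduces to two structural facts: a simple path can enter an inner-graph copy only at a port, and the far end of any subdivided path is a port of the opposite type to the near end. With those pinned down, ruling out a multi-bounce configuration inside a single $\pi_j$ amounts to noting that any intermediate subdivided-path crossing would land at an input port, which is forbidden in the interior of $\pi_j$. I do not expect a genuine technical obstacle here beyond carefully enumerating the edge types around each port.
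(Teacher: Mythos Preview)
Your approach is essentially the same as the paper's: partition $\pi$ at every input port it visits, observe the resulting subpaths are internally vertex-disjoint by simplicity, and then note each subpath is a move. The paper's own proof is considerably terser---it simply asserts that ``it is immediate from the construction of $G$ that $\mathcal{M}$ is a partition of the set of moves in $G$''---whereas you actually carry out the case analysis showing every move falls into one of the four categories. So your write-up is a correct, more detailed version of the same argument.

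One small imprecision worth fixing: in your local-structure paragraph you say an input port is incident to ``one subdivided path,'' but in fact a single input port $s \in S_I$ can appear as the source of several critical pairs in $P_I$ (one for each vector in $W_I$), and hence can have several subdivided paths attached via $\phi$. This does not break your case split---the first edge of $\pi_j$ is either inside $G_I^{(1)}$ or along \emph{some} subdivided path at $u$, and the rest of the argument goes through verbatim---but the uniqueness claim itself is not right.
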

\begin{proof}
Let $s_1, s_2, \dots, s_k$ be the list of input ports contained in $\pi$, listed in their order in $\pi$. Note that $s_1 = s$ and $s_k = t$.
Each subpath $\pi[s_i, \dots, s_{i+1}]$ will have no input port as an internal vertex, and therefore will be a move $m_i$.
This move $m_i$ will be internally vertex-disjoint from all other moves $m_j$, where $i \neq j$, since $\pi$ is a simple path. It is immediate from the construction of $G$ that $\mathcal{M}$ is a partition of the set of moves in $G$. Then move $m_i$ will belong to some category of moves in $\mathcal{M}$. 
\end{proof}

Note that the canonical $(s, t)$-path $\pi^*$ specifically decomposes into a sequence of forward moves, each of which take a subdivided path corresponding to the canonical vector of $\pi^*$.
Our goal is to compare the length of $\pi^*$ to the length of an arbitrary $(s, t)$-path $\pi$. We will accomplish this by comparing the moves in the move decomposition of the two paths. We now identify some geometric notions corresponding to moves that will be useful in our analysis.

Fix a simple $(s, t)$-path $\pi$, and let $m_1, m_2, \dots, m_k$ be its move decomposition, where $m_i$ is a move from an input port $s_i$ in inner graph copy $G_I^{(i)}$ to an  input port $s_{i+1}$ in inner graph copy $G_I^{(i+1)}$, and $s_1 = s$ and $s_{k+1} = t$. If an inner graph copy $G_I^{(i)}$ replaces a vertex $v \in V_O$ in $G$, then we will let $\texttt{coord}(G_I^{(i)})$ be the vector in $\mathbb{Z}^2$ with the coordinates of $v$. We now define several geometric notions that will be essential to our analysis of path $\pi$  in $G$. 

\begin{definition}[Move vector]
The move vector $\vec{m_i}$ corresponding to  move $m_i$ is defined as
$$\vec{m_i} = \coord\left(G_I^{(i+1)} \right) - \coord\left(G_I^{(i)} \right).$$
\end{definition}

We may think of moves in the move decomposition as vectors in $\mathbb{Z}^2$ between vertices in $G_O$ (with vertices in $G_O$ corresponding to  inner graph copies $G_I$ in $G$). Now let  $\vec{v}^* \in W_O$ be the canonical vector corresponding to canonical $(s, t)$-path $\pi^*$. Corresponding to each move $m_i$, we define a move distance $d_i$.

\begin{definition}[Move distance]
The move distance $d_i$ corresponding to move vector $\vec{m}_i$ is defined as $d_i = \proj_{\vec{v}^*}\vec{m}_i$, that is, the (possibly negative) scalar projection of the vector $\vec{m}_i$ onto $\vec{v}^*$ in the standard Euclidean inner product.
\end{definition}
We roughly use $d_i$ as a measure of how much closer or farther we get to $t$ in $G$ when we take move $m_i$.
Besides the move distance $d_i$ of $m_i$, the other salient property is its length (number of edges) in the final graph, $|m_i|$.
We will be comparing the moves of a path $\pi$ against the moves of $\pi^*$, all of which have move distance $\|\vec{v}^*\|$ and the same path length. The following quantity will be useful for this purpose.


 \begin{definition}[Unit length of $\pi^*$]
 We define the unit length $L_{\pi^*}$ of $\pi^*$  as $L_{\pi^*} := \frac{|\pi^*|}{\|t - s\|}$. 
 \end{definition}

$L_{\pi^*}$ is the number of edges in $\pi^*$ per unit distance travelled in $\mathbb{Z}^2$. Using this quantity we can directly compare any move $m_i$ to the moves of $\pi^*$ via the following quantity.

\begin{definition}[Move length difference]
$\Delta(m_i) = |m_i| - L_{\pi^*}  d_i$
\end{definition}
The move length difference $\Delta(m_i)$ is the number of additional edges used by $m_i$ to travel distance $d_i$ in the direction $\vec{v}^*$, as compared with the same move in $\pi^*$.

\begin{proposition}
$\sum_i \Delta(m_i) = |\pi| - |\pi^*|$
\label{prop:move_length_sum}
\end{proposition}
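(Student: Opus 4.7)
The plan is to expand the sum $\sum_i \Delta(m_i)$ by linearity into two pieces and handle each separately. Writing
\[ \sum_i \Delta(m_i) \;=\; \sum_i |m_i| \;-\; L_{\pi^*} \sum_i d_i, \]
the first sum collapses immediately: since the move decomposition of $\pi$ partitions its edge set into the moves $m_1, \ldots, m_k$ (as established in the preceding proposition), we have $\sum_i |m_i| = |\pi|$.

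For the second sum, the key step is to telescope using the linearity of scalar projection onto the fixed direction $\vec{v}^*$. First I would write $d_i = \proj_{\vec{v}^*} \vec{m}_i$ and pull the projection outside to obtain
\[ \sum_i d_i \;=\; \proj_{\vec{v}^*}\!\left( \sum_i \vec{m}_i \right). \]
Then I would unfold the definition $\vec{m}_i = \coord(G_I^{(i+1)}) - \coord(G_I^{(i)})$ and note that the sum telescopes to $\coord(G_I^{(k+1)}) - \coord(G_I^{(1)})$. By how the critical pairs of $G$ are constructed, the endpoints $s$ and $t$ sit in the inner graph copies $G_I^{s_O}$ and $G_I^{t_O}$ respectively, so this telescoped difference equals $t_O - s_O$, which by definition of an outer canonical path is an integer multiple of $\vec{v}^*$ (namely $|\pi_O^{s_O,t_O}|\cdot \vec{v}^*$). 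Consequently the vector $t_O - s_O$ is parallel to $\vec{v}^*$, so its scalar projection onto $\vec{v}^*$ is just its norm, $\|t-s\|$ (interpreting $\|t-s\|$ via the outer-graph coordinates of the inner copies containing $s$ and $t$, which is the convention that matches the definition of $L_{\pi^*}$).

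Combining the two pieces and applying the definition $L_{\pi^*} = |\pi^*| / \|t-s\|$, the second term becomes $L_{\pi^*} \cdot \|t-s\| = |\pi^*|$, and the identity $\sum_i \Delta(m_i) = |\pi| - |\pi^*|$ follows. The only non-routine step is the telescoping observation coupled with the fact that the canonical outer path's displacement is parallel to $\vec{v}^*$; once one notices this, the projection onto $\vec{v}^*$ acts on the total displacement as a plain norm, which is exactly what is needed to cancel the $\|t-s\|$ in $L_{\pi^*}$. I do not anticipate a genuine obstacle here — this proposition is essentially a bookkeeping identity that justifies move length differences as a valid amortization scheme for comparing $|\pi|$ to $|\pi^*|$ in the subsequent error analysis.
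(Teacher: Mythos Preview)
Your proposal is correct and follows essentially the same approach as the paper's own proof: split the sum into $\sum_i |m_i|$ and $L_{\pi^*}\sum_i d_i$, use the move decomposition to get $|\pi|$ for the first part, and use linearity of projection plus the fact that the total outer-graph displacement is parallel to $\vec{v}^*$ to get $\sum_i d_i = \|t-s\|$, which cancels against $L_{\pi^*}$ to yield $|\pi^*|$. The paper's write-up is more terse (it does not spell out the telescoping or the parallelism explicitly), but the logical content is identical.
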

\begin{proof} We have:
\begin{gather*}
\sum_i \Delta(m_i) = \sum_i |m_i| -  L_{\pi^*} \sum_i  d_i  = |\pi| -   \frac{|\pi^*|}{\|t - s\|} \cdot \sum_i d_i  =  |\pi| - |\pi^*|
\end{gather*}
The final equality follows from the fact that $\sum_i d_i = \sum_i \proj_{\vec{v}^*} \vec{m}_i = \proj_{\vec{v}^*}  (t - s)  =  \|t - s\|$, since $\pi$ is an $(s, t)$-path and $\vec{v}^*$ is the canonical vector of $(s, t) \in P$. 
\end{proof}

Proposition \ref{prop:move_length_sum} gives us a way to compare $|\pi|$ and $|\pi^*|$ at the level of individual moves. If we could show that for all moves $m$, $\Delta(m) \geq 0$, then we would be a lot closer to our current goal of proving a separation between $|\pi|$ and $|\pi^*|$. (Roughly speaking, the inequality $\Delta(m) \geq 0$ was immediate in prior constructions.) 
Unfortunately, this is not generally true in our construction.
Because our inner graph is unlayered, it's possible that the canonical inner graph path used by $\pi^*$ in  copies of inner graph $G_I$ is much longer than a different path connecting some input port to some output port in $G_I$, which might be used in an alternate move $m_i$.
This would result in negative $\Delta(m_i)$. 

We will outline our fix here, although some technical details are pushed to later in the argument where they are used.
We will use an amortized version of move difference, based on a potential function $\Phi : S_I \mapsto \rr_{\ge 0}$ that we call the \emph{inner graph potential}.
Note that the input to $\Phi$ is an input port in the original inner graph; we will evaluate $\Phi$ on input ports in various inner graph copies in the final graph, and so if $a, b$ represent the same input port in two different inner graph copies, we must have $\Phi(a)=\Phi(b)$.
We will specify $\Phi$ in Section \ref{sec:analysis_2}.
With this, we can define an amortized version of move difference, $\widehat{\Delta}(m_i)$.

\begin{definition}[Amortized move difference]
$\widehat{\Delta}(m_i) := \Delta(m_i) - \left(\Phi\left(s_{i+1}\right) - \Phi\left(s_i\right)\right)$ 
\end{definition}

The following proposition shows that the amortized move difference still captures the difference between $|\pi|$ and $|\pi^*|$. 

\begin{proposition}
$\sum_i\widehat{\Delta}(m_i) = |\pi| - |\pi^*| $ 
\label{prop:amort_diff}
\end{proposition}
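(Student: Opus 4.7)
The plan is to reduce the claim to Proposition \ref{prop:move_length_sum} by observing that the potential-difference terms telescope. Starting from the definition, I would write
\[
\sum_i \widehat{\Delta}(m_i) \;=\; \sum_i \Delta(m_i) \;-\; \sum_i \bigl(\Phi(s_{i+1}) - \Phi(s_i)\bigr),
\]
and immediately handle the second sum by telescoping to $\Phi(s_{k+1}) - \Phi(s_1) = \Phi(t) - \Phi(s)$. The first sum is $|\pi| - |\pi^*|$ by Proposition \ref{prop:move_length_sum}, so everything comes down to showing $\Phi(t) - \Phi(s) = 0$.

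The key observation that makes $\Phi(t) = \Phi(s)$ is the specific choice in the construction that both endpoints of a critical pair $(s, t) \in P$ are placed at the \emph{same} input port of their respective inner graph copies. Recall the construction of $P$: for each outer critical pair $(s_O, t_O) \in P_O$ with $\phi(\vec v_O) = (s_I, t_I)$, the final graph's critical pair consists of the vertex $s_I$ inside $G_I^{s_O}$ and the vertex $s_I$ inside $G_I^{t_O}$ (the footnote explicitly notes this choice). Thus $s$ and $t$ correspond to the same input port label $s_I \in S_I$, just living in different inner graph copies. Since $\Phi$ is defined as a function on $S_I$ (input ports of the original inner graph) and we evaluate it on input ports of copies via their identification with $S_I$, the consistency requirement $\Phi(a)=\Phi(b)$ whenever $a, b$ represent the same input port gives $\Phi(s) = \Phi(s_I) = \Phi(t)$.

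Combining, the telescoping sum vanishes, yielding
\[
\sum_i \widehat{\Delta}(m_i) \;=\; \sum_i \Delta(m_i) \;-\; \bigl(\Phi(t) - \Phi(s)\bigr) \;=\; \sum_i \Delta(m_i) \;=\; |\pi| - |\pi^*|,
\]
as desired. There is no real obstacle here beyond correctly invoking the construction's placement convention for critical pairs; the content of the proposition is essentially that the potential function, being defined on input-port labels and evaluated at two endpoints sharing the same label, contributes nothing on net to the move-by-move accounting.
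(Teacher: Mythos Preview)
Your proof is correct and follows essentially the same approach as the paper: expand the definition, telescope the potential differences to $\Phi(t)-\Phi(s)$, apply Proposition~\ref{prop:move_length_sum} to the $\Delta$-sum, and use the construction's convention that $s$ and $t$ are the same input port $s_I$ in their respective inner graph copies to conclude $\Phi(t)=\Phi(s)$.
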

\begin{proof}
We have:
$$\sum_i\widehat{\Delta}(m_i) =  \sum_i \Delta(m_i) - \sum_i(\Phi(s_{i+1}) - \Phi(s_i)) = |\pi| - |\pi^*| - (\Phi(t) - \Phi(s)) = |\pi| - |\pi^*|.$$
The final equality follows from the fact that  $s$ and $t$ have the same coordinates in their respective inner graph copies as specified in the definition of $P$. 
\end{proof}

In Section \ref{sec:analysis_2}, we will see that  $\widehat{\Delta}(m) \geq 0$ for every move $m$. Then using Proposition \ref{prop:amort_diff} and the unique shortest path property of the base graph $G_B$ (see Lemma \ref{lem:base_graph}), we will obtain our desired separation between $|\pi|$ and $|\pi^*|$.

\section{Specifying the Construction \label{sec:specifying}}

In the following notation, we will let $n_I = |V_I|$ and $n_O = |V_O|$.
We also use a parameter $c > 0$; roughly one may think of $c$ as a large constant, and we will assume where convenient that $c$ is \emph{at least} a large enough constant.
But,  we do not hide $f(c)$ factors in our big-O notation unless explicitly noted with $O_c$ notation.
Our goal is to argue that  for our final graph $G$ and for any subgraph $H \subseteq G$ with $\leq cn$ edges, there exists a pair of vertices $u, v \in V(G)$ such that
$$d_H(u, v) > d_G(u, v) + \Omega_c\left(n^{1/7}\right).$$
This is a rephrasing of the statement that $O(n)$-size spanners generally require $\Omega(n^{1/7})$ error.
We may also assume where convenient that $n_I, n_O$ are sufficiently large, relative to $c$.
We do not make any effort to optimize the dependence of our lower bound on $c$; to do so would imply stronger lower bounds against denser spanners, but it introduces considerable technical complexity that we do not think is worth it.
See \cite{BV16} for discussion of these optimizations.

\subsection{Specifying the Inner Graph}


We next specify the parameters $x_I, y_I, r_I, W_I$ used to construct the inner graph $G_I = (V_I, E_I)$.
Relative to a choice of $n_I$ for the \emph{total} number of nodes in the inner graph, we use dimensions
$$x_I = 2^{-1}n_I^{1/2}\cdot c^{50} \qquad \text{and} \qquad y_I = 2n_I^{1/2} \cdot c^{-50}$$
and so $x_I y_I = n_I$. 
We set $r_I = c^{102}$; note that by choice of large enough $n_I$ we have $r_I \leq x_I/4$.
We then define $W_I$ to be the set of $c$ vectors
$$\left\{\left(r_I - c + i, \sum_{j=i}^c j\right) \mid i \in [1, c] \right\}.$$
Strong convexity of $W_I$ follows from the fact that the function $f:\mathbb{Z} \mapsto \mathbb{Z}$ defined as $f(x) = \sum_{j=x}^c j$ is positive and strictly concave on the interval $[1, c]$.
We also notice that for all $\vec{v} \in W_I$, the first coordinate of $\vec{v}$ is in the range $[r_I/2, r_I]$. 
Let $\psi_I$ be the largest angle between a vector in $W_I$ and the horizontal.
Observe that by taking $c$ to be sufficiently large, we have
$$\tan \psi_I \leq \left(r_I - c\right)^{-1} \cdot \sum_{j=1}^cj \leq 2c^{-100} \le \frac{y_I x_I^{-1}}{2}.$$
Thus this inner graph construction satisfies the premises of the base graph, and we have proved:
\begin{lemma}[Inner Graph Strongly Convex Set]
\label{lemma:inner_graph_SCS}
The set $W_I(r_I, \psi_I)$ has the following properties:
\begin{enumerate}
    \item $|W_I| = c$
    \item For all $\vec{v} = (v_1, v_2) \in W_I$, $r_I/2 \leq v_1 \leq r_I$. Specifically, $W_I \subseteq [r_I - c, r_I] \times [0, c^2]$
    \item $0 \leq \tan \psi_I \leq 2c^{-100}$. 
\end{enumerate}
\end{lemma}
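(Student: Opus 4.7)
The plan is to verify each of the three numbered claims by direct computation from the explicit definition $W_I = \{(r_I - c + i,\, \sum_{j=i}^{c} j) : i \in [1, c]\}$, recalling that $r_I = c^{102}$ and that we may assume $c$ is at least a sufficiently large constant. Each claim is a bookkeeping statement about the chosen parameters, so I expect no real obstacles beyond checking that the inequalities involving $c$ are satisfied.

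For (1), the first coordinates $r_I - c + i$ for $i = 1, 2, \dots, c$ are $c$ distinct integers, so the indexing map $i \mapsto \vec{v}_i$ is injective and $|W_I| = c$. For the bounds in (2), for any $\vec{v} = (v_1, v_2) \in W_I$ obtained from index $i \in [1,c]$, I would note directly that $v_1 = r_I - c + i \in [r_I - c + 1,\, r_I] \subseteq [r_I - c,\, r_I]$; the lower bound $v_1 \geq r_I/2$ then reduces to $r_I - c + 1 \geq r_I/2$, which follows from $r_I = c^{102} \gg 2c$. For the second coordinate, $v_2 = \sum_{j=i}^c j$ lies in $[0,\, c(c+1)/2] \subseteq [0,\, c^2]$, establishing the inclusion $W_I \subseteq [r_I - c, r_I] \times [0, c^2]$.

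For (3), since every vector in $W_I$ sits in the first quadrant with strictly positive first coordinate, $\tan \psi_I = \max_{\vec{v} \in W_I} v_2/v_1$. The ratio $v_2/v_1 = (\sum_{j=i}^c j)/(r_I - c + i)$ is monotonically decreasing in $i$, since the numerator shrinks and the denominator grows as $i$ increases, so the maximum occurs at $i = 1$. Plugging in yields
\[
\tan \psi_I \;\leq\; \frac{\sum_{j=1}^c j}{r_I - c + 1} \;\leq\; \frac{c^2}{r_I - c}.
\]
Since $r_I = c^{102}$, for $c$ sufficiently large we have $r_I - c \geq r_I/2$, yielding $\tan \psi_I \leq 2c^2/r_I = 2c^{-100}$ as claimed. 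The non-negativity $0 \leq \tan \psi_I$ is immediate from the positivity of both $v_1$ and $v_2$.

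Since every step is a direct computation, there is no real technical obstacle; if anything, the hard part is only pedantic care in confirming that each "$c$ sufficiently large" threshold invoked along the way (for inequalities like $r_I/2 \leq r_I - c + 1$ or $r_I - c \geq r_I/2$) is actually met by the choice $r_I = c^{102}$. The strong convexity property, which is the genuinely substantive input to the lemma's role in the larger argument, is not part of the claim itself and is addressed by the surrounding discussion.
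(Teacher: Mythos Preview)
Your proposal is correct and follows essentially the same approach as the paper: the paper proves this lemma by the inline discussion immediately preceding its statement, verifying each item by direct computation from the explicit form of $W_I$ and the choice $r_I = c^{102}$, exactly as you do. Your argument is slightly more detailed (e.g., explicitly noting monotonicity of $v_2/v_1$ in $i$), but the content is the same.
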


The next lemma is a key structural property of our inner graph that enables its analysis in our move decomposition framework.
To enable this lemma, we need to specifically choose $n_I$ such that, for each canonical vector $(v_1, v_2) \in W_I$, we have $v_1 \mid (x_I - r_I/2)$.
In particular, this can be accomplished by defining $\lambda := \Pi_{i=1}^c (r_I - c + i)$ and choosing $n_I$ so that $x_I \equiv r_I/2 \mod \lambda$. Since $\lambda = \Theta_c(1)$, there are infinitely many choices of $n_I$ that satisfy this.
\begin{proposition}
For all $((s_1, s_2), (t_1, t_2)) \in P_I$, we have $t_1 - s_1 = x_I - r_I/2$. 
\label{prop:P_I_x_diff}
\end{proposition}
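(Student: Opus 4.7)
The plan is to unpack the definition of the critical-pair set $P_I$ from the base-graph construction and show that the largest integer $k$ with $s + k\vec v \in V_I$ is always determined by the horizontal constraint, after which the divisibility of $x_I - r_I/2$ by $v_1$ forces the answer.

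First, I would recall that each pair in $P_I$ arises from some $s = (s_1, s_2) \in S_I = [1, r_I/2] \times [1, y_I/2]$ and some $\vec v = (v_1, v_2) \in W_I$, with $t = s + k\vec v$ where $k$ is the largest integer keeping $t$ inside $V_I = [1, x_I] \times [1, y_I]$. From the proof of the existence lemma preceding Lemma~\ref{lem:basegraph}, this $k$ is exactly $\lfloor (x_I - s_1)/v_1 \rfloor$ (the vertical constraint is slack under the angle bound $\tan \psi_I \le y_I x_I^{-1}/2$ that Lemma~\ref{lemma:inner_graph_SCS} guarantees). So $t_1 - s_1 = v_1 \cdot \lfloor (x_I - s_1)/v_1 \rfloor$, and the task reduces to showing this floor evaluates to $(x_I - r_I/2)/v_1$.

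Next I would use the two size facts about $W_I$. By Lemma~\ref{lemma:inner_graph_SCS}, every $v_1$ satisfies $r_I - c \le v_1 \le r_I$, so for $c$ sufficiently large we have $v_1 > r_I/2 > r_I/2 - s_1 \ge 0$ (since $s_1 \in [1, r_I/2]$). Write
\[
\frac{x_I - s_1}{v_1} = \frac{x_I - r_I/2}{v_1} + \frac{r_I/2 - s_1}{v_1}.
\]
The second summand lies in $[0, 1)$ by the bound just noted. The first summand is an integer by the explicit choice of $n_I$ (hence of $x_I$) made just before the proposition: $x_I \equiv r_I/2 \pmod{\lambda}$ for $\lambda = \prod_{i=1}^c (r_I - c + i)$, and since every $v_1 \in \{r_I - c + i : i \in [1,c]\}$ divides $\lambda$, it divides $x_I - r_I/2$. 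Therefore $\lfloor (x_I - s_1)/v_1 \rfloor = (x_I - r_I/2)/v_1$ and
\[
t_1 - s_1 = v_1 \cdot \frac{x_I - r_I/2}{v_1} = x_I - r_I/2,
\]
which is the claim.

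The only delicate step is verifying that $k$ really is controlled by the horizontal constraint rather than the vertical one — but this is already handled by the earlier well-definedness lemma, which bounds $\tan \psi_I$, together with the third part of Lemma~\ref{lemma:inner_graph_SCS}. So no new geometric work is needed; the proposition is essentially a number-theoretic consequence of the alignment between $x_I$, $r_I$, and the first coordinates of the vectors in $W_I$ that was baked into the construction.
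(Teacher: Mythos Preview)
Your proof is correct and is essentially the same as the paper's. Both arguments hinge on the two facts built into the construction: the divisibility $v_1 \mid (x_I - r_I/2)$ and the inequality $v_1 > r_I/2 \ge r_I/2 - s_1$; you phrase the conclusion as a floor computation, while the paper instead names the candidate $k = (x_I - r_I/2)/v_1$ and checks directly that $s + k\vec v \in T_I$ but $s + (k+1)\vec v \notin V_I$.
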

\begin{proof}
Let $\vec{v} = (v_1, v_2)$ be the canonical vector for $((s_1, s_2), (t_1, t_2)) \in P_I$.
By choice of $n_I$ we have that $v_1 \mid (x_I - r_I/2)$.
Now let $k = (x_I - r_I/2) / v_1$, and let $t' = (s_1, s_2) + k\vec{v}$. Observe that
$$t' = (s_1, s_2) + k \vec{v} = (s_1 + kv_1, s_2 + kv_2) = \left(s_1 + x_I - r_I/2,  s_2 + v_2/v_1 \cdot (x_I - r_I/2)\right).$$ 
Since $s \in S_I$ we have $s_1 \in [1, r_I/2]$, and so $t'$ is a vertex in $[x_I - r_I/2 + 1, x_I] \times [1, y_I] \subseteq T_I$.
We then note that $v_1 \geq r_I/2$ by property 2 of Lemma \ref{lemma:inner_graph_SCS}.
So the first coordinate of $t' + \vec{v}$ is $>s_1 + x_I$, and thus $t' + \vec{v} \not \in T_I$.
It follows that $t' = (t_1, t_2)$, and so $t_1 = s_1 + x_I - r_I/2$ as desired.
\end{proof}

 Proposition \ref{prop:P_I_x_diff} guarantees that all critical pairs $(s, t) \in P_I$ have the same horizontal displacement.
Note that this doesn't imply that the \emph{graph} distances between critical pairs in $G_I$ is the same. 
Applying Lemma \ref{lem:base_graph} to our construction, we summarize the following properties of the inner graph:

\begin{lemma}[Inner Graph Properties]
Inner graph $G_I = (V_I, E_I)$ has the following properties:
\begin{enumerate}
    \item $|V_I| = n_I$
    \item $|E_I| = \Theta(c \cdot n_I)$
    \item $|P_I| = \Theta( c^{53} \cdot n_I^{1/2})$
    \item The canonical paths $\pi_I^{s, t}$ for $(s, t) \in P_I$ are pairwise edge-disjoint.
    \item The canonical path $\pi_I^{s, t}$ is the unique shortest $(s, t)$-path in $G_I$ for all $(s, t) \in P_I$. 
\end{enumerate}
\end{lemma}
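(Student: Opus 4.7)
The plan is to observe that the inner graph $G_I$ is simply an instantiation of the base graph $G_B$ with the specific parameters $x_I, y_I, r_I, W_I$ fixed in this section, so the lemma is obtained by plugging those values into Lemma \ref{lem:base_graph} (together with the premises already verified for these parameters). The two premises needed to invoke Lemma \ref{lem:base_graph} — namely $r_I \le x_I/4$, which holds by choice of $n_I$ sufficiently large, and $\tan \psi_I \le y_I x_I^{-1}/2$ together with $0 \le \psi_I \le \pi/4$, which are contained in Lemma \ref{lemma:inner_graph_SCS} — have already been checked, so there is nothing to redo.

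For property 1 I would just compute $|V_I| = x_I y_I = (2^{-1} n_I^{1/2} c^{50})(2 n_I^{1/2} c^{-50}) = n_I$, matching bullet 1 of Lemma \ref{lem:base_graph}. For property 2 I would use bullet 4 of Lemma \ref{lem:base_graph} to get $|E_I| = \Theta(x_I y_I \cdot |W_I|) = \Theta(c \cdot n_I)$, where $|W_I| = c$ comes from Lemma \ref{lemma:inner_graph_SCS}. For property 3 I would use bullet 2 of Lemma \ref{lem:base_graph} to get $|P_I| = \Theta(r_I y_I \cdot |W_I|) = \Theta(c^{102} \cdot n_I^{1/2} c^{-50} \cdot c) = \Theta(c^{53} n_I^{1/2})$. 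Property 4 (pairwise edge-disjointness) is exactly bullet 3 of Lemma \ref{lem:base_graph}.

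Property 5 is the only one that requires a small remark beyond ``plug in the parameters'': bullet 5 of Lemma \ref{lem:base_graph} only applies once strong convexity of $W_I(r_I, \psi_I)$ is confirmed. I would cite the observation already in the text that the map $i \mapsto \sum_{j=i}^c j$ is a strictly concave positive function of $i$ on $[1,c]$, which together with the fact that the first coordinates of the vectors in $W_I$ are the distinct integers $r_I - c + i$ implies that the points $\{\vec{v}_i\}_{i=1}^c$ are in strictly convex position; the full definition of strong convexity (no nontrivial representation $\vec{v}_0 = \sum \lambda_i \vec{v}_i$ with $\sum |\lambda_i| \le 1$) follows by a standard argument, since any such representation would force the $y$-coordinate to violate strict concavity.

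Overall this is essentially a bookkeeping lemma, and I do not anticipate any genuine obstacle: all the non-trivial content (strong convexity, the angle bound, the $r_I \le x_I/4$ bound, and the divisibility that makes Proposition \ref{prop:P_I_x_diff} work) has already been discharged earlier in the section. The only mildly delicate step is verifying strong convexity of $W_I$ rigorously from strict concavity, but this is a routine linear-algebra consequence of the points lying in strictly convex position and can be handled in a few lines.
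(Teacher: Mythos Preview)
Your proposal is correct and matches the paper's approach exactly: the paper states this lemma without a separate proof, introducing it with ``Applying Lemma \ref{lem:base_graph} to our construction, we summarize the following properties of the inner graph,'' and your explicit parameter substitutions reproduce precisely that. The extra remark you give on strong convexity of $W_I$ is likewise just a restatement of what the paper asserts in the text preceding the lemma.
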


\subsection{Specifying the Outer Graph}

In our specification of the the outer graph $G_O$, we will make use of the existence of dense strongly convex sets of integer vectors in $\mathbb{Z}^2$ with certain nice properties. The existence of these sets follows directly from the work of \cite{BL98} on the size of the convex hull of integer points inside a disk of radius $r$; however, we need to tweak the construction slightly to enforce a few extra convenient properties.
Most of the following lemma follows directly from \cite{BL98}, but for completeness we give a proof in Appendix \ref{app:CIS}.

\begin{lemma}
\label{lem:CIS}
For sufficiently large $r$, there exists a strongly convex set $W(r)$ of integer vectors in $\mathbb{Z}^2$  of size $\Theta(r^{2/3})$ such that:
\begin{enumerate}
    \item For all $\vec{v} \in W(r)$, $r - r^{-1/3} \leq \|\vec{v}\| \leq r$. 
    \item If $S$ is a  sector with inner angle $\psi$
    of the circle of radius $r$ centered at the origin, then there are at most $O(\psi \cdot r^{2/3})$ vectors in $W(r) \cap S$.
    \item For all distinct $\vec{u}, \vec{v} \in W(r)$, $\proj_{\vec{v}}\vec{u}  < \|\vec{v}\|$.
    (In other words, for all vectors $\vec{v} \in W(r)$, the vector in $W(r)$ with the largest magnitude scalar projection in the direction of $\vec{v}$ is $\vec{v}$ itself.  This implies strong convexity, but is in fact a bit stronger.)
\end{enumerate}
\end{lemma}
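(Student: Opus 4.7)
The plan is to start from the construction underlying \cite{BL98}: the convex hull of the integer lattice points inside the disk $B(0,r)$ has $\Theta(r^{2/3})$ vertices, all lying in a thin annulus of width $O(r^{-1/3})$ just inside the boundary circle. Let $W'$ denote this vertex set. I would then apply a few light modifications to $W'$ to enforce the three listed properties, while carrying strong convexity along for free: any subset of a strongly convex set is strongly convex, since a putative nontrivial equation over the subset would also be a nontrivial equation over $W'$.

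For property 1, I would discard any vector of $W'$ whose norm falls below $r - r^{-1/3}$; because $W'$ already lives in an annulus of width $O(r^{-1/3})$, this leaves $\Theta(r^{2/3})$ vectors, all with norm in $[r - r^{-1/3}, r]$. For property 2, I would invoke a rescaled version of the \cite{BL98} bound applied to a sector: the number of vertices of the lattice convex hull falling inside a sector of angle $\psi$ is $O(\psi \cdot r^{2/3})$, for instance by observing that vertices are approximately equidistributed around the circle and applying an arc-length counting argument, or by directly applying the \cite{BL98} bound to the associated circular arc.

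The main obstacle is property 3, which is strictly stronger than strong convexity: it requires that each $\vec{v} \in W(r)$ be the \emph{unique} maximizer of the functional $\vec{w} \mapsto \vec{w} \cdot \vec{v}$ over $W(r)$. For distinct $\vec{u}, \vec{v}$ of norms in $[r - r^{-1/3}, r]$ at angular separation $\theta$, we have
\[
\vec{u} \cdot \vec{v} \;\le\; \|\vec{u}\|\,\|\vec{v}\|\cos\theta \;\le\; r\,\|\vec{v}\|\cos\theta,
\]
so property 3 holds as soon as $\cos\theta < \|\vec{v}\|/r$, which (via a Taylor expansion of $\cos$ together with $\|\vec{v}\| \ge r - r^{-1/3}$) is guaranteed by $\theta = \Omega(r^{-2/3})$. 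To enforce this spacing, I would perform one final greedy thinning pass: sort the surviving vectors by angle and retain a single representative per angular bucket of width $c_0 r^{-2/3}$, for a sufficiently large constant $c_0$. By property 2 each bucket originally contained only $O(1)$ vectors, so this thinning loses only a constant factor and preserves the $\Theta(r^{2/3})$ size.

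I expect property 3 to be the main point of delicacy, because it requires the angular thinning still to leave $\Omega(r^{2/3})$ vectors standing; the specific annular thickness $r^{-1/3}$ chosen in property 1 is calibrated precisely so that the forced angular separation $\Omega(r^{-2/3})$ is compatible with an overall count of $\Theta(r^{2/3})$ vectors. The remaining verifications — size preservation across the successive passes, and the descent of strong convexity from $W'$ to the final filtered set — are routine.
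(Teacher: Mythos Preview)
Your overall plan matches the paper's: start from the Balog--B\'ar\'any lattice convex hull vertex set, then filter successively for properties 1, 2, and 3, carrying strong convexity through as a hereditary property of subsets. Your handling of property 3 by angular thinning is in fact a little cleaner than the paper's iterative min-length selection, since a uniform angular gap of $\Omega(r^{-2/3})$ makes the inequality $\proj_{\vec{v}}\vec{u} < \|\vec{v}\|$ hold symmetrically for every ordered pair and no ordering trick is needed. (One small wrinkle: representatives of \emph{adjacent} angular buckets can sit arbitrarily close together, so you should either thin greedily by angle instead, or double the bucket width and keep e.g.\ only even buckets.)

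The real soft spot is property 2. Neither of your two suggested mechanisms yields the linear-in-$\psi$ bound $O(\psi \cdot r^{2/3})$: ``vertices are approximately equidistributed'' is precisely the statement to be proved, and applying an Andrews/BL98-type convex-lattice-polygon bound to a sector of area $\Theta(\psi r^2)$ gives only $O(\psi^{1/3} r^{2/3})$ vertices, which is too weak. The paper gets the linear bound through an additional filtering step you are missing: order the surviving vectors $\vec{v}_1, \dots, \vec{v}_k$ counterclockwise, discard any $\vec{v}_i$ whose successor gap $\vec{v}_{i+1} - \vec{v}_i$ has norm below $\Theta(r^{1/3})$ (the gap vectors are distinct integer vectors, so at most $O(r^{2/3})$ of them can be this short, losing only a constant fraction), and then bound the number of remaining vectors in any sector by comparing the length of the convex chain through them --- now at least $\Theta(r^{1/3})$ per step --- against the enclosing arc of length $O(\psi r)$. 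Since your property-3 thinning relies on property 2 to certify that each angular bucket contains $O(1)$ vectors, this gap propagates forward and needs to be patched.
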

\begin{proof}
Deferred to Appendix \ref{app:CIS}.
\end{proof}

Now we use the convex set $W(r)$ constructed in Lemma \ref{lem:CIS} as a starting point for our construction of $W_O(r_O, \psi_O)$.

\paragraph{Strongly convex set $W_O(r_O, \psi_O)$ for $G_O$.}  
Let $W(r_O)$ denote the strongly convex set of Lemma \ref{lem:CIS} with input parameter $r_O$. By Lemma \ref{lem:CIS},  there exists a circular sector  with inner angle $\psi_1 = c^{-5}$ radians that contains $\Omega\left(\psi_1 r_O^{2/3}\right)$ vectors from $W(r_O)$. By choice of sufficiently large $c$, we may assume $\psi_1 \leq \pi / 4$.  We let $W'(r_O)$ be the set of vectors in $W(r_O)$ with endpoints in this sector. 

We will take parameter $\psi_O = \pi/4$, i.e. we will modify $W'(r_O)$ so that its vectors lie in the first quadrant and have angle at most $\psi_O$ with the horizontal. Note that if we reflect the vectors of $W(r_O)$ across the lines $x = 0$, $y=0$,  $y=x$, or $y = -x$ in $\mathbb{R}^2$, then the resulting set of vectors $W'(r_O)$ also satisfies the properties of Lemma \ref{lem:CIS}. By performing these reflection operations a constant number of times, we can ensure that at least half the vectors in $W'(r_O)$ lie in the first quadrant and have angle at most $\psi_O$ with the horizontal.  We let $W''(r_O)$ denote the resulting set of vectors with maximum angle $\psi_O$ to the horizontal; we thus have $|W''(r_O)| = \Omega(c^{-5}r_O^{2/3})$.

We are now ready to construct our strongly convex set $W_O$ from $W''(r_O)$. 
Our set $W_O$ will be partitioned into $c$ disjoint sets $\mathcal{S}_1, \mathcal{S}_2, \dots, \mathcal{S}_c$ called \textit{stripes}, which have the following two properties:
\begin{itemize}
    \item (Same Size) All stripes contain the same number of vectors $|\mathcal{S}_i| =: \beta = \Theta(c^{-6} r_O^{2/3})$ vectors. 
    \item (Well-Separated) For any two distinct stripes $\mathcal{S}_i, \mathcal{S}_j$ and for vectors $\vec{u} \in \mathcal{S}_i, \vec{v} \in \mathcal{S}_j$, the angle between $\vec{u}, \vec{v}$ is at least $c^{-10}$. 
\end{itemize}
We construct our stripes as follows.
Starting at the horizontal and rotating counterclockwise, let $\vec{u}_1^1, \vec{u}_2^1, \dots, \vec{u}_{\beta}^1$ be  the first $\beta$ vectors we encounter in $W''(r_O)$. 
After encountering the $\beta$th vector, we rotate $c^{-10}$ radians about the origin counterclockwise, ignoring any vectors encountered in this arc.
By Lemma \ref{lem:CIS}, this will skip over only $O(c^{-10} r_O^{2/3})$ vectors in $W''(r_O)$.
We then take the next $\beta$ vectors we encounter to be  $\vec{u}_1^2, \vec{u}_2^2, \dots, \vec{u}_{\beta}^2$, and again rotate $c^{-10}$ radians counterclockwise.  We repeat this process until we've obtained vectors $\vec{u}_i^j$ for $i \in [1, \beta]$ and $j \in [1, c]$. 
 We call the set of of vectors $\{ \vec{u}_i^j \mid i \in [1, \beta]\}$ the $j$th stripe $\mathcal{S}_j$ of $W_O$.
Our procedure is guaranteed to identify all $c\beta$ vectors $\vec{u}_i^j$, since we skip over only $O(c^{-9}  r_O^{2/3})$ vectors from $W''(r_O)$ in our construction.
Note that $|W_O| = c\beta = \Theta(r_O^{2/3}c^{-5})$.  
We summarize our properties:

\begin{lemma} [Outer Graph Strongly Convex Set]
\label{lem:outer_graph_SCS}
For sufficiently large parameters $c, r_O$ we may construct a strongly convex set $W_O = W_O(r_O, \psi_O)$ containing $\Theta(r_O^{2/3}c^{-5})$ integer vectors in $\mathbb{Z}^2$ with the following properties:
\begin{enumerate}
    \item For all $\vec{v} \in W_O$, $r_O - r_O^{-1/3} \leq \|\vec{v}\| \leq r_O$. 
    \item For all $\vec{u}, \vec{v} \in W_O$, $\proj_{\vec{v}}\vec{u}  \leq \|\vec{v}\|$.
    \item For all $\vec{v} \in W_O$, the angle between vector $\vec{v}$ and the horizontal is at most $\psi_O = \pi/4$  radians.
    \item For all $\vec{u}, \vec{v} \in W_O$, the angle between $\vec{u}$ and $\vec{v}$ is at most $\psi_1 = c^{-5}$ radians.
    \item $W_O$ can be partitioned into $c$ stripes $\mathcal{S}_1, \mathcal{S}_2, \dots, \mathcal{S}_c$ with the following properties:
    \begin{enumerate}
        \item $|\mathcal{S}_i| = \beta =  \Theta(r_O^{2/3}c^{-6})$ for all $i \in [1, c]$. 
        \item For $\vec{u} \in \mathcal{S}_i$ and $\vec{v} \in \mathcal{S}_j$ where $i \neq j$, the angle between $\vec{u}$ and $\vec{v}$ is at least $\psi_2 = c^{-10}$ radians. 
    \end{enumerate}
\end{enumerate}
\end{lemma}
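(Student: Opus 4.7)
The plan is to follow the construction sketched in the discussion preceding the lemma, instantiating each step carefully and verifying that each of the five listed properties is preserved. I would start from $W(r_O)$ given by Lemma \ref{lem:CIS} and pass it through three filtering steps: extract a dense angular sector, normalize it to the first quadrant via reflections, and greedily carve out the stripes.

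First I would extract a dense sector. Since $|W(r_O)| = \Theta(r_O^{2/3})$ and the full $2\pi$ range of directions is tiled by $O(c^5)$ sectors of angular width $\psi_1 = c^{-5}$, an averaging argument produces some such sector containing $\Omega(c^{-5}\, r_O^{2/3})$ vectors of $W(r_O)$; call this subset $W'(r_O)$. Next, I would reflect $W'(r_O)$ across a constant-size subset of the lines $\{x=0,\ y=0,\ y=x,\ y=-x\}$ to place all vectors inside the wedge $\{(x,y) : 0 \leq y \leq x\}$; these four lines partition the plane into eight sectors of angular width $\pi/4$, so a constant number of reflections suffices. Each reflection is a linear isometry, so it preserves magnitudes, inner products, angles between pairs of vectors, and strong convexity. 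The resulting set $W''(r_O)$ therefore still satisfies all conclusions of Lemma \ref{lem:CIS}, has size $\Omega(c^{-5}\, r_O^{2/3})$, and lies in the first quadrant at angle at most $\pi/4$ to the horizontal.

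Then I would carve out the stripes. Process the vectors of $W''(r_O)$ in counterclockwise angular order starting from the horizontal: collect the first $\beta = \Theta(c^{-6}\, r_O^{2/3})$ vectors into $\mathcal{S}_1$; skip every vector lying in the next arc of angular width $c^{-10}$; collect the next $\beta$ vectors into $\mathcal{S}_2$; and so on until $c$ stripes have been formed. By property 2 of Lemma \ref{lem:CIS}, each skip arc discards only $O(c^{-10}\, r_O^{2/3})$ vectors, so the total discarded across the $c$ skips is $O(c^{-9}\, r_O^{2/3})$, which is a vanishing fraction of $|W''(r_O)|$ once $c$ is large enough; hence the process successfully extracts $c\beta = \Theta(c^{-5}\, r_O^{2/3})$ vectors. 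Define $W_O := \mathcal{S}_1 \cup \cdots \cup \mathcal{S}_c$.

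Finally I would verify the five properties. Property 1 (magnitudes) is inherited from Lemma \ref{lem:CIS} since $W_O \subseteq W''(r_O)$ and reflections preserve length. Property 2 (projection) follows from property 3 of Lemma \ref{lem:CIS}, which is preserved by reflections and by taking subsets; for $\vec{u}=\vec{v}$ we get equality, and for distinct vectors we inherit the strict inequality, which in particular implies strong convexity of $W_O$. Property 3 is enforced by the reflection step. Property 4 holds because all of $W_O$ lives inside an angular sector of width $\psi_1 = c^{-5}$. Property 5(a) is immediate by construction, and property 5(b) follows from the $c^{-10}$-radian skip inserted between consecutive stripes. The main obstacle in this plan is simply the bookkeeping: one must confirm that $|W''(r_O)|$ is large enough to accommodate $c$ stripes of size $\Theta(c^{-6}\, r_O^{2/3})$ after discarding the skipped arcs, which comes down to a comparison of the constants $c^{-5}$, $c^{-6}$, and $c^{-9}$ and is routine for sufficiently large $c$.
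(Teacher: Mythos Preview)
Your proposal is correct and follows essentially the same construction as the paper: extract a dense $c^{-5}$-sector from $W(r_O)$ by averaging, reflect into the first octant, then greedily carve out $c$ stripes of size $\Theta(c^{-6}r_O^{2/3})$ separated by $c^{-10}$-radian gaps, bounding the discarded vectors via property~2 of Lemma~\ref{lem:CIS}. One small imprecision: if the chosen $c^{-5}$-sector happens to straddle an octant boundary, no composition of the four axis reflections will map \emph{all} of it into the target wedge, so you should either take the half lying on one side (as the paper does) or select a non-straddling sector in the averaging step; either fix is immediate and does not affect the argument.
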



\begin{figure}[h]
    \centering   \includegraphics[scale=0.5 ]{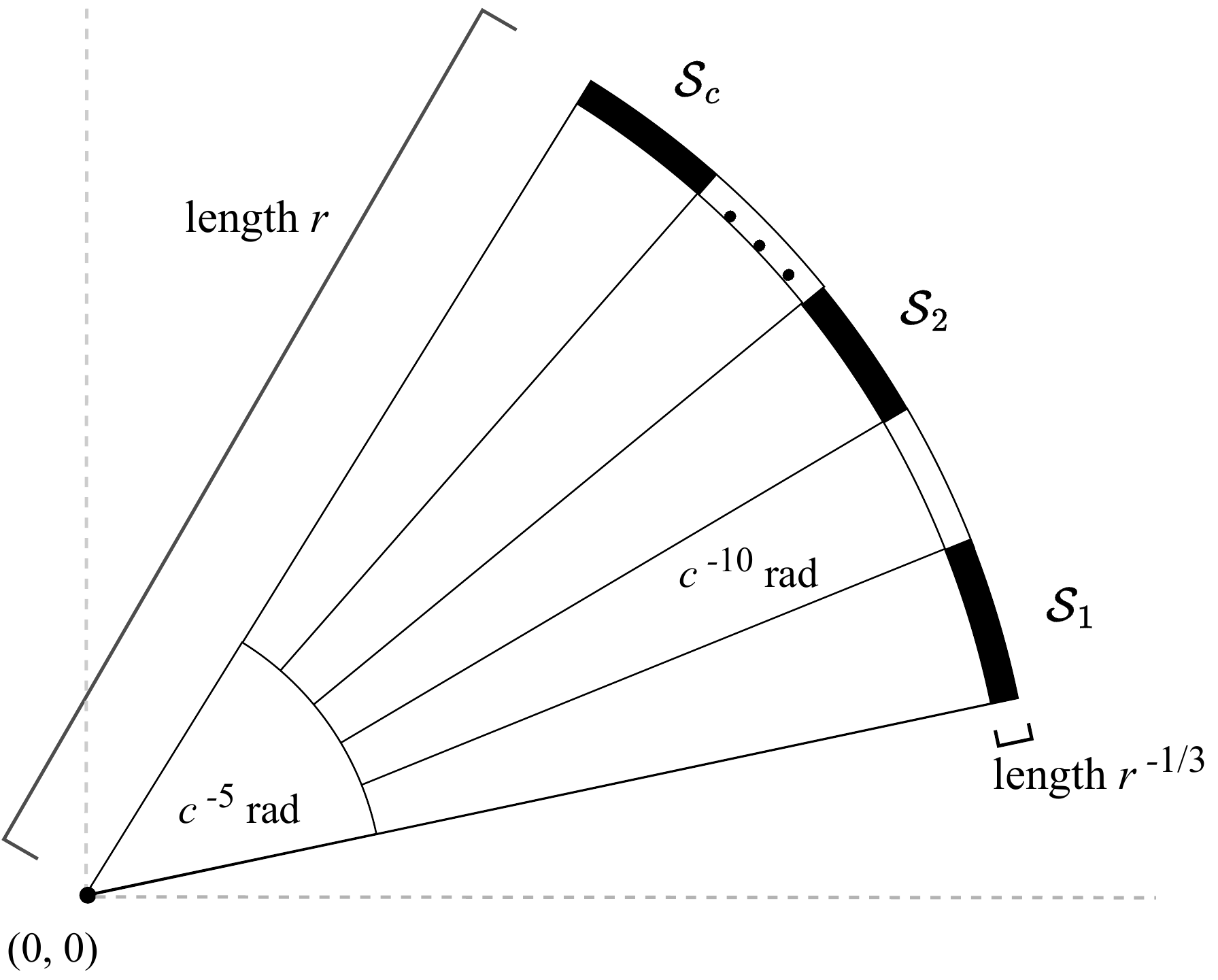}
    \caption{A depiction of $W_O$. The endpoints of the vectors in $W_O$ lie in the black stripes in the above figure.}
    \label{fig:stripes}
\end{figure}

\paragraph{Parameter choices for $G_O$.} We will let $2x_O = y_O = \sqrt{2}n_O^{1/2}$.  We will leave $r_O$ unspecified until the end of the construction, for expository purposes, as it is selected according to a parameter balance on the final graph. However, unlike with $r_I$ in $G_I$, our choice of $r_O$ will grow polynomially with $n_O$.
We quickly verify that our choices satisfy the base graph construction.
We have $\tan \psi_O = 1 = y_Ox_O^{-1}/2$.
Additionally, for all $\vec{v} \in W_O$, the first coordinate $v_1$ of $\vec{v}$ is between $r_O/2$ and $r_O$, by properties 1 and 3 of Lemma  \ref{lem:outer_graph_SCS}.
Then assuming we choose $r_O \leq n_O^{1/2}/4$, it's clear that our parameter choice $(x_O, y_O, r_O, W_O)$ of $G_O$ will yield a valid instantiation of base graph $G_B$.
The following lemma summarizes the properties of $G_O$, using Lemma \ref{lem:base_graph}:

\begin{lemma}[Outer Graph Properties]
\label{lemma:outer_graph}
Outer graph $G_O = (V_O, E_O)$ has the following properties:
\begin{enumerate}
    \item $|V_O| = n_O$
    \item $|E_O| = \Theta(c^{-5}\cdot r_O^{2/3}\cdot n_O)$
    \item $|P_O| = \Theta(c^{-5} \cdot r_O^{5/3} \cdot n_O^{1/2})$
    \item The canonical paths $\pi_O^{s, t}$ for $(s, t) \in P_O$ are pairwise edge-disjoint.
    \item The canonical path $\pi_O^{s, t}$ is the unique shortest $(s, t)$-path in $G_O$ for all $(s, t) \in P_O$. 
\end{enumerate}
\end{lemma}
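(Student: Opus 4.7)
The plan is to prove each of the five claims as a direct specialization of the base graph lemma (Lemma \ref{lem:base_graph}) to the parameter choices $(x_O, y_O, r_O, W_O)$ just specified, substituting in the bound $|W_O| = \Theta(c^{-5} r_O^{2/3})$ from Lemma \ref{lem:outer_graph_SCS}.

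First, I would quickly double-check that $(x_O, y_O, r_O, W_O)$ actually gives a valid instantiation of the base graph template, so that Lemma \ref{lem:base_graph} applies. The two preconditions are $r_O \le x_O/4$ and $\tan \psi_O \le y_O x_O^{-1}/2$ (with $0 \le \psi_O \le \pi/4$). Both are essentially already noted in the text preceding the lemma: $x_O = n_O^{1/2}/\sqrt{2}$, $y_O = \sqrt{2}\, n_O^{1/2}$, and the standing assumption $r_O \le n_O^{1/2}/4$ delivers $r_O \le x_O/4$; while $\psi_O = \pi/4$ gives $\tan \psi_O = 1 = y_O x_O^{-1}/2$. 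Additionally, by Lemma \ref{lem:outer_graph_SCS}(1) and (3), the $x$-coordinate of every $\vec{v} \in W_O$ lies in $[r_O/2, r_O]$, confirming that $W_O$ is a valid $W(r_O, \psi_O)$. So all hypotheses are met.

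With this verification in place, the five claims follow by direct computation. For (1), Lemma \ref{lem:base_graph}(1) gives $|V_O| = x_O y_O = n_O$. For (2), Lemma \ref{lem:base_graph}(4) gives $|E_O| = \Theta(x_O y_O \cdot |W_O|) = \Theta(n_O \cdot c^{-5} r_O^{2/3})$. For (3), Lemma \ref{lem:base_graph}(2) gives
\[
|P_O| = \Theta(r_O y_O \cdot |W_O|) = \Theta\!\left(r_O \cdot n_O^{1/2} \cdot c^{-5} r_O^{2/3}\right) = \Theta\!\left(c^{-5} r_O^{5/3} n_O^{1/2}\right).
\]
Claims (4) and (5) are immediate consequences of Lemma \ref{lem:base_graph}(3) and (5) respectively, which apply as soon as $W_O$ is strongly convex --- a property guaranteed by the (stronger) property (2) of Lemma \ref{lem:outer_graph_SCS}.

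Since every step is essentially a substitution, there is no real obstacle here; the work was already done in constructing $W_O$ (Lemma \ref{lem:outer_graph_SCS}) and in proving the general base graph lemma (Lemma \ref{lem:base_graph}). The only thing to keep clean is which constants get absorbed into which $\Theta$, and the verification that the base-graph preconditions $r_O \le x_O/4$ and $\tan \psi_O \le y_O x_O^{-1}/2$ still hold for the slightly ``extreme'' choice $\psi_O = \pi/4$ --- which, as noted above, holds with equality in the second case and thus needs to be stated explicitly rather than glossed over.
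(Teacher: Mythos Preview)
Your proposal is correct and follows essentially the same approach as the paper, which simply states that the lemma summarizes the properties of $G_O$ by applying Lemma \ref{lem:base_graph} with the specified parameter choices. Your write-up is in fact more detailed than the paper's (which gives no explicit proof), but the substance---plug $x_O, y_O, r_O, |W_O| = \Theta(c^{-5}r_O^{2/3})$ into the five base-graph properties---is identical.
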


\subsection{Specifying the Final Graph \label{sec:specifyingfinal}}

In our construction of final graph $G$, we made use of a bijection $\phi: W_O \mapsto P_I$ to plug the subdivided paths of $G_O$ into the input and output ports of the inner graph copies of $G_I$. To that end, we required that $|W_O| = |P_I|$. This can be achieved by taking $n_I = \Theta(r_O^{4/3}c^{-116})$, so our inner graph will be fully specified by parameters $c, r_O$. Note that we can get exact equality in the cardinality of $W_O$ and $P_I$ by simply ignoring a constant fraction of vectors in $W_O$ or a constant fraction of paths in $P_I$. 

We specifically define $\phi$ as follows. 
Let $\vec{v}_1, \vec{v}_2, \dots, \vec{v}_c$ be the $c$ vectors in $W_I$.  Observe that for every $\vec{v}_i \in W_I$, each vertex $s \in S_I$ has a critical pair in $P_I$ with $\vec{v}_i$ as its canonical vector. Then there will be exactly $|P_I|c^{-1}$ critical pairs in $P_I$ with canonical vector  $\vec{v}_i$. Let $\phi$ map the  $|W_O|c^{-1}$ vectors in stripe $\mathcal{S}_i$ of $W_O$ to the  $|P_I|c^{-1}$ critical pairs in $P_I$ with canonical vector $\vec{v}_i$, with some arbitrary bijection, for all $i \in [1, c]$.
The following proposition captures the key property of $\phi$ needed in our construction.
\begin{proposition}
 For all $\vec{u}, \vec{v} \in W_O$, the critical pairs $\phi(\vec{u}), \phi(\vec{v}) \in P_I$ have the same canonical inner graph vector in $W_I$ if and only if $\vec{u}$ and $\vec{v}$ belong to the  same stripe $\mathcal{S}_i$, $i \in [1, c]$.
 \label{prop:phi}
\end{proposition}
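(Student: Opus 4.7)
The plan is to unpack the definition of $\phi$ directly; the proposition is almost definitional, so the proof is essentially a verification that the construction behaves as advertised. First I would set up notation: enumerate $W_I = \{\vec{v}_1, \ldots, \vec{v}_c\}$, and for each $i \in [1, c]$ let $C_i \subseteq P_I$ denote the set of critical pairs whose canonical vector is $\vec{v}_i$. By the definition of $P_I$ in the base graph construction, each $s \in S_I$ contributes exactly one critical pair per vector of $W_I$, so $\{C_1, \ldots, C_c\}$ is a partition of $P_I$ with $|C_i| = |S_I| = |P_I|/c$. On the $W_O$ side, the stripes $\{\mathcal{S}_1, \ldots, \mathcal{S}_c\}$ form a partition of $W_O$ with $|\mathcal{S}_i| = |W_O|/c = |P_I|/c$ by Lemma \ref{lem:outer_graph_SCS}. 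Thus $\phi$ is well-defined as the disjoint union of bijections $\phi_i : \mathcal{S}_i \to C_i$, and in particular the preimage under $\phi$ of $C_i$ is exactly $\mathcal{S}_i$.

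With this setup the two directions are immediate. For the forward implication, suppose $\vec{u}, \vec{v} \in \mathcal{S}_i$; then by the definition of $\phi$ we have $\phi(\vec{u}), \phi(\vec{v}) \in C_i$, so their canonical inner-graph vectors are both $\vec{v}_i$. For the reverse implication, suppose $\phi(\vec{u})$ and $\phi(\vec{v})$ share a canonical vector $\vec{v}_i \in W_I$; then $\phi(\vec{u}), \phi(\vec{v}) \in C_i$, and since the $\phi_j$'s have pairwise disjoint codomains, the only stripe whose image meets $C_i$ is $\mathcal{S}_i$, forcing $\vec{u}, \vec{v} \in \mathcal{S}_i$.

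I do not anticipate a serious obstacle: the only non-trivial ingredient is verifying that the size $|\mathcal{S}_i|$ really equals $|C_i|$, so that the stripe-to-class bijection used to define $\phi$ exists. This cardinality match is guaranteed by the parameter choice $n_I = \Theta(r_O^{4/3} c^{-116})$ recorded earlier in the subsection (together with the stated convention that a constant fraction of vectors or pairs can be discarded to obtain exact equality), so the proof reduces to citing this choice and the partition structure already established in Lemmas \ref{lem:basegraph} and \ref{lem:outer_graph_SCS}.
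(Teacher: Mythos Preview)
Your proposal is correct and matches the paper's approach: the paper does not give a formal proof of this proposition at all, treating it as an immediate consequence of the definition of $\phi$ (which maps stripe $\mathcal{S}_i$ bijectively onto the critical pairs with canonical vector $\vec{v}_i$). Your write-up simply makes explicit the partition structure and cardinality check that the paper records in the paragraph defining $\phi$.
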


For analysis purposes, it will be convenient if the set of canonical paths for pairs in $P$ partition the edges $E$ of $G$.
Currently, the canonical paths partition the \emph{outer} extended edges of $G$, since in our base graph construction we only include edges that lie in canonical paths.
However, it might be that some \emph{inner} graphs edges $e$ are not used in canonical paths for pairs in $P$.
Specifically, this happens in the case where the inner-canonical path containing $e$ bijects with a canonical vector $\vec{v}$ that, in turn, does not define an edge contained in an outer canonical path (typically because the point corresponding to the inner graph that contains $e$, plus the vector $\vec{v}$, yields a point outside the $[1, x] \times [1, y]$ dimensions of the outer graph).

To simplify the following analysis we remove all edges in $E$ from inner graphs that do not belong to a canonical path $\pi^{s, t}$ where $(s, t) \in P$. Then by properties 3 and 4 of Lemma \ref{lem:final_graph}, the canonical paths for pairs in $P$ will partition $E$ and the density of $G$ will remain the same (within constant factors).
We summarize the properties of our final graph:

\begin{lemma}[Properties of Final Graph $G$]
\label{lem:final_graph_2}
Graph $G = (V, E)$ has the following properties:
\begin{enumerate}
    \item $|V| = \Theta( r_O^{4/3} \cdot c^{-116} \cdot n_O )$
    \item $|E| = \sum_{(s, t) \in P} |\pi^{s, t}| =  \Theta(c \cdot |V|)$ 
    \item $|P| = \Theta(r_O^{5/3} \cdot c^{-5} \cdot n_O^{1/2})$
    \item The subdivided paths of $G$ are of length $z = |V_I| |P_I|^{-1} = \Theta(r_O^{2/3} c^{-111})$.
    \item The set of canonical paths $\{\pi^{s, t} \mid (s, t) \in P \}$ form a partition of $E$. 
\end{enumerate}
\end{lemma}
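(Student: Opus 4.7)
The plan is to derive each item by substituting the specific parameter choices of Sections \ref{sec:specifying} and \ref{sec:specifyingfinal} into the general framework statement of Lemma \ref{lem:final_graph}. The computations are purely arithmetic in $c$ and $r_O$; the main task is careful exponent bookkeeping, and there is no conceptual obstacle beyond making sure the substitutions line up.

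For item 1, I would apply $|V| = \Theta(|V_O|\,|V_I|)$ from Lemma \ref{lem:final_graph} with $|V_O| = n_O$ (Lemma \ref{lemma:outer_graph}) and $|V_I| = n_I = \Theta(r_O^{4/3} c^{-116})$. For item 3, I would apply $|P| = \Theta(r_O\, y_O\, |W_O|)$ with $y_O = \sqrt{2}\, n_O^{1/2}$ and $|W_O| = \Theta(r_O^{2/3} c^{-5})$ from Lemma \ref{lem:outer_graph_SCS}. For item 4, the identity $z = |V_I|/|P_I|$ holds by construction (the edge subdivision step was defined with this length), and substituting $|V_I| = n_I$ together with $|P_I| = \Theta(c^{53} n_I^{1/2})$ from the Inner Graph Properties lemma, and then $n_I = \Theta(r_O^{4/3} c^{-116})$, gives the claimed $\Theta(r_O^{2/3} c^{-111})$. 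A small side point to mention is that $|W_O|$ and $|P_I|$ are only forced to be equal up to a constant factor by our choice of $n_I$, which we address by dropping a constant fraction of vectors (or pairs) to make the bijection $\phi$ well-defined; this does not affect any of the $\Theta(\cdot)$ bounds.

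For item 5, I would observe that the base graph construction only includes edges lying on canonical paths, and that every outer-canonical edge becomes a subdivided path used by exactly the composed-canonical paths corresponding to the critical pairs routed through it. For inner edges, the final paragraph of Section \ref{sec:specifyingfinal} explicitly deletes any inner-graph edge not covered by some $\pi^{s,t}$ with $(s,t) \in P$; every remaining edge therefore lies on at least one canonical path. Combining this with the edge-disjointness statement of Lemma \ref{lem:final_graph}(3) yields that $\{\pi^{s,t}\}_{(s,t)\in P}$ partitions $E$.

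Finally, for item 2, I would use item 5 to write $|E| = \sum_{(s,t) \in P} |\pi^{s,t}|$, then apply the canonical path length bound $|\pi^{s,t}| = \Theta\bigl(\tfrac{x_O}{r_O}\cdot \tfrac{x_I}{r_I}\bigr)$ from Lemma \ref{lem:final_graph}(4). Substituting $x_O = n_O^{1/2}/\sqrt{2}$, $x_I = \Theta(n_I^{1/2} c^{50})$, $r_I = c^{102}$, the value of $|P|$ from item 3, and $n_I = \Theta(r_O^{4/3} c^{-116})$, a short computation yields $|E| = \Theta(r_O^{4/3} c^{-115} n_O)$, which matches $c \cdot |V|$ in view of item 1. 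This completes the verification.
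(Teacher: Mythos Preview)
Your proposal is correct and matches the paper's implicit approach: the paper simply states Lemma~\ref{lem:final_graph_2} as a summary without an explicit proof, relying on Lemma~\ref{lem:final_graph} together with the parameter substitutions from Sections~\ref{sec:specifying}--\ref{sec:specifyingfinal} and the edge-deletion paragraph immediately preceding it. Your itemized derivation (including the arithmetic checks for items 2 and 4 and the partition argument for item 5) is exactly the verification the paper leaves to the reader.
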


\section{Completing the Analysis}
\label{sec:analysis_2}
As in Section \ref{sec:analysis_1}, fix a critical pair of vertices $(s, t) \in P$ in our final graph $G$ with canonical path $\pi^*$ and canonical vector $\vec{v}^*$. We will compare $\pi^*$ to an arbitrary simple $(s, t)$-path $\pi$, with move decomposition $m_1, m_2, \dots, m_k$. As before, move $m_i$ is a move from an input port $s_i$ in inner graph copy $G_I^{(i)}$ to an input port $s_{i+1}$ in inner graph copy $G_I^{(i+1)}$, and $s_1 = s$ and $s_{k+1} = t$. 

 The majority of the analysis in this section will be towards giving lower bounds for the amortized move difference $\widehat{\Delta}(m_i)$ of moves $m_i$. Intuitively, we should think of proving lower bounds on  $\widehat{\Delta}(m_i)$ as proving that move $m_i$ is longer than a  move in the canonical path in some sense. 
In this section, it will be helpful to recall that $r_I = c^{102}$, $x_I = 2^{-1}n_I^{1/2} c^{50}$, $z= \Theta(n_I^{1/2}c^{-53})$, and $n_I = \Theta(r_O^{4/3} c^{-116})$. 

\subsection{Lower Bounding the Move Difference $\Delta$}
We begin our analysis by lower bounding the move difference $\Delta$ of several categories of moves in the moveset. We will need the following useful proposition.

\begin{proposition}[Inner Graph Path Lengths] \leavevmode
\begin{enumerate}
    \item For every $(s_I, t_I)$-path $\pi_I$ in $G_I$, where $s_I \in S_I$ and $t_I \in T_I$,
\[
\frac{n_I^{1/2}}{2c^{52}} - 2  \leq |\pi_I| 
\]
    \item For every critical pair $(s_I, t_I) \in P_I$ with canonical path $\pi_I^{s_I, t_I}$, 
    \[
    |\pi_I^{s_I, t_I}| \leq \frac{n_I^{1/2}c^{50}}{2(c^{102} - c)}
    \]
\end{enumerate}
\label{prop:inner_path_lengths}

\end{proposition}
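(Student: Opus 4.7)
The plan is to prove both parts through straightforward bookkeeping on the first-coordinate (horizontal) displacement between $s_I$ and $t_I$, together with the bounds on the horizontal component of edge vectors from Lemma \ref{lemma:inner_graph_SCS}. Recall that every edge of $G_I$ corresponds to some $\vec{v} = (v_1, v_2) \in W_I$, and $v_1 \in [r_I - c, r_I]$. Also, $S_I \subseteq [1, r_I/2] \times [1, y_I/2]$ and $T_I \subseteq [x_I - r_I, x_I] \times [1, y_I]$, so for any $s_I = (s_1, s_2) \in S_I$ and $t_I = (t_1, t_2) \in T_I$, the horizontal displacement satisfies $t_1 - s_1 \geq x_I - 3r_I/2$.

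For Part 1, I would argue that any $(s_I, t_I)$-path $\pi_I$ traverses edges each of which changes the first coordinate by at most $r_I$ in absolute value (regardless of orientation). Hence if $|\pi_I| = k$, then $t_1 - s_1 \leq k\cdot r_I$, so
\[
k \;\geq\; \frac{x_I - 3r_I/2}{r_I} \;=\; \frac{x_I}{r_I} - \frac{3}{2}.
\]
Plugging in $x_I = n_I^{1/2} c^{50}/2$ and $r_I = c^{102}$ gives $x_I/r_I = n_I^{1/2}/(2c^{52})$, so $k \geq n_I^{1/2}/(2c^{52}) - 3/2 \geq n_I^{1/2}/(2c^{52}) - 2$ as claimed.

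For Part 2, I invoke Proposition \ref{prop:P_I_x_diff} to get the \emph{exact} horizontal displacement $t_1 - s_1 = x_I - r_I/2$ for every critical pair. Since the canonical path $\pi_I^{s_I, t_I}$ uses only edges in the direction of its canonical vector $\vec{v} \in W_I$, its length is exactly $(x_I - r_I/2)/v_1$. Using the lower bound $v_1 \geq r_I - c$ from Lemma \ref{lemma:inner_graph_SCS},
\[
|\pi_I^{s_I, t_I}| \;=\; \frac{x_I - r_I/2}{v_1} \;\leq\; \frac{x_I}{r_I - c} \;=\; \frac{n_I^{1/2} c^{50}}{2(c^{102} - c)}.
\]

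There is no real obstacle: both bounds follow from the same horizontal projection idea, with Part 2 just requiring one extra input (Proposition \ref{prop:P_I_x_diff}) so that we can replace the inequality on $t_1 - s_1$ by an equality and then divide by the minimum possible $v_1$. The only thing to watch is being slightly loose (using $r_I - c$ rather than $r_I - c + 1$, and dropping the additive $r_I/2$ in the numerator) in order to match the cleanly-stated bound in the proposition.
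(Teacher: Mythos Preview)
Your proposal is correct and follows essentially the same horizontal-projection argument as the paper. The only cosmetic differences are that the paper uses the slightly cruder displacement bound $t_1 - s_1 \ge x_I - 2r_I$ in Part~1 (yielding the $-2$ directly) and does not invoke Proposition~\ref{prop:P_I_x_diff} in Part~2, instead just bounding $t_1 - s_1 < x_I$; both routes lead to the identical stated inequalities.
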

\begin{proof} ~
\begin{enumerate}
    \item  Observe that by the construction of $G_I$, the $x$-displacement between any vertex in $S_I$ and any vertex in $T_I$ is at least $x_I - 2r_I = 2^{-1}n_I^{1/2} c^{50} - 2c^{102}$. Additionally, each edge $e \in E_I$ corresponds to a vector $\vec{w} = (w_1, w_2) \in W_I$, and by property 2 of Lemma \ref{lemma:inner_graph_SCS}, $w_1 \leq r_I =  c^{102}$. Then for every $s \in S_I$ and $t \in T_I$, every $(s_I, t_I)$-path in $G_I$ is of length at least 
\[
\frac{ 2^{-1}n_I^{1/2} c^{50} - 2c^{102}}{c^{102}} = \frac{n_I^{1/2}}{2c^{52}} - 2
\]
\item The $x$-displacement between any vertex in $S_I$ and any vertex in $T_I$ is less than $x_I = 2^{-1}n_I^{1/2} c^{50}$.  Let $\vec{v}_I^* = (v_{I, 1}^*, v_{I, 2}^*)$ be the canonical vector in $W_I$ corresponding to $\pi_I^{s_I, t_I}$, and note that by property 2 of Lemma \ref{lemma:inner_graph_SCS}, $v_{I, 1}^* \geq r_I - c = c^{102} - c$. Then $|\pi_I^{s_I, t_I}|$ is a most $(2^{-1}n_I^{1/2} c^{50}) / ( c^{102} - c)$ as desired. 
\end{enumerate} 
\end{proof}

Now we will lower bound the move length difference $\Delta(m_i)$ of moves in \textsc{Backward}. This lemma essentially formalizes the obvious reason why backwards moves are much worse than canonical moves -- it's because they move backwards, away from $t$. 

\begin{lemma}
Let $m_i$ be a move in \normalfont \textsc{Backward}.
\textit{Then}
$\Delta(m_i) = \Omega(r_O^{2/3}c^{-110})$.
\label{lem:backward}
\end{lemma}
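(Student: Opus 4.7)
The plan rests on a simple observation: a backward move's vector points opposite to a canonical outer vector, so the move distance $d_i$ is strictly negative, and the resulting positive contribution $-L_{\pi^*} d_i$ to $\Delta(m_i)$ will already exceed the target bound on its own. Concretely, by definition a backward move $m_i$ starts at an input port in some copy $G_I^{(1)}$, traverses a subdivided path (in reverse) to an output port of some other copy $G_I^{(2)}$, then ends at an input port of $G_I^{(2)}$. The subdivided paths incident to input ports of $G_I^{(1)}$ correspond exactly to incoming edges of the outer-graph vertex $\coord(G_I^{(1)})$, so we must have $\coord(G_I^{(2)}) = \coord(G_I^{(1)}) - \vec{v}$ for some $\vec{v} \in W_O$. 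Hence $\vec{m}_i = -\vec{v}$ and $d_i = -\proj_{\vec{v}^*} \vec{v}$, which combined with the trivial bound $|m_i| \geq 0$ yields
\[
\Delta(m_i) \;=\; |m_i| - L_{\pi^*} d_i \;\geq\; L_{\pi^*} \proj_{\vec{v}^*} \vec{v}.
\]

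The next step is to lower-bound each factor on the right. For the projection, both $\vec{v}$ and $\vec{v}^*$ lie in $W_O$, so by Lemma \ref{lem:outer_graph_SCS} the angle $\theta$ between them satisfies $\theta \leq \psi_1 = c^{-5}$, while $\|\vec{v}\| \geq r_O - r_O^{-1/3}$; for sufficiently large $c$ and $r_O$ this gives $\proj_{\vec{v}^*} \vec{v} = \|\vec{v}\|\cos\theta = \Omega(r_O)$. For $L_{\pi^*}$, the construction of the canonical path in Section 3 gives $|\pi^*| \geq |\pi_O^{s_O, t_O}| \cdot |\pi_I^{s_I, t_I}|$, while $\|t-s\| = |\pi_O^{s_O, t_O}| \cdot \|\vec{v}^*\|$ (implicit in Proposition \ref{prop:move_length_sum}), so $L_{\pi^*} \geq |\pi_I^{s_I, t_I}|/\|\vec{v}^*\| \geq |\pi_I^{s_I, t_I}|/r_O$. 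Plugging in the lower bound $|\pi_I^{s_I, t_I}| = \Omega(n_I^{1/2} c^{-52})$ from Proposition \ref{prop:inner_path_lengths}(1), together with the parameter choice $n_I = \Theta(r_O^{4/3} c^{-116})$ from Section 5, then yields $L_{\pi^*} = \Omega(r_O^{-1/3} c^{-110})$.

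Multiplying the two bounds finishes the proof:
\[
\Delta(m_i) \;\geq\; L_{\pi^*} \cdot \proj_{\vec{v}^*} \vec{v} \;=\; \Omega\!\left(r_O^{-1/3} c^{-110}\right) \cdot \Omega(r_O) \;=\; \Omega\!\left(r_O^{2/3} c^{-110}\right).
\]
No step looks technically subtle, but the one point requiring care is exploiting the narrow angular spread of $W_O$ (property 4 of Lemma \ref{lem:outer_graph_SCS}): without this, $\proj_{\vec{v}^*}\vec{v}$ could drop by a factor that would eat into the final bound. So the main ``load-bearing'' ingredient is the concentration of $W_O$ inside a sector of angular width $c^{-5}$, which is exactly what the careful stripe-based construction of $W_O$ in Section \ref{sec:specifying} was designed to supply.
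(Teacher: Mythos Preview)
Your proof is correct, but it takes a slightly different tack from the paper's. Both arguments begin identically: identify $\vec{m}_i = -\vec{v}$ for some $\vec{v} \in W_O$, and use the narrow angular spread of $W_O$ (property 4 of Lemma \ref{lem:outer_graph_SCS}) to control $d_i$. The divergence is in which summand of $\Delta(m_i) = |m_i| - L_{\pi^*}d_i$ carries the weight. The paper simply observes that $d_i < 0$, so $\Delta(m_i) \ge |m_i|$, and then lower-bounds $|m_i|$ directly via Proposition \ref{prop:inner_path_lengths}(1) plus the subdivided path length $z$. You instead discard $|m_i|$ and lower-bound $-L_{\pi^*}d_i$, which forces you to establish a \emph{lower} bound on $L_{\pi^*}$ (the paper only ever proves the upper bound in Proposition \ref{prop:path_length_factor}). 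Your derivation of that lower bound is sound, and the two routes land on the same $\Omega(r_O^{2/3}c^{-110})$ for essentially the same reason: both $|m_i|$ and $L_{\pi^*}|d_i|$ are of order $|\pi_I^{s^*,t^*}| + z$. The paper's version is marginally cleaner since it avoids the extra step of bounding $L_{\pi^*}$ from below, but your approach is a perfectly valid mirror of it.
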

\begin{proof}
Observe that in a backward move, we take a subdivided path corresponding to some vector $\vec{m}_i = -\vec{u}$ for some $\vec{u} \in W_O$. Since all vectors in $W_O$ lie in a sector with inner angle $c^{-5}$ by property 3 of Lemma \ref{lem:outer_graph_SCS}, it follows that the angle between $\vec{u}$ and $\vec{v}^*$ is less than $c^{-5}$, so the angle between $\vec{m}_i = -\vec{u}$ and $\vec{v}^*$ is at least $\pi - c^{-5}$. 
Then for sufficiently large $c, r_O$, the scalar projection of $\vec{m}_i$ onto $\vec{v}^*$ will be negative, so $d_i < 0$.
Recall that $L_{\pi^*}$ is the unit length of $\pi^*$ and that $L_{\pi^*} \geq 0$. Then 
\[
\Delta(m_i) = |m_i| - L_{\pi^*}  d_i  \geq |m_i| \geq \frac{n_I^{1/2}}{2c^{52}} - 2 + z  = \Omega(r_O^{2/3}c^{-110})
\]
(The final inequality follows from part 1 of Proposition \ref{prop:inner_path_lengths}.)
\end{proof}

We now give a precise upper bound on $L_{\pi^*}$, the unit length of $\pi^*$.

\begin{proposition} $L_{\pi^*} = O(r_O^{-1/3}c^{-110})$, and specifically,
\[
L_{\pi^*} \leq \frac{n_I^{1/2}c^{50} \cdot(2(c^{102} - c))^{-1} + z}{r_O - r_O^{-1/3}}
\]
\label{prop:path_length_factor}
\end{proposition}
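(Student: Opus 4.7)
The plan is to exploit the fact that both $|\pi^*|$ and $\|t-s\|$ scale linearly in $k_O := |\pi_O^{s_O, t_O}|$, so that $k_O$ cancels out of the ratio $L_{\pi^*} = |\pi^*|/\|t-s\|$ and the problem reduces to comparing the contribution of a single outer step to $\pi^*$ against the displacement of one application of $\vec{v}^*$. This is essentially a computation, so I do not expect any conceptual obstacle; the main thing to be careful about is the bookkeeping of the endpoints.

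First I would compute $|\pi^*|$ directly from its definition in Section \ref{sec:specifyingfinal}: starting from $\pi_O^{s_O, t_O}$ (with $k_O$ edges and $k_O+1$ nodes), each of the $k_O$ outer edges is replaced by a subdivided path of length $z$, and each of the first $k_O$ outer nodes is replaced by the inner canonical path $\pi_I^{s_I, t_I} := \phi(\vec{v}^*)$, while the final outer node $t_O$ contributes no edges. This yields $|\pi^*| = k_O \bigl(|\pi_I^{s_I, t_I}| + z\bigr)$. Next I would compute $\|t - s\|$: since $s$ and $t$ by construction occupy the same local coordinate $s_I$ inside $G_I^{s_O}$ and $G_I^{t_O}$ respectively, their global displacement in $\mathbb{Z}^2$ equals $\coord(G_I^{t_O}) - \coord(G_I^{s_O}) = t_O - s_O = k_O\vec{v}^*$, and hence $\|t - s\| = k_O\|\vec{v}^*\|$. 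Dividing,
\[
L_{\pi^*} \;=\; \frac{k_O\bigl(|\pi_I^{s_I,t_I}| + z\bigr)}{k_O\|\vec{v}^*\|} \;=\; \frac{|\pi_I^{s_I,t_I}| + z}{\|\vec{v}^*\|}.
\]

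Then I would plug in the two bounds stated earlier in the paper. The numerator is bounded above by Proposition \ref{prop:inner_path_lengths} part 2, giving $|\pi_I^{s_I,t_I}| \leq n_I^{1/2}c^{50}/(2(c^{102}-c))$; the denominator is bounded below by property 1 of Lemma \ref{lem:outer_graph_SCS}, giving $\|\vec{v}^*\| \geq r_O - r_O^{-1/3}$. Combining these yields exactly the explicit inequality in the statement. For the asymptotic bound, I would substitute $n_I = \Theta(r_O^{4/3}c^{-116})$ and $z = \Theta(r_O^{2/3}c^{-111})$ from Lemma \ref{lem:final_graph_2}: the first term in the numerator evaluates to $\Theta(r_O^{2/3}c^{-110})$ and dominates $z$, while the denominator is $\Theta(r_O)$, so the ratio is $O(r_O^{-1/3}c^{-110})$ as claimed. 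The only minor subtlety worth flagging is confirming that the ``last outer node contributes $0$ edges'' asymmetry only shifts $|\pi^*|$ by a lower-order term and does not disturb the cancellation of $k_O$.
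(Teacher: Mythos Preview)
Your proposal is correct and takes essentially the same approach as the paper. The paper phrases it via the move decomposition (observing that every move $m$ of $\pi^*$ has move vector $\vec{v}^*$ and the same length $|m| = |\pi_I^{s_I,t_I}| + z$, hence $L_{\pi^*} = |m|/\|\vec{v}^*\|$), whereas you compute $|\pi^*|$ and $\|t-s\|$ globally and cancel $k_O$; both land on the identical formula $L_{\pi^*} = (|\pi_I^{s_I,t_I}| + z)/\|\vec{v}^*\|$ and then invoke Proposition~\ref{prop:inner_path_lengths}(2) and the $\|\vec{v}^*\| \ge r_O - r_O^{-1/3}$ bound. Your flagged subtlety about the last outer node is already handled exactly by your own count (there are precisely $k_O$ inner canonical subpaths and $k_O$ subdivided paths), so no lower-order correction is needed.
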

\begin{proof}
Recall that   $z$ is the length of all subdivided paths in $G$. 
Note that every move $m$ in path $\pi^*$ is a forward move with the same move vector $\vec{m} = \vec{v}^*$ and will have the same path length $|m|$. Then an upper bound for $|m| \cdot \|\vec{v}^*\|^{-1}$ will give an upper bound for $L_{\pi^*}$. 
By property 1 of Lemma \ref{lem:CIS}, $\|\vec{v}^*\| \geq r_O - r_O^{-1/3}$ for sufficiently large $r_O$.  
Note that the length of $m$ is the length of the canonical inner graph path corresponding to $\pi^*$, plus the length $z$ of the subdivided path taken by $m$. By part 2 of Proposition \ref{prop:inner_path_lengths}, every canonical path in $G_I$ is of length at most $(n_I^{1/2}c^{50})/(2(c^{102} - c))$. Then combining these observations,
\[
L_{\pi^*} = \frac{|m|}{  \|\vec{v}^*\|} \leq \frac{(n_I^{1/2}c^{50})\cdot(2(c^{102} - c))^{-1} + z}{r_O - r_O^{-1/3}}
\]
as claimed. For sufficiently large $c, r_O$, we find that 
$L_{\pi^*} \leq \frac{4n_I^{1/2}}{r_Oc^{52}} = O(r_O^{-1/3}c^{-110})$.
\end{proof}

We now lower bound the move length difference $\Delta(m_i)$ of zigzag moves.  We will see that because we specified that the vectors in our set $W_O$ lie in a small cone with inner angle $\psi_1 = c^{-5}$ (see property 4 of Lemma \ref{lem:outer_graph_SCS}), zigzag moves will be quite inefficient compared to canonical moves. 

\begin{lemma}
Let $m_i$ be a move in \normalfont \textsc{Zigzag}. Then $\Delta(m_i) = \Omega(r_O^{2/3}c^{-111})$.  \label{lem:zigzag}
\end{lemma}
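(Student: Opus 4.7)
The plan is to show that any zigzag move $m_i$ is forced to be long, with $|m_i| \geq 2z$, while its contribution in the direction of $\vec{v}^*$ is forced to be tiny, with $|d_i| = O(r_O \cdot c^{-10})$. Then $L_{\pi^*} d_i$ will not be able to absorb the length $|m_i|$. The key intuition is that, unlike a forward move, the two subdivided-path segments of a zigzag point in essentially the same direction, so by the tight cone around $\vec{v}^*$ that contains $W_O$, they nearly cancel out when projected onto $\vec{v}^*$.

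First I would set up the move vector. By the definition of a zigzag, $m_i$ traverses one subdivided path backward from $u$ (an input port of $G_I^{(1)}$) to an output port $w_1$ of some intermediate inner graph copy $G_I^{(3)}$, walks inside $G_I^{(3)}$ from $w_1$ to some output port $w_2$, and then forward along a second subdivided path to $v$ (an input port of $G_I^{(2)}$). Writing $\vec{u}_1, \vec{u}_2 \in W_O$ for the outer-graph vectors corresponding to these two subdivided paths, we have $\coord(G_I^{(1)}) = \coord(G_I^{(3)}) + \vec{u}_1$ and $\coord(G_I^{(2)}) = \coord(G_I^{(3)}) + \vec{u}_2$, so $\vec{m}_i = \vec{u}_2 - \vec{u}_1$ and therefore $d_i = \proj_{\vec{v}^*}(\vec{u}_2) - \proj_{\vec{v}^*}(\vec{u}_1)$.

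Next I would bound $|d_i|$. Since $\vec{u}_1, \vec{u}_2, \vec{v}^*$ all lie in $W_O$, properties 1 and 4 of Lemma \ref{lem:outer_graph_SCS} guarantee that each $\vec{u}_j$ has magnitude in $[r_O - r_O^{-1/3}, r_O]$ and makes angle at most $\psi_1 = c^{-5}$ with $\vec{v}^*$. Using $\cos(c^{-5}) \geq 1 - c^{-10}/2$ from the Taylor expansion, each projection $\proj_{\vec{v}^*}(\vec{u}_j)$ lies in a common interval of length at most $r_O(1-\cos(c^{-5})) + r_O^{-1/3} = O(r_O\,c^{-10})$, so $|d_i| = O(r_O\,c^{-10})$.

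Finally, I would combine these bounds. The zigzag takes two subdivided paths in its entirety, so $|m_i| \geq 2z = \Omega(r_O^{2/3} c^{-111})$ by Lemma \ref{lem:final_graph_2}, and by Proposition \ref{prop:path_length_factor} we have $L_{\pi^*} = O(r_O^{-1/3} c^{-110})$. Hence $|L_{\pi^*} d_i| = O(r_O^{2/3} c^{-120})$. Since $c^{-120}$ is $c^{9}$-smaller than $c^{-111}$, for sufficiently large $c$ the length $2z$ dominates, and we conclude $\Delta(m_i) = |m_i| - L_{\pi^*} d_i \geq 2z - |L_{\pi^*} d_i| = \Omega(r_O^{2/3}c^{-111})$. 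The only subtle step is making sure the cone width $\psi_1 = c^{-5}$ is tight enough for the $c^{-10}$ gain in $|d_i|$ to beat the $c^{-110}$ blow-up in $L_{\pi^*}$; this is precisely why the outer graph was designed with such a narrow cone in Lemma \ref{lem:outer_graph_SCS}. I do not anticipate any serious obstacle beyond this parameter check.
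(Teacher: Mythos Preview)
Your proposal is correct and follows essentially the same approach as the paper: compute $\vec{m}_i = \vec{u}_2 - \vec{u}_1$, bound $d_i$ using the narrow cone of $W_O$, observe $|m_i| \ge 2z$, and combine with the $L_{\pi^*}$ bound. The only difference is that you squeeze a slightly sharper estimate $|d_i| = O(r_O c^{-10})$ via the cosine Taylor expansion, whereas the paper settles for the cruder $d_i \le \|\vec{m}_i\| \le 2c^{-5} r_O$; either suffices.
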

\begin{proof}
Observe that in a zigzag move, we first take some vector $-\vec{u}_1$ for some $\vec{u}_1 \in W_O$, and then we take some other vector $\vec{u}_2$ with $\vec{u}_2 \in W_O$. Then $\vec{m}_i = \vec{u}_2 - \vec{u}_1$. Since $\vec{u}_1, \vec{u}_2 \in W_O$, the angle between vector $\vec{u}_1$ and $\vec{u}_2$ is at most $c^{-5}$ by property 4 of Lemma \ref{lem:outer_graph_SCS}. Then since $r_O - r_O^{-1/3} \leq \|\vec{u}_1\| \leq r_O$ and $r_O - r_O^{-1/3} \leq \|\vec{u}_2\| \leq r_O$, some straightforward geometry shows that $\|\vec{m}_i\| \leq 2r_O^{-1/3} + c^{-5} \cdot r_O$. 

Then $d_i \leq \|\vec{m}_i\| \leq 2c^{-5}\cdot r_O$ for sufficiently large $r_O$. Additionally, observe that move $m_i$ takes two subdivided paths in $G$, so $|m_i| \geq 2z = \Omega(r_O^{2/3}c^{-111})$. Then 
\[
\Delta(m_i) = |m_i| - L_{\pi^*}  d_i \geq |m_i| - L_{\pi^*}  2c^{-5}r_O \geq  2z - O(r_O^{2/3}c^{-115}) = \Omega(r_O^{2/3}c^{-111})
\] 
for sufficiently large $c$.
\end{proof}

Now to analyze the forward moves in $\mathcal{M}$, we find it useful to partition the set \textsc{Forward} into sets \textsc{Forward-S} and \textsc{Forward-D}, depending on the vector in $W_O$ corresponding to the subdivided path taken in the move. Let $m$ be a move in \textsc{Forward}. Then $m$ is in \textsc{Forward-S}  if move vector $\vec{m} \in W_O$  belongs to the \textit{same} stripe as vector $\vec{v}^*$ in $W_O$. Otherwise, move vector $\vec{m}$ belongs to a \textit{different} stripe than $\vec{v}^*$ and $m$ is in \textsc{Forward-D}. We first analyze moves in \textsc{Forward-D}.

\begin{proposition}
Let $\vec{u}, \vec{v}$ be two vectors in $W_O$ that belong to different stripes. Then 
$\proj_{\vec{v}} \vec{u} \leq r_O(1 -  \sigma)$, where $\sigma = 4^{-1}c^{-20}$. Consequently, if $m_i$ is a move in \normalfont \textsc{Forward-D}, then $d_i \leq r_O(1 - \sigma)$.
\label{prop:stripe_diff}
\end{proposition}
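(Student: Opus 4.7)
The plan is to reduce the projection bound to a cosine bound and then invoke a standard Taylor-type inequality. First, I would write $\proj_{\vec{v}} \vec{u} = \|\vec{u}\| \cos \theta$, where $\theta$ is the angle between $\vec{u}$ and $\vec{v}$. By property 1 of Lemma \ref{lem:outer_graph_SCS}, $\|\vec{u}\| \le r_O$. By property 5(b), since $\vec{u}$ and $\vec{v}$ lie in distinct stripes, we have $\theta \ge \psi_2 = c^{-10}$. On the range $[0, \pi/2]$, cosine is monotonically decreasing, so $\cos \theta \le \cos(c^{-10})$, giving $\proj_{\vec{v}} \vec{u} \le r_O \cos(c^{-10})$.

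Next I would apply the elementary bound $\cos(x) \le 1 - x^2/4$, which holds for all $|x| \le 1$; this follows from the Taylor series $\cos(x) = 1 - x^2/2 + x^4/24 - \cdots$ by grouping consecutive terms (or directly, since $x^2/2 - x^4/24 \ge x^2/4$ whenever $x^2 \le 6$). Substituting $x = c^{-10}$ (which is certainly at most $1$ for sufficiently large $c$) gives $\cos(c^{-10}) \le 1 - c^{-20}/4$, and hence $\proj_{\vec{v}} \vec{u} \le r_O(1 - \sigma)$ with $\sigma = 4^{-1} c^{-20}$ as claimed.

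For the consequence, I would unwind the definition of a \textsc{Forward-D} move: such a move $m_i$ starts at an input port of $G_I^{(i)}$, travels along an inner canonical path to an output port, and then takes a single subdivided path corresponding to some vector $\vec{u} \in W_O$ in a different stripe than $\vec{v}^*$. Consequently $\vec{m}_i = \coord(G_I^{(i+1)}) - \coord(G_I^{(i)}) = \vec{u}$, and applying the first part with $\vec{v} = \vec{v}^*$ yields $d_i = \proj_{\vec{v}^*} \vec{u} \le r_O(1-\sigma)$.

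There is no real obstacle here; the statement is essentially a restatement of the angular separation property 5(b) of $W_O$ in projection language, and the only quantitative work is the cosine inequality, which is immediate. The main care point is simply confirming that $\psi_2 \le 1$ so that the quadratic bound on $\cos$ applies, which holds for any $c \ge 1$.
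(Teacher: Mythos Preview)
Your proof is correct and essentially identical to the paper's: both write $\proj_{\vec{v}}\vec{u}=\|\vec{u}\|\cos\theta$, invoke the stripe separation $\theta\ge c^{-10}$, and bound $\cos\theta$ via its Taylor expansion (the paper uses $\cos x\le 1-x^2/2+x^4/24$, you use the cruder $\cos x\le 1-x^2/4$, and either suffices). One tiny inaccuracy that does not affect the argument: a \textsc{Forward} move need not traverse an inner \emph{canonical} path, just some path from an input port to an output port in $G_I^{(i)}$; but all that matters here is that $\vec{m}_i$ equals the outer vector $\vec{u}$ of the subdivided path taken, which you correctly identify.
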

 \begin{proof}
 By  property 4a of Lemma \ref{lem:outer_graph_SCS}, the angle $\psi$ between $\vec{u}$ and $\vec{v}$ is at least $\psi_2 = c^{-10}$ radians. Then from the Taylor expansion of $\cos x$, we get that $\proj_{\vec{v}} \vec{u} = \|\vec{u}\| \cos \psi \leq  r_O \cdot (1 - \psi^2/2 + \psi^4/4!) \leq  r_O(1 -  \sigma)$ for sufficiently large $c$. 
 Now note that if $m_i$ is a move in \textsc{Forward-D}, then $d_i = \proj_{\vec{v}^*} \vec{m}_i \leq r_O(1 -  \sigma)$.
 \end{proof}
 
 Proposition \ref{prop:stripe_diff} tells us that if a move $m_i$ is in \textsc{Forward-D}, then it loses a constant fraction $\sigma$ of its move distance $d_i$. This follows from the fact that $\vec{m}_i$ is in a different stripe than $\vec{v}^*$, so move vector $\vec{m}_i$ is not travelling as far in the direction of $\vec{v}^*$. (See Figure \ref{fig:stripes} for reference.) 
 This deficiency in $d_i$ makes moves in \textsc{Forward-D} inefficient compared to canonical moves, as we prove in the following lemma.
 
 \begin{lemma}
 Let $m_i$ be a move in \normalfont \textsc{Forward-D}. Then $\Delta(m_i) = \Omega(r_O^{2/3}c^{-130})$. 
 \label{lem:forwardd}
 \end{lemma}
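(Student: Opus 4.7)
Moves.}

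The plan is to directly bound $\Delta(m_i) = |m_i| - L_{\pi^*} \cdot d_i$ by combining the upper bound on $d_i$ from Proposition \ref{prop:stripe_diff} with the length estimates from Proposition \ref{prop:inner_path_lengths} and Proposition \ref{prop:path_length_factor}. Intuitively, since $m_i$ is a forward move, its length is at least as long as a canonical forward move's inner path plus one subdivided path; but since $\vec{m}_i$ lies in a different stripe from $\vec{v}^*$, its projection $d_i$ is smaller than $\|\vec{v}^*\|$ by a multiplicative factor of roughly $(1-\sigma)$. This multiplicative loss is what yields the positive gap.

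The main steps are as follows. First, because $m_i \in \textsc{Forward}$, it traverses an $(S_I, T_I)$-path in some inner graph copy followed by a single subdivided path of length $z$. Applying part 1 of Proposition \ref{prop:inner_path_lengths} gives
\[
|m_i| \;\geq\; \frac{n_I^{1/2}}{2c^{52}} - 2 + z.
\]
Second, apply Proposition \ref{prop:stripe_diff} to get $d_i \leq r_O(1-\sigma)$ with $\sigma = 4^{-1}c^{-20}$, and plug in the upper bound on $L_{\pi^*}$ from Proposition \ref{prop:path_length_factor}:
\[
L_{\pi^*}\cdot d_i \;\leq\; \frac{n_I^{1/2}c^{50}/(2(c^{102}-c)) + z}{r_O - r_O^{-1/3}}\cdot r_O(1-\sigma).
\]
For sufficiently large $c$ and $r_O$, the factor $r_O/(r_O - r_O^{-1/3}) = 1 + O(r_O^{-4/3})$ is negligible, and $n_I^{1/2}c^{50}/(2(c^{102}-c))$ is comparable to $n_I^{1/2}/(2c^{52})$ up to a $(1 + O(c^{-100}))$ factor. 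After simplification, this bound becomes
\[
L_{\pi^*}\cdot d_i \;\leq\; \left(\frac{n_I^{1/2}}{2c^{52}} + z\right)(1-\sigma) + \text{lower-order terms}.
\]

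Finally, subtracting yields a savings of order $\sigma\cdot n_I^{1/2}/c^{52}$. Since $\sigma = \Theta(c^{-20})$ and $n_I^{1/2} = \Theta(r_O^{2/3}c^{-58})$, we obtain
\[
\Delta(m_i) \;\geq\; \frac{\sigma\cdot n_I^{1/2}}{2c^{52}} - O(1) - (\text{lower-order terms}) \;=\; \Omega(r_O^{2/3}c^{-130}),
\]
as required.

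\paragraph{Where the difficulty lies.} The routine calculation is safe, but the subtle point is that the natural lower bound on $|m_i|$ (arbitrary $(S_I,T_I)$-path) and the natural upper bound on the canonical inner path in $|m^*|$ differ by only lower-order terms; the gap $\Delta(m_i)$ does \emph{not} come from $|m_i|$ exceeding $|m^*|$, but rather from the multiplicative $(1-\sigma)$ loss in $d_i$ when compared against a full canonical step's distance $\|\vec{v}^*\| \approx r_O$. So the care required is to verify that the error terms incurred by (a) the gap between $r_O$ and $\|\vec{v}^*\|$, (b) the gap between $r_I/2$ and the canonical first-coordinate $r_I - c$, and (c) the additive constants in Proposition \ref{prop:inner_path_lengths}, are all smaller than the principal $\sigma\cdot n_I^{1/2}/c^{52}$ savings. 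This is precisely why the lemma requires $c$ and $r_O$ to be sufficiently large, and why the constants in the exponents of Lemma \ref{lemma:inner_graph_SCS} were chosen with so much slack.
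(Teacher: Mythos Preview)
Your proposal is correct and follows essentially the same approach as the paper: both combine the lower bound on $|m_i|$ from Proposition~\ref{prop:inner_path_lengths}(1), the projection bound $d_i \le r_O(1-\sigma)$ from Proposition~\ref{prop:stripe_diff}, and the explicit upper bound on $L_{\pi^*}$ from Proposition~\ref{prop:path_length_factor}, then verify that the principal savings $\sigma\cdot n_I^{1/2}/c^{52} = \Theta(r_O^{2/3}c^{-130})$ dominates the error terms. The only difference is presentational---the paper carries out the inequality chain explicitly (arriving at $-n_I^{1/2}/c^{153} + n_I^{1/2}/(8c^{72}) - O(r_O^{-2/3})$), whereas you absorb the same errors into ``lower-order terms''; your paragraph on where the difficulty lies accurately identifies exactly the cancellations the paper tracks.
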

 \begin{proof}
 Since $m_i$ is a move in \textsc{Forward-D}, it follows that $d_i \leq r_O(1 - \sigma)$ by Proposition \ref{prop:stripe_diff}. Note that the length of $m_i$ is the length of the inner graph traversal of $G_I$ in $m_i$ plus the length $z$ of the subdivided path taken by $m$. Then by part 1 of Proposition \ref{prop:inner_path_lengths}, move $m_i$ requires at least  $(n_I^{1/2})/(2c^{52}) - 2$ edges to travel from an input port in $G_I$ to an output port. The length of the subdivided path taken by $m_i$ is $z$, so $|m_i| \geq (n_I^{1/2})/(2c^{52}) - 2 + z$. We calculate $\Delta(m_i)$ as follows. 
 \begin{align*}
     \Delta(m_i) & = |m_i|  - L_{\pi^*}  d_i \\
     & \geq |m_i| -   \frac{n_I^{1/2}c^{50} \cdot(2(c^{102} - c))^{-1} + z}{r_O - r_O^{-1/3}} \cdot r_O(1 - \sigma) & \text{ by Prop. \ref{prop:path_length_factor}, \ref{prop:stripe_diff}}  \\
     & \geq |m_i| -   (n_I^{1/2}c^{50} \cdot(2(c^{102} - c))^{-1} + z) \cdot (1 + 2r_O^{-4/3}) \cdot (1 - \sigma)  \\
     & \geq \left(\frac{n_I^{1/2}}{2c^{52}} - 2 + z\right) -   \left( \frac{n_I^{1/2}c^{50}}{2(c^{102} - c)}  + z\right)  \cdot (1 - \sigma) - O(r_O^{-2/3}) & \text{ by Prop. \ref{prop:inner_path_lengths}} \\
     & \geq \frac{n_I^{1/2}}{2c^{52}}  -   \frac{n_I^{1/2}c^{50}}{2(c^{102} - c)}   \cdot (1 - \sigma) - O(r_O^{-2/3})  \\
      & \geq \left( \frac{n_I^{1/2}}{2c^{52}}  -   \frac{n_I^{1/2}c^{50}}{2(c^{102} - c)} \right)  + \frac{n_I^{1/2}c^{50}}{2(c^{102} - c)}\sigma  - O(r_O^{-2/3})  \\
            & \geq -\frac{n_I^{1/2}}{c^{153}}  + \frac{n_I^{1/2}}{8c^{72}}  - O(r_O^{-2/3}) = \Omega(r_O^{2/3}c^{-130}) \tag*{\qedhere}
 \end{align*}
 \end{proof}
 

We have shown that $\Delta(m_i) = \Omega(r_O^{2/3}c^{-130})$ if $m_i$ is a move in \textsc{Backward}, \textsc{Zigzag}, or \textsc{Forward-D}.  Moreover,  if $m_i$ is a move in \textsc{Stationary}, then $\Delta(m_i) \geq 0$ since $\vec{m}_i = \vec{0}$. However, it's not true in general that $\Delta(m_i) \geq 0$ if $m_i$ is a move in \textsc{Forward-S}. This is essentially because $m_i$ can take an $(S_I, T_I)$-path in $G_I$ that is possibly much shorter than the canonical inner graph path used by $\pi^*$. In the following section we will introduce our inner graph potential function $\Phi$, and show that in an amortized sense, moves  in \textsc{Forward-S} are not shorter than the moves in $\pi^*$.

\subsection{Lower Bounding the Amortized Move Difference $\widehat{\Delta}$}

 Let $(s^*, t^*) \in P_I$ be the inner graph critical pair whose canonical path $\pi_I^{s^*, t^*}$ is a subpath of $\pi^*$ in $G_I$, and let $s^* := (s_1^*, s_2^*)$ and $t^* := (t_1^*, t_2^*)$. 
 Note that by construction of $G$,  $(s^*, t^*) = \phi(\vec{v}^*)$. 
Additionally, let  $\vec{v}_I^* := (v_{I, 1}^*, v_{I, 2}^*) = (r_I - c + i^*, \sum_{j=i^*}^c j) \in W_I$ be the canonical inner graph vector corresponding to  critical pair $(s^*, t^*) \in P_I$, where $i^* \in [1, c]$. Then for $s_i = (s_{i, 1}, s_{i, 2}) \in S_I$ we define our inner graph potential function $\Phi$  to be 
\[\Phi(s_i) = \frac{s_{i, 2} \cdot ( i^*)^{-1} + s_{i, 1}}{r_I - c + i^*}\]

Roughly, we may think of $\Phi(s_i)$ as the capacity for path $\pi$ to have future moves $m_j$, $j \geq i$ with  move length difference $\Delta(m_j) < 0$. 
The potential function $\Phi$ will be essential for analyzing the moves in \textsc{Forward-S}. However,  first we verify that the introduction of $\Phi$ does not affect our existing lower bounds on the move length differences of moves in \textsc{Backward}, \textsc{Zigzag}, \textsc{Forward-D}, or \textsc{Stationary}.

\begin{proposition}
Let $m_i$ be a move in $\normalfont \textsc{Backward}, \textsc{Zigzag},$ or $\normalfont \textsc{Forward-D}$. Then  $\widehat{\Delta}(m_i) = \Theta(\Delta(m_i))$. 
\label{prop:bad_moves_bad}
\end{proposition}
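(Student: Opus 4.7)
The plan is to show that for moves in the three ``bad'' categories, the potential difference $|\Phi(s_{i+1}) - \Phi(s_i)|$ is negligibly small compared to the lower bounds on $\Delta(m_i)$ already established in Lemmas \ref{lem:backward}, \ref{lem:zigzag}, and \ref{lem:forwardd}. Once this is shown, the amortization correction in $\widehat{\Delta}(m_i) = \Delta(m_i) - (\Phi(s_{i+1}) - \Phi(s_i))$ only perturbs $\Delta(m_i)$ by a lower-order term, giving the claimed $\Theta$-equivalence.

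First I would establish a uniform upper bound on $\Phi$ over $S_I$. Any input port $s = (s_1, s_2) \in S_I$ satisfies $s_1 \leq r_I/2 = c^{102}/2$ and $s_2 \leq y_I/2 = n_I^{1/2} c^{-50}$. Using $(i^*)^{-1} \leq 1$ and the denominator bound $r_I - c + i^* \geq c^{102} - c$, we immediately get
$$0 \leq \Phi(s) \leq \frac{n_I^{1/2} c^{-50} + c^{102}/2}{c^{102} - c} = O(n_I^{1/2} c^{-152} + 1).$$
Substituting $n_I^{1/2} = \Theta(r_O^{2/3} c^{-58})$ from the parameter choices recorded in Lemma \ref{lem:final_graph_2}, this yields $\Phi(s) = O(r_O^{2/3} c^{-210} + 1)$, and in particular
$$|\Phi(s_{i+1}) - \Phi(s_i)| \leq O(r_O^{2/3} c^{-210} + 1).$$

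Next I would compare this bound against the three established lower bounds: $\Delta(m_i) = \Omega(r_O^{2/3} c^{-110})$ for \textsc{Backward}, $\Omega(r_O^{2/3} c^{-111})$ for \textsc{Zigzag}, and $\Omega(r_O^{2/3} c^{-130})$ for \textsc{Forward-D}. In each case $\Delta(m_i)$ exceeds the potential change by a factor of at least $c^{80}$ (for sufficiently large $c$, and remembering that $r_O$ is polynomial in $n_O$ so the $+1$ additive term in the $\Phi$ bound is also dominated). Consequently
$$\widehat{\Delta}(m_i) = \Delta(m_i) \cdot \bigl(1 \pm o(1)\bigr),$$
which is exactly $\Theta(\Delta(m_i))$.

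The only obstacle here is bookkeeping rather than insight: one must keep careful track of the exponents of $c$ in the $\Phi$ bound and in each of the three move-class lower bounds to verify that the gap really is preserved in every case. The geometric content has already been handled by the prior lemmas; this proposition is essentially a ``no damage'' check confirming that our chosen potential function is small enough that it cannot swamp the positive move-length differences we have already secured for the bad move types.
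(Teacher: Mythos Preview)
Your proposal is correct and follows essentially the same approach as the paper: both bound $\Phi$ uniformly over $S_I$ to obtain $|\Phi(s_{i+1})-\Phi(s_i)| = O(r_O^{2/3}c^{-210})$, then compare against the $\Omega(r_O^{2/3}c^{-130})$ lower bound on $\Delta(m_i)$ from the three prior lemmas. The only cosmetic difference is that the paper absorbs the $s_{i,1}$ contribution directly into the leading term (writing $\Phi(s_i)\le 4s_{i,2}/r_I$), whereas you track the resulting ``$+1$'' explicitly before dismissing it; both are fine.
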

\begin{proof}
To prove this claim we need to upper bound $\Phi(s_{i+1}) - \Phi(s_i)$. Recall from our construction of $G_I$ that $S_I = [1, r_I/2] \times [1, y_I/2]$. Now observe that $\Phi(\cdot) \geq 0$, and $\Phi(s_i)$ is maximized when $s_{i, 1} = r_I/2$ and $s_{i, 2} = y_I/2 = n_I^{1/2}c^{-50} = \Theta(r_O^{2/3}c^{-108})$.
In this case,
\[
\Phi(s_i) =  \frac{s_{i, 2} \cdot ( i^*)^{-1} + s_{i, 1}}{r_I - c + i^*} \leq \frac{4s_{i, 2}}{r_I} = O(r_O^{2/3} \cdot c^{-210})
\]
Then  $\Phi(s_{i+1}) - \Phi(s_i) = O(r_O^{2/3}\cdot c^{-210})$. 
Now recall that by Lemmas \ref{lem:backward}, \ref{lem:zigzag}, and \ref{lem:forwardd}, $\Delta(m_i) = \Omega(r_O^{2/3}c^{-130})$. 
Then the claim immediately follows by taking $c$ to be sufficiently large.
\end{proof}

We can also lower bound the amortized move length difference of stationary moves.

\begin{lemma}
Let $m_i$ be a move in \normalfont \textsc{Stationary}.
\textit{Then}
$\widehat{\Delta}(m_i) \geq 0 $.
\label{lem:stat_move}
\end{lemma}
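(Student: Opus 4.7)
The plan is to exploit that a stationary move has $\vec{m}_i = \vec{0}$, so $d_i = 0$ and $\Delta(m_i) = |m_i|$; the claim reduces to showing $|m_i| \geq \Phi(s_{i+1}) - \Phi(s_i)$. From the proof of Proposition \ref{prop:bad_moves_bad}, $\Phi$ already lies in $[0, O(r_O^{2/3} c^{-210})]$ on $S_I$, so the inequality is immediate whenever $|m_i|$ is even moderately large.

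I would split into two cases based on whether the move ever leaves its starting inner graph copy $G_I^{(i)}$. If it does, it must traverse at least one subdivided path, so $|m_i| \geq z = \Theta(r_O^{2/3} c^{-111})$; since $z$ exceeds the maximum potential gap by a factor of $c^{99}$, this case is immediate.

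The interesting case is when $m_i$ is a walk from $s_i$ to $s_{i+1}$ lying entirely inside $G_I^{(i)}$. I would write its edges as $\epsilon_1 \vec{v}_{i_1}, \ldots, \epsilon_N \vec{v}_{i_N}$ with $\epsilon_k \in \{-1, +1\}$ and $\vec{v}_{i_k} \in W_I$, giving displacement $s_{i+1} - s_i = \sum_k \epsilon_k \vec{v}_{i_k}$ and $|m_i| = N$. Using linearity of $\Phi$ and the identity $\Phi(\vec{v}_i) = 1 + \delta(i)$ where $\delta(i) := \frac{(i - i^*)i^* + \sum_{j=i}^c j}{i^*(r_I - c + i^*)}$, a short calculation gives
\[
\widehat{\Delta}(m_i) \;=\; N - \Phi(s_{i+1} - s_i) \;=\; 2N^- - \sum_{k=1}^N \epsilon_k \delta(i_k),
\]
where $N^+$ and $N^-$ count the forward and backward edges of $m_i$, respectively.

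The main obstacle is showing the right-hand side is nonnegative. For the perturbation term, $|\delta(i_k)| \leq 4c^2/r_I = 4c^{-100}$ follows from $|(i-i^*)i^*| \leq c^2$, $\sum_{j=i}^c j \leq c^2$, and $r_I - c + i^* \geq r_I/2$. For the main term I would use the crucial geometric fact that $s_i, s_{i+1}$ both have first coordinate in $[1, r_I/2]$ while every $\vec{v} \in W_I$ has first coordinate in $[r_I/2, r_I]$; writing each first coordinate as $r_I - \xi_k$ with $\xi_k \in [0, c-1]$ and examining the first coordinate of the displacement gives $|N^+ - N^-| \leq 1/2 + N(c-1)/r_I$. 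Plugging these estimates in yields $\widehat{\Delta}(m_i) \geq N(1 - O(c^{-100})) - 1/2$. Since a single inner-graph edge changes the first coordinate by at least $r_I/2$ while $S_I$ has first-coordinate width at most $r_I/2 - 1$, any single-edge walk between input ports is ruled out, so either $N = 0$ (vacuous, giving $\widehat{\Delta}(m_i) = 0$) or $N \geq 2$, and the lower bound is nonnegative in either case for sufficiently large $c$.
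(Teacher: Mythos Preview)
Your argument is correct, but it is considerably more elaborate than the paper's. The paper also reduces to $\widehat{\Delta}(m_i)=|m_i|-(\Phi(s_{i+1})-\Phi(s_i))$, but then bounds the potential change in one shot: with $d=|m_i^I|$ the inner-graph length, each edge vector in $W_I\cup -W_I$ has second coordinate of magnitude at most $c^2$, so $|s_{i+1,2}-s_{i,2}|\le dc^2$; and since $s_i,s_{i+1}\in S_I$, $|s_{i+1,1}-s_{i,1}|\le r_I/2$. Plugging these directly into the formula for $\Phi$ gives $\Phi(s_{i+1})-\Phi(s_i)\le 2dc^{-100}+2/3\le d$ for $d\ge 1$, and the case $d=0$ is trivial. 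No edge-by-edge expansion, no $N^+/N^-$ balance, and no separate exclusion of $N=1$ is needed.

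Your route---writing $\Phi(\vec v_i)=1+\delta(i)$ to obtain the exact identity $\widehat{\Delta}(m_i)=2N^{-}-\sum_k\epsilon_k\delta(i_k)$ and then controlling $|N^+-N^-|$ via the first-coordinate constraint---is a legitimate alternative that exposes more of the structure (in particular, it makes transparent that every backward inner-graph edge contributes $+2$ to $\widehat\Delta$ up to an $O(c^{-100})$ error). It does cost you the extra step of ruling out $N=1$, which the paper's coarser coordinate bound sidesteps. Your explicit treatment of the ``move leaves the inner graph'' case via $|m_i|\ge z$ is also a nice touch; the paper handles this only implicitly through $|m_i|\ge|m_i^I|$.
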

\begin{proof}
Observe that $d_i = 0$ because we stay in the same inner graph copy in this move, so $\vec{m}_i = \vec{0}$. Then $\widehat{\Delta}(m_i) = |m_i| - (\Phi(s_{i+1}) - \Phi(s_i))$. Let $s_i = (s_{i, 1}, s_{i, 2})$ and let $s_{i+1} = (s_{i+1, 1}, s_{i+1, 2})$.  Observe that $s_{i+1, 1} - s_{i, 1} \leq r_I/2$ since $s_i, s_{i+1} \in S_I$.   Let $m_i^I$ be the subpath of $m_i$ restricted to inner graph copy $G_I^{(i)}$. Note that $|m_i| \geq |m_i^I|$. 
Suppose that $|m_i^I| = d$ for some integer $d > 0$. 
Now note that for all vectors $\vec{v} = (v_1, v_2) \in W_I \cup -W_I$, we have that $v_1 \leq c^2$. Then by the construction of $G_I$ it follows that $s_{i+1, 2} - s_{i, 2} \leq dc^2$. 
Then using the fact that $r_I = c^{102}$ and $i^* \in [1, c]$,
\[\Phi(s_{i+1}) - \Phi(s_i) = \Phi(s_{i+1} - s_i)   \leq \frac{dc^2\cdot (i^*)^{-1} + r_I/2}{r_I - c + 1} \leq 2dc^{-100} + 2/3 \leq d\]

for sufficiently large $c$.  Then   $\widehat{\Delta}(m_i) \geq |m_i^I| - (\Phi(s_{i+1}) - \Phi(s_i)) \geq  0$  as claimed.
\end{proof}

Before lower bounding $\widehat{\Delta}(m_i)$ for moves $m_i$ in \textsc{Forward-S}, we first define a geometric object associated with $m_i$ that will be useful in our analysis.

\begin{definition}[Displacement vector]  
Let $m_i$ be a move  from input port $s_i$ in $G_I^{(i)}$ to input port $s_{i+1}$ in $G_I^{(i+1)}$, and let $t_i \in T_I$ be the output port incident to the subdivided path taken by $m_i$ to reach $s_{i+1}$. The displacement vector $\vec{\delta}_i = (\delta_{i, 1}, \delta_{i, 2})$ of $m_i$ is defined as
\[
\vec{\delta}_i = (t_i - s_i) - (t^* - s^*).
\]
\end{definition}
Observe that intuitively, $\vec{\delta}_i$ corresponds to the difference between the displacement from $s_i$ to $t_i$ and the displacement from $s^*$ to $t^*$ in $\mathbb{Z}^2$.
Despite being a purely geometric notion, $\vec{\delta}_i$ has two key properties related to the structure of graph $G$ that we will make use of in our analysis:
\begin{itemize}
\item \textbf{Update Property.} If $m_i$ is a move in \textsc{Forward-S}, then $s_{i+1}$ is uniquely determined by $s_i$ and $\vec{\delta}_i$.  
This property is formalized by Proposition \ref{prop:vec_height_preservation}.  
\item \textbf{Graph Distance Property.} The displacement vector $\vec{\delta}_i$ is tightly connected to $d_{G_I}(s_i, t_i)$, the graph distance from $s_i$ to $t_i$ in $G_I$. This property is formalized by Lemma \ref{prop:height_length}. 
\end{itemize}
As it turns out, these two properties will be sufficient to show that $\widehat{\Delta}(m_i) \geq 0$ for moves $m_i$ in \textsc{Forward-S}. We will proceed  by formalizing and proving the Update Property of $\vec{\delta}_i$.

\begin{proposition}[Update Property]
Let $m_i$ be a move from $s_i$ to $s_{i+1}$ in $\normalfont \textsc{Forward-S}$.  Then  
$$s_{i+1} = s_i + \vec{\delta}_i.$$ \label{prop:vec_height_preservation}
\end{proposition}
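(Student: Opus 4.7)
The plan is to unpack the definition of $\vec{\delta}_i$ and reduce the claim to a statement about displacements of inner-graph critical pairs. Since $\vec{\delta}_i = (t_i - s_i) - (t^* - s^*)$ by definition, the equality $s_{i+1} = s_i + \vec{\delta}_i$ is equivalent to $t_i - s_{i+1} = t^* - s^*$. So it suffices to show that the two inner-graph critical pairs $(s^*, t^*)$ and $(s_{i+1}, t_i)$ have the same input-to-output displacement in $\mathbb{Z}^2$.

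First, I would use the attachment procedure that defines $\phi$ to identify $(s_{i+1}, t_i)$ as a critical pair in $P_I$. Recall that the move $m_i$ is a forward move, so it leaves $G_I^{(i)}$ through output port $t_i$, traverses the subdivided path associated to some vector $\vec{m}_i \in W_O$, and lands at input port $s_{i+1}$ of $G_I^{(i+1)}$. By the attachment rule, both of these ports are determined by $\phi(\vec{m}_i)$: the input port $s_{i+1}$ is the source of $\phi(\vec{m}_i)$ and the output port $t_i$ is the sink. In particular, $(s_{i+1}, t_i) = \phi(\vec{m}_i) \in P_I$.

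Next, I would invoke the hypothesis $m_i \in \textsc{Forward-S}$, which means $\vec{m}_i$ lies in the same stripe $\mathcal{S}_{i^*}$ of $W_O$ as $\vec{v}^*$. Proposition \ref{prop:phi} then guarantees that $\phi(\vec{m}_i) = (s_{i+1}, t_i)$ has the same canonical inner-graph vector as $\phi(\vec{v}^*) = (s^*, t^*)$, namely $\vec{v}_I^*$. This is the crucial place where the stripe-based design of $\phi$ gets used.

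Finally, I would apply Proposition \ref{prop:P_I_x_diff}, which forces every critical pair in $P_I$ to have first-coordinate displacement exactly $x_I - r_I/2$. In particular $t_{i,1} - s_{i+1,1} = t^*_1 - s^*_1 = x_I - r_I/2$. Since both displacements are positive integer multiples of the common canonical vector $\vec{v}_I^*$, they must be the same multiple, so $t_i - s_{i+1} = t^* - s^*$, which yields the claim. The whole proof is really just bookkeeping; the only step with any content is the appeal to Proposition \ref{prop:phi}, and the main conceptual work was done upstream when the bijection $\phi$ was engineered to align same-stripe outer vectors with a common inner canonical vector.
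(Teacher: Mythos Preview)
Your proof is correct and follows essentially the same approach as the paper: identify $(s_{i+1}, t_i) = \phi(\vec{m}_i)$ as a critical pair in $P_I$, use the \textsc{Forward-S} hypothesis together with Proposition~\ref{prop:phi} to obtain the common canonical inner vector $\vec{v}_I^*$, and invoke Proposition~\ref{prop:P_I_x_diff} to match displacements. Your initial reformulation to $t_i - s_{i+1} = t^* - s^*$ and the concluding ``same integer multiple of $\vec{v}_I^*$'' argument are a bit cleaner than the paper's coordinate-by-coordinate computation, but the substance is identical.
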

\begin{proof}
In move $m_i$ we take a subdivided path plugged into output port $t_i := (t_{i, 1}, t_{i, 2}) \in T_I$ in $G_I^{(i)}$.
This subdivided path corresponds to the outer graph vector $\vec{m}_i \in W_O$. 
By the construction of $\phi$, any such subdivided path will be plugged into an input port $s_{i+1}$ of $G_I^{(i+1)}$ such that $\phi(\vec{m}_i) = (s_{i+1}, t_i) \in P_I$.  By Proposition \ref{prop:P_I_x_diff}, for all $(s_{i+1}, t_i) \in P_I$,  $t_{i, 1} - s_{i+1, 1} = x_I - r_I/2$, where $s_{i+1, 1}$ denotes the first coordinate of $s_{i+1}$. Then $s_{i+1, 1} = t_{i, 1} - (x_I - r_I/2) = t_{i, 1} - (t_1^* - s_1^*)$, since $t_1^* - s_1^* = x_I - r_I/2$ by Proposition \ref{prop:P_I_x_diff}. Consequently,  $s_{i+1, 1} = s_{i, 1} + t_{i, 1} - s_{i, 1} - (t_1^* - s_1^*) = s_{i, 1} + \delta_{i, 1}$. 


Recall that since move $m_i$ is in \textsc{Forward-S},  vector $\vec{m}_i \in W_O$ is in the same stripe as $\vec{v}^* \in W_O$. Then by Proposition \ref{prop:phi}, the critical pair $\phi(\vec{m}_i) = (s_{i+1}, t_i) \in P_I$ has the same canonical inner graph vector in $W_I$ as the critical pair $\phi(\vec{v}^*) = (s^*, t^*) \in P_I$. 
Namely, critical pairs $\phi(\vec{m}_i)$ and $\phi(\vec{v}^*)$ share the canonical inner graph vector $\vec{v}_I^* = (\vec{v}_{I, 1}^*, \vec{v}_{I, 2}^*)  \in W_I$. 
Then since $t_{i, 1} - s_{i+1, 1} = t_1^* - s_1^*$ (by Proposition \ref{prop:P_I_x_diff}), we have that
\begin{align*}
    t_{i, 2} - s_{i+1, 2} = v_{I, 2}^* \cdot t_{i, 1} - s_{i+1, 1}) = (v_I^*)_y \cdot (t_x^* - s_x^{*}) = t_y^* - s_y^*
\end{align*}
by the construction of $P_I$. Then $s_y^{i+1} = t_y^i - (t_y^* - s_y^*) = s_y^i + \delta_y^i$. The claim follows. 
\end{proof}

The above proposition imposes a strong condition on the location of the next input port after a move in \textsc{Forward-S}.
We will use this proposition to argue about how the inner graph potential  $\Phi$ changes after a move in \textsc{Forward-S}. 
See Figure \ref{fig:update} for a visualization of the Update Property acting on moves in \textsc{Forward-S}.

\begin{figure}
    \centering
    \includegraphics[width=0.76\paperwidth]{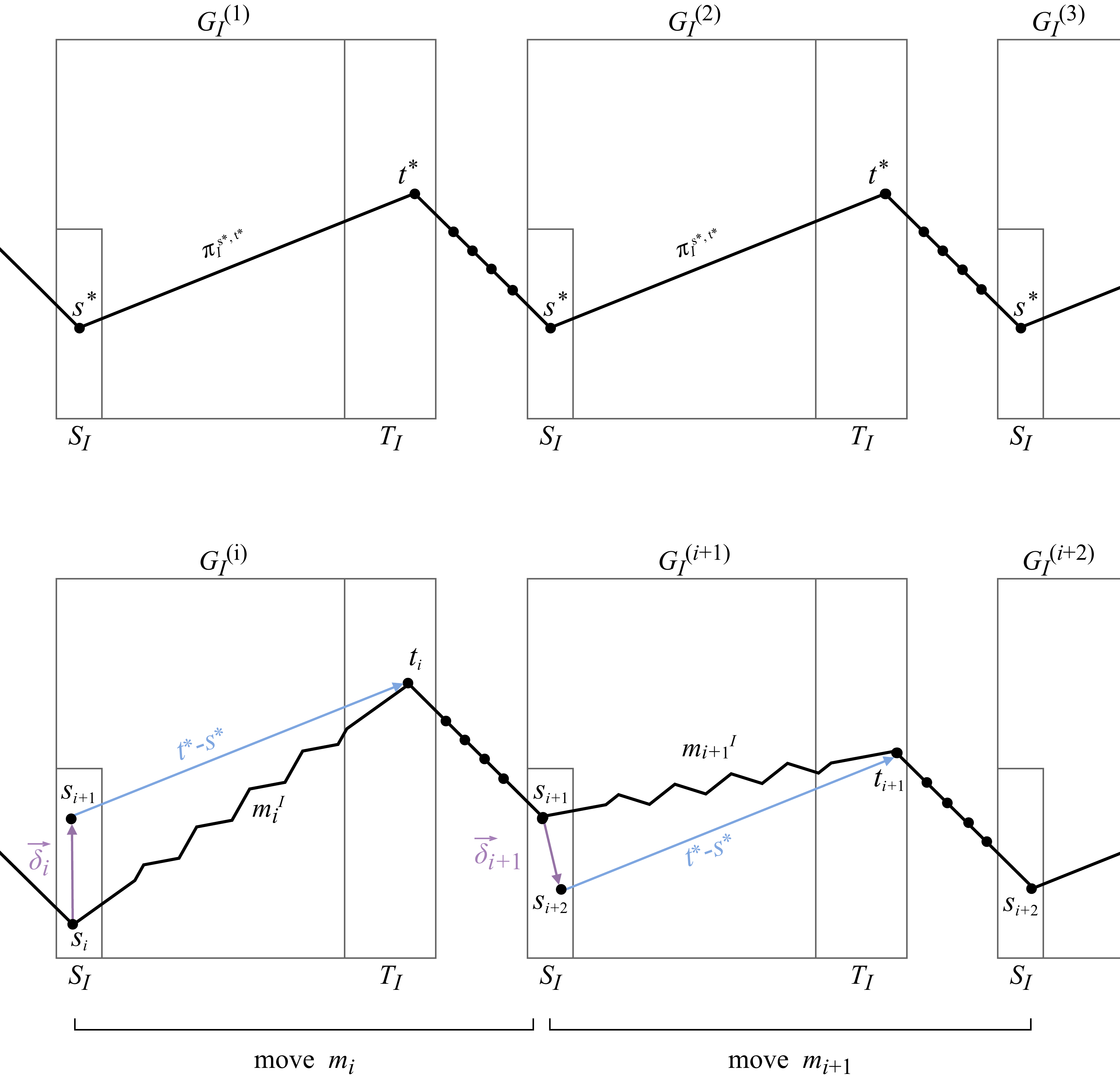}

\caption{\textit{Above,} two consecutive moves in $\pi^*$. The dotted lines between inner graphs represent subdivided paths in $G$. Note that all moves in $\pi^*$ are forward moves that take inner graph path $\pi_I^{s^*, t^*}$ and the subdivided path corresponding to vector $\vec{v}^* \in W_O$.  \textit{Below,} two consecutive moves $m_i$, $m_{i+1}$ belonging to \textsc{Forward-S} in path $\pi$.
Here $m_i^I$ denotes the restriction of $m_i$ to inner graph $G_I^{(i)}$.
Vectors $t^* - s^*$ and $\vec{\delta}_i$ are depicted in blue and purple respectively.    Note that by the Update Property, $s_{i+1}$ is uniquely determined by $s_i$ and $\vec{\delta}_i$. In particular, $s_{i+1} = s_i + \vec{\delta}_i$. Likewise, $s_{i+2} = s_{i+1} + \vec{\delta}_{i+1}$.  }

    \label{fig:update}
\end{figure}

\paragraph{Graph Distance Property.}  
Before stating the graph distance property formally, we will first attempt to convey an intuitive understanding of it. Because our graph $G_I$ is embedded in $\mathbb{Z}^2$ and the edges in $G_I$ correspond to vectors $W_I \subseteq \mathbb{Z}^2$, it is natural to imagine a  correspondence between distances in the Euclidean metric and distances in the graph metric of $G_I$. Roughly, we might presume that vertices $s_i, t_i$ in $G_I$ with a large  Euclidean distance $\|t_i - s_i\|$ in $\mathbb{Z}^2$ will have a large graph distance $d_{G_I}(s_i, t_i)$ in $G_I$. This understanding is approximately correct. 

Suppose that $\delta_x^i \geq 0$. Then 
 $t_x^i - s_x^i \geq t_x^* - s_x^*$,
so vertices $s_i, t_i$ have a larger horizontal displacement than vertices $s^*, t^*$.  
Since all our edges $\vec{w} = (w_x, w_y) \in W_I \cup -W_I$ in $G_I$ have horizontal displacement at most $w_x \leq r_I$, we can imagine that we must travel on more edges in $G_I$ to reach $t_i$ from $s_i$ than are needed to reach $t^*$ from $s^*$.  Likewise, if $\delta_x^i \leq 0$, then we can imagine that we may travel on fewer edges in $G_I$ to reach $t_i$ from $s_i$ than are needed to reach $t^*$ from $s^*$. 

Now suppose that  $\delta_y^i \geq 0$. Then 
 $t_y^i - s_y^i \geq t_y^* - s_y^*$,
so vertices $s_i, t_i$ have a larger vertical displacement than vertices $s^*, t^*$.  
Then it follows that every  $(s_i, t_i)$-path in $G_I$ must attain a larger vertical displacement than every $(s^*, t^*)$-path. Now note that by the construction of $W_I$, edges $\vec{w} = (w_x, w_y) \in W_I \cup -W_I$ with a larger vertical component $w_y$ have a smaller horizontal component $w_x$. Then we will need to travel on more of these edges with smaller horizontal components in order to reach $t_i$ from $s_i$. 
Thus we can imagine that more edges must be traversed in $G_I$ to reach $t_i$ from $s_i$ than are needed to reach $t^*$ from $s^*$. Likewise, if $\delta_y^i \leq 0$, then we can argue that fewer edges are needed to travel from $s_i$ to $t_i$, since we can travel on edges in $W_I$ with larger horizontal components. 

If we examine the moves $m_i$ and $m_{i+1}$ in Figure \ref{fig:update}, our intuition would suggest that $d_{G_I}(s_i, t_i) \gg d_{G_I}(s^*, t^*)$, since $t_y^i - s_y^i \gg t_y^* - s_y^*$. Likewise, our intuition would suggest that $d_{G_I}(s_{i+1}, t_{i+1}) \ll d_{G_I}(s^*, t^*)$, since $t_y^{i+1} - s_y^{i+1} \ll t_y^* - s_y^*$. To summarize, our (completely informal) argument suggests that the graph distance $d_{G_I}(s_i, t_i)$ increases with $\delta_x^i$ and $\delta_y^i$. We now present Lemma \ref{prop:height_length}, which states our precise formalization of the Graph Distance Property. 

\begin{lemma}[Graph Distance Property]
Let $s_i$ be an input port in $S_I$, and let $t_i$  be an output port in $T_I$, where $t_i$ is incident to a subdivided path in $G$. Suppose that $d_{G_I}(s_i, t_i) = |\pi_I^{s^*, t^*}| + d$, for some integer $d$. Then 
\[
\frac{\delta_y^i \cdot (i^*)^{-1} + \delta_x^i}{r_I - c + i^*} \leq d
\]
\label{prop:height_length}
\end{lemma}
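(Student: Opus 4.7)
The plan is to analyze any shortest path in $G_I$ from $s_i$ to $t_i$ by decomposing it into its constituent edge vectors, then applying a carefully chosen linear functional together with the horizontal displacement constraint from Proposition~\ref{prop:P_I_x_diff}. First, I would write $d_{G_I}(s_i, t_i) = L + d$ where $L := |\pi_I^{s^*, t^*}|$, and fix any such shortest path as a sequence of edge vectors $\vec{w}_1, \dots, \vec{w}_k \in W_I \cup -W_I$ satisfying $\sum_j \vec{w}_j = t_i - s_i = L \vec{v}_I^* + \vec{\delta}_i$.

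Next I would introduce the linear functional $\Psi:\mathbb{Z}^2 \to \mathbb{Q}$ defined by $\Psi(\vec{x}) = x_1 + x_2 / i^*$ and apply it to the above identity, obtaining $\Psi(t_i - s_i) = \sum_j \Psi(\vec{w}_j)$. Using the explicit form $\vec{v}_{I, a} = (r_I - c + a,\ f(a))$ with $f(a) = \sum_{j=a}^c j$, a direct calculation using concavity of $f$ shows that among forward edges, $\Psi(\vec{v}_{I, a})$ is maximized at $a = i^*$ with value $\Psi(\vec{v}_I^*) = v_{I,1}^* + v_{I,2}^*/i^*$; backward edges contribute with the opposite sign and are therefore cheap. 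Separately, Proposition~\ref{prop:P_I_x_diff} yields a structural constraint: since $t_i$ is incident to a subdivided path, it is the $t$-side of some critical pair $(s_I', t_i) \in P_I$ with horizontal displacement $t_{i,1} - s_{I,1}' = x_I - r_I/2 = L v_{I, 1}^*$. This forces $\delta_{i,1} = s_{I,1}' - s_{i,1} \in [-r_I/2 + 1,\ r_I/2 - 1]$, a window of width only $r_I - 2$.

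I would then combine these ingredients by a case analysis on the multiset of edges of the shortest path, together with how each edge contributes to both the horizontal coordinate and to $\Psi$. The target bound is $\Psi(t_i - s_i) \leq k\, v_{I, 1}^* + L\, v_{I, 2}^* / i^*$. Once this is established, subtracting $L\, \Psi(\vec{v}_I^*) = L(v_{I,1}^* + v_{I,2}^*/i^*)$ from both sides yields $\delta_{i,1} + \delta_{i,2}/i^* \leq d \cdot v_{I,1}^*$, and dividing by $v_{I,1}^* = r_I - c + i^*$ gives the claim.

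The main obstacle is precisely this combination step. A naive application of the per-edge inequality $\Psi(\vec{w}_j) \leq \Psi(\vec{v}_I^*)$ only gives $\Psi(t_i - s_i) \leq k\, \Psi(\vec{v}_I^*)$, which is the weaker LP-style bound with denominator $v_{I,1}^* + v_{I,2}^*/i^*$ rather than $v_{I,1}^*$. Sharpening this to the claim's denominator requires exploiting the horizontal displacement constraint from Proposition~\ref{prop:P_I_x_diff}: most of the edges in the path must collectively account for the ``main'' horizontal displacement $L v_{I,1}^*$, so that only the $d$ ``excess'' edges (in an amortized sense) can each charge up to $v_{I,1}^*$ to $\delta_{i,1} + \delta_{i,2}/i^*$. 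I expect that pinning down this amortization rigorously, using the integrality of edge counts in $W_I$ and the tight dimensions of $S_I$ and $T_I$, will be the heart of the argument.
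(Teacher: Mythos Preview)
Your setup is correct and closely parallels the paper: the functional $\Psi(\vec{x})=x_1+x_2/i^*$ is (up to scaling) exactly the potential $\Phi$, and your target inequality $\Psi(\vec{\delta}_i)\le d\,v_{I,1}^*$ is literally the statement to be proved. You are also right that the per-edge bound $\Psi(\vec{w}_j)\le\Psi(\vec{v}_I^*)$ only yields $\Psi(\vec{\delta}_i)\le d\,\Psi(\vec{v}_I^*)$, which is too weak when $d>0$. The genuine gap is in the step you flag as ``the heart of the argument'': you do not say how the horizontal constraint actually closes the deficit $d\,v_{I,2}^*/i^*$, and the heuristic of ``$d$ excess edges each charging at most $v_{I,1}^*$'' does not translate into a proof, because the path can use many backward edges and many non-canonical forward edges whose contributions interact.

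The paper does not use a single functional. It writes $\delta_y^i=\delta_1+\delta_2$ with $\delta_2=d\sum_{j\ge i^*}j$ exact, and bounds $\delta_1$ by an LP in the shifted vectors $\vec{q}=\vec{w}-\vec{v}_I^*$: maximize $\sum(q)_y$ subject to $\sum(q)_x=v_x-u_x=\delta_x^i-d\,v_{I,1}^*$. The crux is that the LP optimum is piecewise linear in this constraint value, with slope $-i^*$ when $v_x-u_x>0$ and slope $-(i^*-1)$ when $v_x-u_x<0$; these correspond to the two $Q_1$ vectors $(1,-i^*)$ and $(-1,i^*-1)$. Your $\Psi$ captures only the first slope. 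In the case $v_x-u_x<0$ (equivalently $d\ge0$), the paper gets $\delta_1\le(i^*-1)(u_x-v_x)$, and then the bound $|\delta_x^i|\le r_I/2$ is used to absorb the positive $\delta_2$ term, exploiting $r_I\gg c^2$. So what is missing from your proposal is precisely this two-slope case split together with the quantitative use of $|\delta_x^i|\le r_I/2$ against $d\ge1$; a single linear functional cannot reproduce a piecewise-linear LP envelope.
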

\begin{proof}
This lemma requires a precise and technical analysis of graph distances in $G_I$, so we defer its proof to Section \ref{subsec:lemmaproof}. 
\end{proof}
Notice that the lower bound on $d_{G_I}(s_i, t_i)$ in Lemma \ref{prop:height_length} increases as $\delta_x^i$ and $\delta_y^i$ increase, matching our intuitive understanding of distances in $G_I$. 
Now that we have formally stated our two properties of $\vec{\delta}_i$, we are ready to lower bound the amortized move length difference $\widehat{\Delta}(m_i)$ for moves $m_i$ in \textsc{Forward-S}.

\begin{lemma}
Let $m_i$ be a move in $\normalfont \textsc{Forward-S}$. Then $\widehat{\Delta}(m_i) \geq 0$. \label{lem:forward_s_moves}
\end{lemma}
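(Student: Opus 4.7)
The plan is to combine the Update Property (Proposition \ref{prop:vec_height_preservation}) and Graph Distance Property (Lemma \ref{prop:height_length}) in such a way that $\Phi$ exactly absorbs the gap between the inner-graph traversal of $m_i$ and the canonical inner-graph path $\pi_I^{s^*, t^*}$. The inner graph potential $\Phi$ was designed to be the linear function whose coefficients match the coefficients appearing in the Graph Distance Property, so the two bounds line up precisely.

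First I would unpack $\Phi(s_{i+1}) - \Phi(s_i)$. Since $m_i$ is in \textsc{Forward-S}, the Update Property gives $s_{i+1} = s_i + \vec{\delta}_i$. Because $\Phi$ is an affine function of its two input coordinates with the specific form
\[
\Phi\bigl((a_1,a_2)\bigr) = \frac{a_2\,(i^*)^{-1} + a_1}{r_I - c + i^*},
\]
linearity yields
\[
\Phi(s_{i+1}) - \Phi(s_i) = \frac{\delta_{i,2}\,(i^*)^{-1} + \delta_{i,1}}{r_I - c + i^*}.
\]
Letting $t_i \in T_I$ denote the output port used by $m_i$ and setting $d := d_{G_I}(s_i, t_i) - |\pi_I^{s^*, t^*}|$, the Graph Distance Property then tells us exactly that $\Phi(s_{i+1}) - \Phi(s_i) \leq d$.

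Next I would lower bound $\Delta(m_i) = |m_i| - L_{\pi^*}\cdot d_i$ by $d$. The length of $m_i$ is the length of its inner-graph traversal from $s_i$ to $t_i$ in $G_I^{(i)}$ plus one subdivided path of length $z$, so $|m_i| \geq d_{G_I}(s_i, t_i) + z = d + |\pi_I^{s^*, t^*}| + z$. For the other term, because the move vector $\vec{m}_i \in W_O$, property 2 of Lemma \ref{lem:outer_graph_SCS} gives $d_i = \proj_{\vec{v}^*}\vec{m}_i \leq \|\vec{v}^*\|$; since every canonical move in $\pi^*$ has length exactly $|\pi_I^{s^*, t^*}| + z$ and covers displacement $\|\vec{v}^*\|$, we have $L_{\pi^*} = (|\pi_I^{s^*, t^*}| + z)/\|\vec{v}^*\|$, and hence $L_{\pi^*}\cdot d_i \leq |\pi_I^{s^*, t^*}| + z$. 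Subtracting gives $\Delta(m_i) \geq d$.

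Combining the two bounds gives $\widehat{\Delta}(m_i) = \Delta(m_i) - (\Phi(s_{i+1}) - \Phi(s_i)) \geq d - d = 0$, as required. The main obstacle here is conceptual rather than technical: it is the recognition that the apparently ad hoc potential $\Phi$ is exactly the right linear functional to cancel the Graph Distance Property's lower bound, and that Forward-S is precisely the move class where the Update Property ensures the change in $\Phi$ can be expressed purely in terms of $\vec{\delta}_i$. All of the actual difficulty has already been absorbed into the proofs of Proposition \ref{prop:vec_height_preservation} and Lemma \ref{prop:height_length}; once those are in hand, the current lemma is a short calculation.
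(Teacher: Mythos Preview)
Your proposal is correct and essentially identical to the paper's own proof: the paper also computes $\Delta(m_i) \geq |m_i^I| - |\pi_I^{s^*,t^*}| \geq d$ using the exact identity $L_{\pi^*} = (|\pi_I^{s^*,t^*}|+z)/\|\vec{v}^*\|$ together with $d_i \leq \|\vec{v}^*\|$, and then applies the Update Property and the Graph Distance Property to get $\Phi(s_{i+1})-\Phi(s_i) \leq d$. The only cosmetic difference is that the paper cites Property~3 of Lemma~\ref{lem:CIS} for the projection bound whereas you cite the equivalent Property~2 of Lemma~\ref{lem:outer_graph_SCS}.
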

\begin{proof}
Let $m_i^I$ be the restriction of $m_i$ to inner graph copy $G_I^{(i)}$. Then $|m_i| = |m_i^I| + z$. Additionally,  $m_i^I$ must begin at some input port $s_i \in S_I$ and end at some output port $t_i \in T_I$, where $t_i$ is incident to a subdivided path. Note that $d_{G_I}(s_i, t_i) \leq |m_i^I|$. 
 Now recall that $L_{\pi^*} = \frac{|\pi_I^{s^*, t^*}| + z}{\|\vec{v}^*\|}$ and observe the following:

\begin{align*}
    \Delta(m_i) & = |m_i| - L_{\pi^*}  \cdot d_i  \\ 
    & = |m_i^I| + z - \frac{|\pi_I^{s^*, t^*}| + z}{\|\vec{v}^*\|}  \cdot d_i \\
   & \geq |m_i^I| + z - \frac{|\pi_I^{s^*, t^*}| + z}{\|\vec{v}^*\|}  \cdot \|\vec{v}^* \|  & \text{ by Property 3 of Lemma \ref{lem:CIS}} \\ 
   & \geq |m_i^I| - |\pi_I^{s^*, t^*}|
\end{align*}
\vspace{1mm}

Now let  $d_{G_I}(s_i, t_i) = |\pi_I^{s^*, t^*}| + d$ for some integer $d$. Then
\vspace{1mm}

\begin{align*}
    \widehat{\Delta}(m_i) & \geq |m_i^I| - |\pi_I^{s^*, t^*}|  - (\Phi(s_{i+1}) - \Phi(s_i))  \\
    & \geq d - (\Phi(s_{i+1}) - \Phi(s_i))   \\
    & = d - \frac{(s_y^{i+1} - s_y^i)(i^*)^{-1} + (s^{i+1}_x - s^i_x)}{r_I - c + i^*}
    \\
    & = d - \frac{\delta_y^i \cdot (i^*)^{-1} + \delta_x^i}{r_I - c + i^*} 
    &  \text{ by Proposition \ref{prop:vec_height_preservation}}
    \\
    & \geq d - d 
        &  \text{ by Lemma \ref{prop:height_length}}
    \\
    & \geq 0
\end{align*}
\end{proof}

With the exception of the proof of Lemma \ref{prop:height_length}, which has been deferred to Section \ref{subsec:lemmaproof}, we have established lower bounds for the amortized move difference $\widehat{\Delta}$ of all moves in the moveset. We will finish our lower bound argument in Section \ref{subsec:finishing}.

\subsection{Proving Lemma \ref{prop:height_length}  (Graph Distance Property)
\label{subsec:lemmaproof}
}


We will prove the following inequality, which is equivalent to the one stated in Lemma \ref{prop:height_length}:
\[ \delta_y^i  \leq  i^* (d   (r_I - c + i^*) - \delta_x^i)\]

Let $s_i := (s_x, s_y) \in S_I$, and let $t_i := (t_x, t_y) \in T_I$, where $t_i$ is incident to a subdivided path in $G$.  Let $\pi_i$ be a shortest $(s_i, t_i)$-path of length $\ell = |\pi_I^{s^*, t^*}| + d$ in $G_I$. Corresponding to $\pi_i$, we define the vector $\vec{v} = (v_x, v_y) = t_i - s_i$. Let $\vec{v}_1, \vec{v}_2, \dots, \vec{v}_{\ell}$ be the vectors corresponding to the edges of $\pi_i$ and observe that $\vec{v} = \sum_{i = 1}^{\ell} \vec{v}_i$. Furthermore, $\vec{v}_i \in W_I \cup -W_I$ for $i \in [1, \ell]$. Additionally, we define the vector $\vec{u} = (u_x, u_y) = \sum_{i=1}^{\ell} \vec{v}_I^*$.  

Observe that the quantity we want to upper bound is $\delta_y^i = (t_y - s_y) - (t_y^* - s_y^*) = v_y - (t_y^* - s_y^*)$.  We find that it easiest to separately upper bound $\delta_1 = v_y - u_y$ and $\delta_2 = u_y - (t_y^* - s_y^*)$. Then since $\delta_y^i = \delta_1 + \delta_2$, this will give us an upper bound of $\delta_y^i$ as desired. We first find the precise value of $\delta_2$.

\vspace{2mm}

\begin{proposition}
    \[
\delta_2 = d \cdot  \sum_{j=i^*}^c j
    \]
\end{proposition}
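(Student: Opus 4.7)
The plan is to prove this proposition by a direct calculation, using the definitions of $\vec{u}$, $\delta_2$, and the canonical vector $\vec{v}_I^*$. No deep structural argument is needed here; the statement is essentially bookkeeping against the fact that the canonical inner path consists of $|\pi_I^{s^*, t^*}|$ copies of the single vector $\vec{v}_I^*$.

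First I would unpack $u_y$. By definition $\vec{u} = \sum_{i=1}^{\ell} \vec{v}_I^* = \ell \cdot \vec{v}_I^*$, so in particular
\[
u_y = \ell \cdot v_{I,2}^* = \ell \cdot \sum_{j=i^*}^{c} j,
\]
using that $\vec{v}_I^* = (r_I - c + i^*,\ \sum_{j=i^*}^c j)$.

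Next I would express $t_y^* - s_y^*$ in the same form. Since $\pi_I^{s^*, t^*}$ is the canonical $(s^*, t^*)$-path, every one of its $|\pi_I^{s^*, t^*}|$ edges uses the vector $\vec{v}_I^*$, so $t^* - s^* = |\pi_I^{s^*, t^*}| \cdot \vec{v}_I^*$. Reading off the second coordinate,
\[
t_y^* - s_y^* = |\pi_I^{s^*, t^*}| \cdot \sum_{j=i^*}^{c} j.
\]

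Finally I would combine these two identities with the hypothesis $\ell = |\pi_I^{s^*, t^*}| + d$:
\[
\delta_2 \;=\; u_y - (t_y^* - s_y^*) \;=\; \bigl(\ell - |\pi_I^{s^*, t^*}|\bigr) \sum_{j=i^*}^{c} j \;=\; d \cdot \sum_{j=i^*}^{c} j,
\]
which is the claim. There is no real obstacle here; the only thing to be careful about is that $d$ may be negative (when $\pi_i$ is shorter than $\pi_I^{s^*, t^*}$), but since all the arithmetic above is linear this causes no trouble and the formula is an exact equality in either case.
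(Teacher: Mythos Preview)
Your proof is correct and essentially identical to the paper's: both compute $\vec{u} = \ell\,\vec{v}_I^* = (|\pi_I^{s^*,t^*}| + d)\,\vec{v}_I^* = (t^* - s^*) + d\,\vec{v}_I^*$ and read off the second coordinate. The paper just condenses your two intermediate identities into a single line.
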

\begin{proof}
We find the precise value of $\delta_2$ by observing that since $\vec{u} = |\pi_I^{s^*, t^*}|\cdot \vec{v}_I^* + d\vec{v}_I^* = t^* - s^* +  d\vec{v}_I^*$,  it follows that
\[
\delta_2 = u_y - (t_y^* - s_y^*)  =  d \cdot \sum_{j=i^*}^c j  
\]
where the final equality follows from the fact that $\vec{v}_I^* = ((v_I^*)_x, (v_I^*)_y) =(r_I - c + i^*, \sum_{j = i^*}^c j)$.
\end{proof}

 For the rest of the proof, we will make use of the fact that  $\sum_{j = i}^c j = -i(i-1)/2 + c(c+1)/2 \geq 0$ for $i \in [1, c]$.  Before upper bounding $\delta_1$, we first observe that by Proposition \ref{prop:P_I_x_diff},
\begin{align*}
v_x - u_x &= v_x - (t^*_x - s^*_x) - d\cdot (v_I^*)_x\\
&= \delta_x^i - d \cdot (v_I^*)_x \\
& = \delta_x^i - d (r_I - c + i^*)
\end{align*}
Moreover, since $s_i \in S_I$, $1 \leq s_x \leq r_I/2$. Likewise, since $t_i \in T_I$, $x_I - r_I \leq t_i \leq x_I$. However, since $t_i$ is incident to a subdivided path, it follows by Proposition \ref{prop:P_I_x_diff} that the horizontal displacement between $t_i$ and some input port in $S_I$ is exactly $x_I - r_I/2$. Consequently, we obtain the stronger condition that $x_I - r_I/2 + 1 \leq t_i \leq x_I$. Then since $v_x = t_x - s_x$, it follows that $x_I - r_I \leq v_x \leq x_I$.  Finally, since $v_x - (t_x^* - s_x^*) = \delta_x^i$, we conclude that $$-r_I/2 \leq \delta_x^i \leq r_I/2.$$

Now we can upper bound $\delta_1$ by upper bounding the solution of  the following  optimization problem IP1:

\begin{mdframed}[align=center,userdefinedwidth=30em]
\textbf{IP1:} 
\begin{align*}
     \text{ maximize } \qquad &  v_y - u_y = \sum_{i=1}^{\ell} ( (v_i)_y - (v_I^*)_y) \\
    \text{subject to} \qquad  & v_x - u_x = \sum_{i=1}^{\ell} ((v_i)_x - (v_I^*)_x) \\
  &   \vec{v}_i = ((v_i)_x, (v_i)_y) \in W_I \cup -W_I \text{ for } i \in [1, \ell]
\end{align*}
\end{mdframed}

We now perform the following change of variables. Let
\[
\vec{q}_i := \vec{v}_i - \vec{v}_I^* \qquad \text{ for } i \in [1, \ell]
\]
Then the constraint that $\vec{v}_i \in W_I \cup -W_I$ becomes $\vec{q}_i \in Q_1 \cup Q_2$, where
$$
Q_1  := W_I - \vec{v}_I^* =  \left \{ \left(i - i^*,  \frac{-i(i-1) + i^*(i^* - 1)}{2}\right)  \bigm \vert i \in [1, c]  \right \} 
$$
and
\[
Q_2 := -W_I - \vec{v}_I^* =  \left \{
\left(-i - 2r_I + 2c - i^*, \frac{i(i - 1) + i^*(i^* - 1) }{2} -c(c+1) \right)  \bigm \vert i \in [1, c]
\right \}.
\]
Set $Q_1$ corresponds to the translated vectors from $W_I$ and $Q_2$ corresponds to the translated vectors from $-W_I$.  We now give a   linear relaxation of IP1:

\begin{mdframed}[align=center,userdefinedwidth=30em]
\textbf{LP1:}
\begin{align*}
    \text{maximize } \qquad  \delta_1' =  & \sum_{\vec{q}  \in Q_1 \cup Q_2} q_y w_{\vec{q}} \\
    \text{subject to} \qquad  & \sum_{\vec{q} \in Q_1 \cup Q_2} q_xw_{\vec{q}} =  v_x - u_x \\
    &
  &   w_{\vec{q}} \geq 0 \text{ for }  \vec{q}  \in Q_1 \cup Q_2
\end{align*}
\end{mdframed}

Note that $\delta_1' \geq \delta_1$, since
any feasible solution $\{\vec{v}_i\}_{i \in [1, \ell]}$ to IP1 can be converted to a feasible solution $\{w_{\vec{q}}\}_{\vec{q} \in Q_1 \cup Q_2}$ in LP1 with the same value. Thus it will be sufficient for our purposes to upper bound $\delta_1'$. To that end, we observe the following  useful fact about  solutions to  LP1. This fact, expressed in Proposition \ref{prop:optim},  follows straightforwardly from a convexity property of $Q_1$ and $Q_2$. 
We will defer the proof of Proposition \ref{prop:optim} to  Appendix \ref{app:optim}.  

\begin{proposition}
\label{prop:optim}
There exist optimal solutions to \text{\normalfont LP1} that are of the form
\[
\delta_1' = w_{\vec{q}_1}   \cdot (q_1)_y + w_{\vec{q}_2} \cdot (q_2)_y
\]
where $\vec{q}_1 = ((q_1)_x, (q_1)_y) \in Q_1$, $\vec{q}_2 = ((q_2)_x, (q_2)_y) \in Q_2$, and  $w_{\vec{q}_1}   \cdot (q_1)_x + w_{\vec{q}_2} \cdot (q_2)_x = v_x - u_x$.
\end{proposition}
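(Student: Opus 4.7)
The plan is to first show that LP1 is bounded, so that an optimum exists, and then show that any optimum can be rebalanced to place positive weight on at most one vector from each of $Q_1$ and $Q_2$. Both steps hinge on the strict concavity of the function $f(i) = \sum_{j=i}^c j$ underlying the inner graph's strongly convex set $W_I$.

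For boundedness, let $\lambda^* := f'(i^*) = \tfrac12 - i^*$ be the slope of the tangent to the curve $y = f(x)$ at the point $x = i^*$. By the concavity of $f$, every $\vec{q} = (i-i^*,\, f(i)-f(i^*)) \in Q_1$ satisfies $q_y \le \lambda^* q_x$. For $\vec{q} \in Q_2$ the same inequality also holds, since $q_y$ is negative while $\lambda^* q_x > 0$ (here $\lambda^* < 0$, and $q_x$ is a large negative number whose magnitude is dominated by $r_I = c^{102}$). Thus for any feasible solution $\{w_{\vec{q}}\}$ to LP1,
\[ \sum_{\vec{q}} q_y\, w_{\vec{q}} \;\le\; \lambda^* \sum_{\vec{q}} q_x\, w_{\vec{q}} \;=\; \lambda^*(v_x - u_x), \]
which is finite. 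So LP1 attains its maximum.

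For the rebalancing step, I would first verify that any two distinct vectors in $Q_1$ are non-proportional, and similarly for $Q_2$. For $Q_1$, this follows from the explicit chord-slope formula $(f(i)-f(i^*))/(i-i^*) = -(i+i^*-1)/2$, which is strictly monotone in $i$, so distinct values of $i$ give $Q_1$-vectors with distinct slopes through the origin. For $Q_2$ an analogous but more computational argument, using $r_I \gg c$, shows that the slopes from the origin to the $Q_2$-vectors are also pairwise distinct. Now suppose some feasible solution has positive weights $w_a, w_b$ on two distinct vectors $\vec{q}_a, \vec{q}_b \in Q_1$. The direction $(\delta_a, \delta_b) = (-(q_b)_x,\, (q_a)_x)$ leaves the equality constraint $\sum q_x w_{\vec{q}} = v_x - u_x$ invariant, and the rate of change of the objective along this direction is the nonzero determinant $(q_a)_x (q_b)_y - (q_a)_y (q_b)_x$. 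Walking in whichever of the two signed directions weakly increases the objective, we reach a feasible solution where one of $w_a, w_b$ has dropped to zero, without decreasing the objective. The identical argument applies within $Q_2$. Iterating yields an optimum with at most one positive weight in $Q_1$ and at most one in $Q_2$; to match the form in the proposition, any missing side is padded with an arbitrary vector from the opposite set carrying weight zero, which trivially preserves the equality $w_{\vec{q}_1}(q_1)_x + w_{\vec{q}_2}(q_2)_x = v_x - u_x$.

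The main obstacle in this plan is the non-proportionality verification for $Q_2$: unlike $Q_1$, the vectors of $Q_2$ live far from the origin (their $x$-coordinates contain the large term $-2r_I$), so ruling out collinearity through the origin requires a careful estimate of how the slope depends on $i$ in the presence of the dominant $r_I$ contribution. Once this is established, the boundedness check and the rebalancing step follow as sketched above.
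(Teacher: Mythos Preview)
Your approach is correct and genuinely different from the paper's. The paper gives an explicit, constructive reduction: it first pushes all $Q_1$-weight onto the two vectors $(-1,f(-1))$ and $(1,f(1))$ using the concavity of $f$ (showing the tangent-line bounds $f(x)\le -(i^*-1)x$ for $x\le -1$ and $f(x)\le -i^*x$ for $x\ge 1$), then collapses those two into one by a direct case analysis, and finally for $Q_2$ pushes all weight onto the single vector with smallest $x$-coordinate using monotonicity and the sign pattern of $g$. Your argument instead combines a global dual bound (the tangent slope $\lambda^*$) with an edge-walking reduction, which is cleaner and closer to standard LP reasoning. In fact you could shortcut further: LP1 has a single equality constraint, so once boundedness is established, a basic optimal solution already has at most \emph{one} positive variable in total.

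Your diagnosis of where the difficulty lies is inverted, though. The $Q_2$ non-proportionality you flag as ``the main obstacle'' is irrelevant: every $\vec q\in Q_2$ has $q_x<0$, so for any pair the direction $(-(q_b)_x,(q_a)_x)$ has components of opposite sign, and hence one weight hits zero in \emph{each} signed direction---termination is automatic regardless of the determinant. The genuine subtlety is in $Q_1$, where two vectors can have $x$-coordinates of opposite sign (one with $i<i^*$, one with $i>i^*$); then the constraint-preserving direction has same-signed components, and the objective-improving direction could in principle be the unbounded one. This is exactly where your boundedness step earns its keep: if that direction were unbounded and strictly improving, LP1 would be unbounded, contradicting the inequality $q_y\le\lambda^* q_x$ you already proved. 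You should make that link explicit in the write-up rather than leaving it implicit in ``we reach a feasible solution where one of $w_a,w_b$ has dropped to zero.''
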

\begin{proof}
Deferred to Appendix \ref{app:optim}.
\end{proof}

\noindent For the remainder of the proof, we split our analysis into three cases:
\begin{itemize}
    \item \textbf{Case 1:} $v_x - u_x > 0$
    \item \textbf{Case 2:} $v_x - u_x < 0$
    \item \textbf{Case 3:} $v_x - u_x = 0$
\end{itemize}

We begin with Case 1.

\vspace{3mm}

\begin{proposition}[Case 1]
    If $v_x - u_x > 0$, then $\delta_y^i  \leq  i^* (d   (r_I - c + i^*) - \delta_x^i)$.
    \label{prop:case1}
\end{proposition}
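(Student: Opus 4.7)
The plan is to establish the equivalent inequality $\delta_y^i \leq i^*(d(r_I - c + i^*) - \delta_x^i)$ by combining two ingredients: (a) the hypothesis $v_x - u_x > 0$ forces $d \leq 0$, so $\delta_2 = d\sum_{j=i^*}^c j \leq 0$; and (b) an LP bound $\delta_1' \leq -i^*(v_x - u_x)$. Chaining these via $\delta_y^i = \delta_1 + \delta_2 \leq \delta_1' + \delta_2$, together with the identity $v_x - u_x = \delta_x^i - d(r_I - c + i^*)$, will yield the claim directly.

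Step (a) is a short sign check. Since $s_x \geq 1$ and $t_x \leq x_I$, we get $\delta_x^i \leq x_I - 1 - (t_x^* - s_x^*) < r_I/2$, using $t_x^* - s_x^* = x_I - r_I/2$ from Proposition~\ref{prop:P_I_x_diff}. On the other hand, if $d \geq 1$, then $d(r_I - c + i^*) \geq r_I - c + 1$, which for sufficiently large $c$ far exceeds $r_I/2$, forcing $v_x - u_x < 0$ and contradicting the hypothesis. Hence $d \leq 0$.

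Step (b) is the heart of the argument. By Proposition~\ref{prop:optim}, there is an LP1 optimum of the form $\delta_1' = w_1 (q_1)_y + w_2 (q_2)_y$ with $q_1 \in Q_1$, $q_2 \in Q_2$, $w_1, w_2 \geq 0$, and $w_1 (q_1)_x + w_2 (q_2)_x = v_x - u_x > 0$. Since every vector in $Q_2$ has first coordinate of order $-r_I$, the positive-sum constraint forces $(q_1)_x > 0$, i.e., $q_1$ corresponds to an index $i > i^*$. Along the restriction of $Q_1$ to $q_x > 0$, a direct computation gives slope $(q_1)_y/(q_1)_x = -(i^* + i - 1)/2$, which is least negative at $i = i^* + 1$, yielding the candidate $q_1 = (1, -i^*)$. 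A standard one-parameter LP sensitivity calculation then shows that the marginal rate of $\delta_1'$ with respect to $w_2$ equals $(q_2)_y + i^*(q_2)_x$; the $-2 r_I i^*$ contribution inside $i^*(q_2)_x$ dominates the remaining $O(c^2)$ terms for every $q_2 \in Q_2$, making this rate strongly negative. Hence $w_2 = 0$ is optimal, and $\delta_1' = -i^*(v_x - u_x)$.

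The main obstacle is the LP marginal analysis in step (b): one must check that the $-2r_I i^*$ term strictly dominates uniformly over all $q_2 \in Q_2$ (parametrized by $j \in [1, c]$), and also rule out any $q_1 \in Q_1$ with $i > i^* + 1$ improving the objective. This is bookkeeping-heavy but conceptually straightforward given the scale separation $r_I = c^{102} \gg c^2$. The sign observation in step (a) is the key insight enabling Case 1; Cases 2 and 3 will require analogous but reparametrized LP arguments.
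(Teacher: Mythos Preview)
Your proposal is correct and follows essentially the same route as the paper: deduce $(q_1)_x>0$ from the constraint, reduce to $w_2=0$, identify the optimal $q_1=(1,-i^*)$ giving $\delta_1'\le -i^*(v_x-u_x)$, and combine with $d\le 0$ (hence $\delta_2\le 0$) to finish.

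One small point worth tightening: you invoke the scale separation $r_I=c^{102}\gg c^2$ to force the marginal $(q_2)_y+i^*(q_2)_x$ negative, and you flag ``ruling out other $q_1$'' as separate bookkeeping. The paper's version is cleaner here: once you know $(q_1)_x>0$, the definition of $Q_1$ gives $(q_1)_y<0$, and since $(q_2)_x<0$ and $(q_2)_y<0$ always, the quantity $\frac{(q_2)_x(q_1)_y}{(q_1)_x}-(q_2)_y$ is nonnegative by signs alone, for \emph{every} admissible $q_1$ and $q_2$. This simultaneously kills $w_2$ and reduces the problem to maximizing the slope $(q_1)_y/(q_1)_x$ over $q_1$ with $(q_1)_x>0$---no magnitude comparison or case-by-case $q_1$ elimination needed. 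Your argument works, but reordering it this way (drop $w_2$ first via signs, then optimize $q_1$) removes the apparent gap you flagged.
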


\begin{proof}




We may assume our solution is of the form $\delta_1' = w_{\vec{q}_1} \cdot  (q_1)_y + w_{\vec{q}_2} \cdot (q_2)_y$ subject to $\vec{q}_1 \in Q_1$, $\vec{q}_2 \in Q_2$, and $w_{\vec{q}_1}  \cdot (q_1)_x + w_{\vec{q}_2} \cdot (q_2)_x = v_x - u_x$. First note that $(q_2)_x < 0$ and $(q_2)_y < 0$ for all $\vec{q}_2 \in Q_2$, so it follows that $(q_1)_x > 0$. Then by the definition of $Q_1$, $(q_1)_y < 0$. Then the following feasible solution will be at least as good as our initial solution: 
$$\delta_1' = w'_{\vec{q}_1} \cdot (q_1)_y,  \text{\quad where \quad} w'_{\vec{q}_1} := (v_x - u_x) / (q_1)_x$$ 

\begin{itemize}
\item \textbf{Feasible:} Note that $v_x - u_x > 0$ and $(q_1)_x > 0$, so $w'_{\vec{q}_1} > 0$ as required. Additionally, the constraint $w'_{\vec{q}_1} \cdot (q_1)_x = v_x - u_x$ is satisfied. 
\item \textbf{Optimal:}  The difference between the value of our new solution and our old solution is 
\begin{align*}
 & w'_{\vec{q}_1} \cdot (q_1)_y - (w_{\vec{q}_1} \cdot  (q_1)_y + w_{\vec{q}_2} \cdot (q_2)_y) \\
& = (v_x - u_x) \cdot \frac{(q_1)_y}{(q_1)_x} -  (w_{\vec{q}_1} \cdot  (q_1)_y + w_{\vec{q}_2} \cdot (q_2)_y) \\
& = (w_{\vec{q}_1}  \cdot (q_1)_x + w_{\vec{q}_2} \cdot (q_2)_x)  \cdot \frac{(q_1)_y}{(q_1)_x} -  (w_{\vec{q}_1} \cdot  (q_1)_y + w_{\vec{q}_2} \cdot (q_2)_y) \\
& = w_{\vec{q}_2} \left( \frac{(q_2)_x (q_1)_y}{(q_1)_x} - (q_2)_y \right) \geq 0
\end{align*}
The inequality follows from the fact that $(q_1)_x > 0$, $(q_1)_y < 0$, $(q_2)_x < 0$, and $(q_2)_y < 0$ as noted earlier in the proof. We conclude our new solution is at least as good as the initial solution.
\end{itemize}

Consequently, we may assume that our  solution is of the form $\delta_1' = w_{\vec{q}_1} (q_1)_y$, where $w_{\vec{q}_1} (q_1)_x = v_x - u_x$. This term is maximized when we choose $\vec{q}_1 \in Q_1$ to be the vector  with the maximum $\frac{(q_1)_y}{(q_1)_x}$ ratio, where $(q_1)_x > 0$. It is straightforward to verify that vector $(1, -i^*) \in Q_1$ has the maximum such ratio $\frac{(q_1)_y}{(q_1)_x} = -i^*$.

We conclude that $\delta_1 \leq \delta_1' = \frac{(q_1)_y}{(q_1)_x} \cdot (v_x - u_x) =  -i^*(v_x - u_x)$.  Recall that $v_x - u_x = \delta_x^i - d (r_I - c + i^*)$ and $-r_I/ 2 \leq \delta_x^i \leq r_I/2$. Then since $0 \leq v_x - u_x $, it follows that $d \leq  \delta_x^i / (r_I - c + i^*) < 1$. Consequently, $d \leq 0$. Then
\begin{align*}
    \delta_y^i & = \delta_1 + \delta_2 \\
    & \leq -i^*(v_x - u_x) + d \cdot  \sum_{j=i^*}^c j \\
    & \leq i^*(u_x - v_x) = i^*(d (r_I - c + i^*) - \delta_x^i), 
\end{align*} as desired.
\end{proof}

\vspace{5mm}

\begin{proposition}[Case 2]
    If $v_x - u_x < 0$, then $\delta_y^i  \leq  i^* (d   (r_I - c + i^*) - \delta_x^i)$.
    \label{prop:case2}
\end{proposition}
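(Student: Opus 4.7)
The proof of Case 2 will parallel the proof of Case 1 with two adjustments: the optimal LP1 vector now has negative rather than positive $x$-coordinate, and the sign of $d$ forces us to reason in the opposite direction. By Proposition~\ref{prop:optim}, or simply by treating any feasible solution $\{w_{\vec{q}}\}$ of LP1, the goal is to upper bound $\delta_1' = \sum w_{\vec{q}}\, q_y$ subject to $\sum w_{\vec{q}}\, q_x = v_x - u_x < 0$ with $w_{\vec{q}} \geq 0$. Because every $\vec{q}_2 \in Q_2$ has both coordinates negative (for $r_I$ large enough), such vectors can only drag $\delta_1'$ down; the useful ones are those $\vec{q}_1 \in Q_1$ with $q_x < 0$, i.e., parametrized by $i < i^*$, for which $q_y > 0$. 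The extremal such vector is $\vec{q}_1 = (-1, i^* - 1)$, attained at $i = i^* - 1$, giving ``rate'' $i^* - 1$.

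In place of a direct swap argument (which would require splitting on $\mathrm{sign}(q_x)$ and separately handling $i^* = 1$), my plan is to prove the clean pointwise inequality
\[
q_y \;\leq\; -(i^* - 1)\, q_x \qquad \text{for every } \vec{q} \in Q_1 \cup Q_2.
\]
For $\vec{q} \in Q_1$ indexed by $i \in [1, c]$, a direct expansion reduces this to $(i^* - i)(i^* - i - 1) \geq 0$, which holds for every integer $i$ since $k(k-1) \geq 0$ for all $k \in \mathbb{Z}$. For $\vec{q} \in Q_2$ it is immediate from signs: $q_y < 0$ while $-(i^* - 1)\,q_x \geq 0$ using $i^* \geq 1$ and $q_x < 0$. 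Taking a non-negative combination of this inequality against the weights $w_{\vec{q}}$ immediately yields $\delta_1 \leq \delta_1' \leq -(i^* - 1)(v_x - u_x)$.

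To conclude, recall that $\delta_2 = d\sum_{j = i^*}^c j$ and that the target $\delta_y^i \leq -i^*(v_x - u_x)$ is algebraically equivalent to the inequality in the Proposition. Combining the bound on $\delta_1$ with $\delta_y^i = \delta_1 + \delta_2$, the target reduces to
\[
d\,\Bigl[(r_I - c + i^*) - \sum_{j = i^*}^c j\Bigr] \;\geq\; \delta_x^i.
\]
In Case 2 the hypothesis $v_x - u_x < 0$ together with $|\delta_x^i| \leq r_I/2$ and integrality of $d$ forces $d \geq 0$. When $d = 0$ the inequality becomes $\delta_x^i \leq 0$, which is exactly $v_x - u_x = \delta_x^i$; when $d \geq 1$, the bracketed coefficient is at least $c^{102}/2$ (using $r_I = c^{102}$ and $\sum_{j = i^*}^c j \leq c(c+1)/2$ for $c$ large), which easily dominates $\delta_x^i \leq r_I/2$.

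The main technical obstacle is identifying the right pointwise inequality. Without it, one is forced into a two-step swap argument that behaves differently when $i^* = 1$ (where $Q_1$ contains no vectors with $q_x < 0$, so all the ``useful'' work must come from $Q_2$ and one has to argue directly that the $Q_2$-contribution to $\delta_1$ is negligible against the targeted bound). The observation that the single inequality $q_y \leq -(i^* - 1)\,q_x$ covers both $Q_1$ and $Q_2$ uniformly is precisely what removes this annoyance and makes the finishing calculation line up cleanly with the expression $i^*(d(r_I - c + i^*) - \delta_x^i)$ appearing on the right-hand side.
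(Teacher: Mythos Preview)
Your proof is correct and the endgame (showing $d\geq 0$ from $v_x-u_x<0$ and integrality, then splitting on $d=0$ versus $d\geq 1$) matches the paper exactly. The one point worth noting is that the sentence ``which is exactly $v_x-u_x=\delta_x^i$'' is a bit elliptical; what you mean is that when $d=0$ the hypothesis $v_x-u_x<0$ becomes $\delta_x^i<0$, so the needed inequality $\delta_x^i\leq 0$ is immediate.

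Where you genuinely diverge from the paper is in the bound on $\delta_1'$. The paper invokes Proposition~\ref{prop:optim} to reduce to a two-vector solution $w_{\vec{q}_1}(q_1)_y+w_{\vec{q}_2}(q_2)_y$, then argues by signs that $(q_1)_x<0$ in any optimal solution, then performs a swap to zero out $w_{\vec{q}_2}$, and finally optimizes over $\vec{q}_1\in Q_1$ with $(q_1)_x<0$ to find the extremal ratio $i^*-1$. Your single pointwise inequality $q_y\leq -(i^*-1)q_x$ for every $\vec{q}\in Q_1\cup Q_2$ replaces all of that at once: it makes Proposition~\ref{prop:optim} and the swap argument unnecessary here, and it handles $i^*=1$ without a separate case (the inequality degenerates to $q_y\leq 0$, which still holds on both $Q_1$ and $Q_2$). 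The algebraic check $(i^*-i)(i^*-i-1)\geq 0$ for $Q_1$ is correct, and the $Q_2$ case follows immediately from $q_x<0$, $q_y<0$, $i^*\geq 1$. This is a cleaner route to the same bound $\delta_1\leq -(i^*-1)(v_x-u_x)$; the paper's approach has the advantage of being uniform with Case~1, but your direct linear-dominance argument is shorter and avoids the auxiliary proposition.
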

\begin{proof}

Again we may assume our solution is of the form $\delta_1' = w_{\vec{q}_1} \cdot (q_1)_y + w_{\vec{q}_2} \cdot (q_2)_y$,  where $\vec{q}_1 \in Q_1$, $\vec{q}_2 \in Q_2$, and $w_{\vec{q}_1}  \cdot (q_1)_x + w_{\vec{q}_2} \cdot  (q_2)_x = v_x - u_x$. It can be seen that $\delta_1' \geq 0$ (e.g. by taking $w_{\vec{q}_2} = 0$ and $\vec{q}_1 = (-1, i^* - 1) \in Q_1$). Now note that $(q_2)_x < 0$ and $(q_2)_y < 0$ for all $\vec{q}_2 \in Q_2$. Additionally, note that if $(q_1)_x \geq 0$, then $(q_1)_y \leq 0$.
Consequently, if $(q_1)_x \geq 0$, then $\delta_1' < 0$, so we may assume $(q_1)_x <  0$.
Then since $(q_1)_x < 0$, it follows that $(q_1)_y > 0$ by the definition of $Q_1$.  On the other hand, we know that $(q_2)_y < 0$, so  the following feasible solution will be at least as good as our initial solution: 
$$\delta_1' = w'_{\vec{q}_1} \cdot (q_1)_y,  \text{\quad where \quad} w'_{\vec{q}_1} := (v_x - u_x) / (q_1)_x$$ 

\begin{itemize}
\item \textbf{Feasible:} Note that $v_x - u_x > 0$ and $(q_1)_x > 0$, so $w'_{\vec{q}_1} > 0$ as required. Additionally, the constraint $w'_{\vec{q}_1} \cdot (q_1)_x = v_x - u_x$ is satisfied. 
\item \textbf{Optimal:}  The difference between the value of our new solution and our old solution is 
\begin{align*}
 & w'_{\vec{q}_1} \cdot (q_1)_y - (w_{\vec{q}_1} \cdot  (q_1)_y + w_{\vec{q}_2} \cdot (q_2)_y) \\
& = (v_x - u_x) \cdot \frac{(q_1)_y}{(q_1)_x} -  (w_{\vec{q}_1} \cdot  (q_1)_y + w_{\vec{q}_2} \cdot (q_2)_y) \\
& = (w_{\vec{q}_1}  \cdot (q_1)_x + w_{\vec{q}_2} \cdot (q_2)_x)  \cdot \frac{(q_1)_y}{(q_1)_x} -  (w_{\vec{q}_1} \cdot  (q_1)_y + w_{\vec{q}_2} \cdot (q_2)_y) \\
& = w_{\vec{q}_2} \left( \frac{(q_2)_x (q_1)_y}{(q_1)_x} - (q_2)_y \right) \geq 0
\end{align*}
The inequality follows from the fact that $(q_1)_x < 0$, $(q_1)_y > 0$, $(q_2)_x < 0$, and $(q_2)_y < 0$ as noted earlier in the proof. We conclude our new solution is at least as good. 
\end{itemize}

Then we may assume our solution is of the form $w_{\vec{q}_1} (q_1)_y$, where $w_{\vec{q}_1} (q_1)_x = v_x - u_x$. This term is maximized when we choose $\vec{q}_1$ to be the vector  with the maximum $-\frac{(q_1)_y}{(q_1)_x}$ ratio, where $(q_1)_x < 0$. It is straightforward to verify that vector $(-1, i^* - 1) \in Q_1$ has the maximum such ratio $-\frac{(q_1)_y}{(q_1)_x} = i^* - 1$.

We conclude that $\delta_1 \leq \delta_1' = -\frac{(q_1)_y}{(q_1)_x} \cdot (v_x - u_x)  = -(i^* - 1)(v_x - u_x)$.
 Recall that $v_x - u_x = \delta_x^i - d (r_I - c + i^*)$ and $-r_I/ 2 \leq \delta_x^i \leq r_I/2$. Then since $0 \geq v_x - u_x $, it follows that $d \geq  \delta_x^i / (r_I - c + i^*) > -1$. Consequently, $d \geq 0$. 
Now note that if $d = 0$, then 
\[\delta_y^i = \delta_1 + \delta_2 = \delta_1 \leq (i^*-1)(u_x - v_x) \leq i^* (d   (r_I - c + i^*) - \delta_x^i)\]
where the final inequality follows from the fact that $u_x - v_x > 0$. Otherwise, if $d > 0$, then
\begin{align*}
    \delta_y^i = \delta_1 + \delta_2 & \leq (i^* - 1) \cdot (d   (r_I - c + i^*) - \delta_x^i) + d \cdot \sum_{j=i^*}^c j \\
    & \leq  i^*  \cdot (d   (r_I - c + i^*) - \delta_x^i) - (d   (r_I - c + i^*) - \delta_x^i) + dc^2 \\
    & \leq  i^*  \cdot (d   (r_I - c + i^*) - \delta_x^i) - d(r_I - c^2 - c) + \delta_x^i \\
    & \leq  i^*  \cdot  (d   (r_I - c + i^*) - \delta_x^i) - r_I/2 + \delta_x^i \\
    & \leq  i^*  \cdot  (d   (r_I - c + i^*) - \delta_x^i) - r_I/2 + r_I/2 \\
    & =  i^* (d   (r_I - c + i^*) - \delta_x^i)
\end{align*}
\end{proof}

\begin{proposition}[Case 3]
    If $v_x - u_x = 0$, then $\delta_y^i  \leq  i^* (d   (r_I - c + i^*) - \delta_x^i)$.
    \label{prop:case3}
\end{proposition}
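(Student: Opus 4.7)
The plan is to leverage the hypothesis $v_x - u_x = 0$ to drastically simplify the situation, then reduce to a small LP analysis. First, combining the hypothesis with the identity $v_x - u_x = \delta_x^i - d(r_I - c + i^*)$ (derived just before the optimization problem IP1), I get $\delta_x^i = d(r_I - c + i^*)$. Since $|\delta_x^i| \leq r_I/2$ and $r_I - c + i^* > r_I/2$ (because $r_I = c^{102}$ and $c$ is large), this forces $|d| < 1$; as $d$ is an integer, $d = 0$ and hence $\delta_x^i = 0$. Consequently the right-hand side of the target inequality simplifies to $i^*(d(r_I - c + i^*) - \delta_x^i) = 0$, and also $\delta_2 = d \cdot \sum_{j=i^*}^c j = 0$. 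So the whole task collapses to showing $\delta_1 \leq 0$.

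Next I would bound $\delta_1 \leq \delta_1'$ using LP1 with right-hand side $v_x - u_x = 0$, and invoke Proposition \ref{prop:optim} to restrict attention to optimal solutions of the form $\delta_1' = w_{\vec{q}_1}(q_1)_y + w_{\vec{q}_2}(q_2)_y$ with $\vec{q}_1 \in Q_1, \vec{q}_2 \in Q_2$, $w_{\vec{q}_1}, w_{\vec{q}_2} \geq 0$, and $w_{\vec{q}_1}(q_1)_x + w_{\vec{q}_2}(q_2)_x = 0$. The key observations about signs, which mirror those used in Cases 1 and 2, are: every $\vec{q}_2 \in Q_2$ satisfies $(q_2)_x < 0$ and $(q_2)_y < 0$ (the $x$-coordinate is $-i - 2r_I + 2c - i^*$ and the $y$-coordinate is at most $-2c$); and every $\vec{q}_1 \in Q_1$ has the sign of $(q_1)_y$ opposite to that of $(q_1)_x$, with both being zero exactly when $i = i^*$.

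The case analysis is then immediate. If $w_{\vec{q}_2} > 0$, the constraint $w_{\vec{q}_1}(q_1)_x = -w_{\vec{q}_2}(q_2)_x > 0$ forces $(q_1)_x > 0$, hence $(q_1)_y < 0$; then $\delta_1' = w_{\vec{q}_1}(q_1)_y + w_{\vec{q}_2}(q_2)_y \leq 0$. If $w_{\vec{q}_2} = 0$, the constraint becomes $w_{\vec{q}_1}(q_1)_x = 0$, so either $w_{\vec{q}_1} = 0$ (giving $\delta_1' = 0$) or $(q_1)_x = 0$, which forces $\vec{q}_1 = (0,0)$ and again $\delta_1' = 0$. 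In either case $\delta_1 \leq \delta_1' \leq 0$, so $\delta_y^i = \delta_1 + \delta_2 \leq 0 = i^*(d(r_I - c + i^*) - \delta_x^i)$, completing the proof.

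I do not expect a real obstacle here: unlike Cases 1 and 2, the hypothesis $v_x - u_x = 0$ pins $d = 0$ and kills both $\delta_2$ and the entire right-hand side, so no delicate balancing between $d(r_I - c + i^*)$ and $\sum_{j=i^*}^c j$ is needed. The only mild care required is verifying the sign patterns of $Q_1$ and $Q_2$ and correctly handling the degenerate case $i = i^*$ in $Q_1$; both are essentially arithmetic.
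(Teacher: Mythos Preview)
Your proof is correct and follows essentially the same approach as the paper: you use $v_x-u_x=0$ together with $|\delta_x^i|\le r_I/2$ to force $d=0$ (hence $\delta_2=0$ and the target inequality reduces to $\delta_1\le 0$), then appeal to Proposition~\ref{prop:optim} and the sign patterns of $Q_1,Q_2$ to show the LP1 optimum is $\le 0$. The paper's argument is the same, only slightly more terse in places (e.g.\ it simply says ``by the arguments in the previous two cases, $d=0$'' rather than spelling out the $|d|<1$ step).
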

\begin{proof}
Assume our solution is  of the form $\delta_1' = w_{\vec{q}_1}\cdot (q_1)_y + w_{\vec{q}_2}\cdot (q_2)_y$. By taking $w_{\vec{q}_1} = w_{\vec{q}_2} = 0$, we see that $\delta_1' \geq 0$. Moreover, if $w_{\vec{q}_2} > 0$, then $\delta_1' < 0$, so we may assume $w_{\vec{q}_2} = 0$. Then $\vec{q}_1 = (0, 0) \in Q_1$, so $\delta_1 \leq \delta_1' = 0$. Additionally, by the arguments in the previous two cases, $d = 0$.  
Then
\[
\delta_y^i = \delta_1 + \delta_2 = 0 = u_x - v_x  = d(r_I - c + i^*) - \delta_x^i \leq  i^* (d   (r_I - c + i^*) - \delta_x^i).
\]
\end{proof}

Lemma \ref{prop:height_length} follows immediately from Propositions \ref{prop:case1}, \ref{prop:case2}, \ref{prop:case3}.

\subsection{ Finishing the Lower Bound 
\label{subsec:finishing} }

Now that we have lower bounds for every category of move, we can begin reasoning about the difference in lengths between $\pi$ and $\pi^*$.

\begin{lemma}
Let $\pi$ be an $(s, t)$-path in $G$ that contains a move in $\normalfont  \textsc{Zigzag, Backward},$ or $\normalfont \textsc{Forward-D}$. Then $|\pi| - |\pi^*| = \Omega(r_O^{2/3}c^{-130})$.
\label{lem:bad_moves_bad}
\end{lemma}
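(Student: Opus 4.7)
The plan is to combine the per-move lower bounds already established with the telescoping identity from Proposition \ref{prop:amort_diff}. Specifically, I would start from the move decomposition $m_1, m_2, \dots, m_k$ of $\pi$ and the identity
\[
|\pi| - |\pi^*| \;=\; \sum_{i=1}^{k} \widehat{\Delta}(m_i).
\]
The strategy is then simply to argue that every term in this sum is at least $0$, and that at least one term is $\Omega(r_O^{2/3}c^{-130})$, which immediately yields the claimed gap.

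First I would partition the moves in the decomposition according to their category in $\mathcal{M}$, noting that the partition $\mathcal{M} = \{\textsc{Forward-S}, \textsc{Forward-D}, \textsc{Backward}, \textsc{Zigzag}, \textsc{Stationary}\}$ (after the refinement of \textsc{Forward}) covers every move. For moves in \textsc{Forward-S} and \textsc{Stationary}, Lemmas \ref{lem:forward_s_moves} and \ref{lem:stat_move} give $\widehat{\Delta}(m_i) \geq 0$ directly. For moves in \textsc{Backward}, \textsc{Zigzag}, or \textsc{Forward-D}, Lemmas \ref{lem:backward}, \ref{lem:zigzag}, and \ref{lem:forwardd} give $\Delta(m_i) = \Omega(r_O^{2/3}c^{-130})$ (taking the worst of the three bounds), and Proposition \ref{prop:bad_moves_bad} then converts this into the amortized bound $\widehat{\Delta}(m_i) = \Omega(r_O^{2/3}c^{-130})$.

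Now let $m_j$ be the distinguished move of $\pi$ that lies in \textsc{Zigzag}, \textsc{Backward}, or \textsc{Forward-D}, as guaranteed by hypothesis. Then
\[
|\pi| - |\pi^*| \;=\; \widehat{\Delta}(m_j) + \sum_{i \neq j} \widehat{\Delta}(m_i) \;\geq\; \widehat{\Delta}(m_j) \;=\; \Omega(r_O^{2/3}c^{-130}),
\]
which is the desired bound.

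I do not foresee any real obstacle: all the analytical weight has been paid in the previous subsection, where the move-by-move bounds (and in particular the amortization by the inner graph potential $\Phi$) were established. The only point that requires a sentence of care is the telescoping: when we invoke Proposition \ref{prop:amort_diff}, we use that $\Phi(s) = \Phi(t)$ because $s$ and $t$ are placed at the same coordinate $s_I$ in their respective inner graph copies, so the potential terms collapse and do not interfere with the bound on the distinguished move.
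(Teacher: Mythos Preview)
Your proposal is correct and follows essentially the same approach as the paper: use Proposition \ref{prop:amort_diff} to write $|\pi|-|\pi^*|$ as the sum of amortized move differences, note that every summand is nonnegative (by Lemmas \ref{lem:stat_move} and \ref{lem:forward_s_moves} together with Proposition \ref{prop:bad_moves_bad}), and observe that the distinguished move contributes $\Omega(r_O^{2/3}c^{-130})$.
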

\begin{proof}
Note that for every move $m$ in the move decomposition of $\pi$, $\widehat{\Delta}(m) \geq 0$. Moreover, since $\pi$ has a move $m'$ in $\normalfont  \textsc{Zigzag, Backward},$ or $\normalfont \textsc{Forward-D}$,  $\widehat{\Delta}(m') = \Omega(r_O^{2/3}c^{-130})$ by Proposition \ref{prop:bad_moves_bad}. Then  by Proposition \ref{prop:amort_diff}, $|\pi| - |\pi^*| = \Omega(r_O^{2/3}c^{-130})$.
\end{proof}

\begin{lemma}
\label{lemma:bad_moves}
Let $\pi$ be an $(s, t)$-path in $G$ with moves only in $\normalfont \textsc{Stationary}$ or $\normalfont \textsc{Forward-S}$ that takes a subdivided path not in $\pi^*$. Then $|\pi| - |\pi^*| = \Omega(r_O^{2/3}c^{-111})$. 
\end{lemma}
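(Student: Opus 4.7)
My plan is to show that $\pi$ must contain strictly more \textsc{Forward-S} moves than $\pi^*$ contains forward moves, which then forces an $\Omega(z)$ gap via a refined amortized argument.

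Step 1: I will argue that at least one \textsc{Forward-S} move in $\pi$ uses a vector $\vec{m}_j \neq \vec{v}^*$. Suppose otherwise. Since $\phi(\vec{v}^*) = (s^*, t^*)$ is a single critical pair in $P_I$, every \textsc{Forward-S} move must exit its inner-graph copy via output port $t^*$ and enter the next copy at input port $s^*$, while \textsc{Stationary} moves stay within a single copy. Consequently the sequence of inner-graph copies visited by $\pi$ is fully determined and coincides with the sequence $G_I^{s_O}, G_I^{s_O + \vec{v}^*}, \ldots, G_I^{t_O}$ used by $\pi^*$. In particular, $\pi$ uses exactly the subdivided paths of $\pi^*$, contradicting the hypothesis.

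Step 2: Let $k$ and $k^*$ denote the number of \textsc{Forward-S} moves in $\pi$ and forward moves in $\pi^*$, respectively. Since $\sum_i \vec{m}_i = t_O - s_O = k^*\vec{v}^*$ and \textsc{Stationary} moves contribute $\vec{0}$, projecting onto $\vec{v}^*$ gives $\sum_{\text{FS}} d_j = k^*\|\vec{v}^*\|$. By property 3 of Lemma \ref{lem:CIS} carried over to $W_O$, $d_j \leq \|\vec{v}^*\|$ for each \textsc{Forward-S} move, with strict inequality whenever $\vec{m}_j \neq \vec{v}^*$. Setting $\eta_j := \|\vec{v}^*\| - d_j \geq 0$, we have $\sum_{\text{FS}} \eta_j = (k - k^*)\|\vec{v}^*\|$; combined with Step 1 and integrality of $k, k^*$, this forces $k \geq k^* + 1$, and hence $\sum_{\text{FS}} \eta_j \geq \|\vec{v}^*\|$.

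Step 3: I will sharpen Lemma \ref{lem:forward_s_moves} by retaining the exact identity $d_j = \|\vec{v}^*\| - \eta_j$ in its proof rather than the relaxation $d_j \leq \|\vec{v}^*\|$; this yields $\widehat{\Delta}(m_j) \geq L_{\pi^*}\eta_j$ for each \textsc{Forward-S} move. Combining this with $\widehat{\Delta}(m_i) \geq 0$ for \textsc{Stationary} moves (Lemma \ref{lem:stat_move}) and Proposition \ref{prop:amort_diff},
\[
|\pi| - |\pi^*| \;=\; \sum_i \widehat{\Delta}(m_i) \;\geq\; L_{\pi^*}\sum_{\text{FS}}\eta_j \;\geq\; L_{\pi^*}\|\vec{v}^*\| \;=\; |\pi_I^{s^*,t^*}| + z \;=\; \Omega(r_O^{2/3}c^{-111}),
\]
where the identity $L_{\pi^*}\|\vec{v}^*\| = |\pi_I^{s^*,t^*}| + z$ holds because this is the length of a single canonical move in $\pi^*$. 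The main obstacle is making Step 1 airtight; it relies on the bijectivity of $\phi$ and on the fact that both endpoints $s,t$ of the critical pair in $G$ are located at input port $s^*$ of their respective inner-graph copies. The remaining steps are tightenings of inequalities that were purposely loose in Lemma \ref{lem:forward_s_moves}.
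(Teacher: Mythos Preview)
Your proof is correct and takes a genuinely different route from the paper's. The paper splits the edge count of each path into inner-graph and outer-graph contributions, $|\pi| = \ell_I^{\pi} + \ell_O^{\pi}$, and argues the two halves separately: it invokes the unique-shortest-path property of $G_O$ to conclude that $\pi$ uses at least one extra subdivided path, so $\ell_O^{\pi} - \ell_O^{\pi^*} \ge z$, and then introduces a variant potential analysis $\widehat{\Delta}_I$ (restricted to inner-graph edges) to show $\ell_I^{\pi} - \ell_I^{\pi^*} \ge 0$. You instead stay entirely inside the existing $\widehat{\Delta}$ framework: your Step~3 recovers the slack $L_{\pi^*}\eta_j$ that the paper's Lemma~\ref{lem:forward_s_moves} throws away at the line $d_i \le \|\vec{v}^*\|$, and your Steps~1--2 establish $\sum_{\textsc{FS}}\eta_j \ge \|\vec{v}^*\|$ via the strict-projection property of $W_O$ together with integrality of the move count. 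Both arguments ultimately hinge on the same two facts (one extra outer edge; nonnegative inner-graph amortization), but your packaging avoids defining a second amortized quantity $\widehat{\Delta}_I$ and yields the slightly larger bound $|\pi_I^{s^*,t^*}| + z$ rather than $z$. Your Step~1 is airtight as written: the bijectivity of $\phi$ pins the single $\vec{v}^*$-subdivided-path to the port pair $(t^*, s^*)$ in every copy, so the sequence of inner-graph copies---and hence of subdivided paths---is forced to coincide with that of $\pi^*$.
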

\begin{proof}
It will be useful to split the edges in $\pi$ and $\pi^*$ into two categories, the edges in the inner graph and the  edges in the outer graph. Let $\ell_I^{\pi}$ and $\ell_O^{\pi}$ denote the number of edges in the inner graph (respectively outer graph) in $\pi$, so $|\pi| = \ell_I^{\pi} + \ell_O^{\pi}$. Define $\ell_I^{\pi^*}$ and $\ell_O^{\pi^*}$ similarly.  Now since $\pi$ takes a subdivided path not in $\pi^*$, by part 5 of Lemma \ref{lemma:outer_graph} we know that $\pi$ takes at least one more subdivided path than $\pi^*$, so $\ell_O^{\pi} - \ell_O^{\pi^*} \geq z$, where $z$ is the length of a subdivided path in $G$.  Additionally, by our amortized analysis, we know that $|\pi| - |\pi^*| \geq 0$. However, we claim that our amortized analysis proves the stronger statement that $\ell_I^{\pi} - \ell_I^{\pi^*} \geq 0$.

Let $m_1, m_2, \dots, m_k$ be the move decomposition of $\pi$. For $i \in [1, k]$, let $m_i^I$  denote the restriction of move $m_i$ to inner graph copy $G_I^{(i)}$.
Note that $\sum_i |m_i^I| = \ell_I^{\pi}$. 
We now define an amortized \textit{inner graph} move length difference function $\widehat{\Delta}_I(m_i)$ for moves $m_i$, $i \in [1, k]$. 
If $m_i$ is a move in \textsc{Stationary}, then let $\widehat{\Delta}_I(m_i) = |m_i^I|  - (\Phi(s_{i+1}) - \Phi(s_i))$. Likewise, if $m_i$ is a move in \textsc{Forward-S}, then let $\widehat{\Delta}_I(m_i) = |m_i^I|  - |\pi_I^{s^*, t^*}|  - (\Phi(s_{i+1}) - \Phi(s_i))$. 

Suppose that $\pi$ contains $q \in [1, k]$ moves in \textsc{Forward-S}. Now note that every move in \textsc{Forward-S} can be interpreted as travelling on an edge in $G_O$. Additionally, path $\pi^*$ is composed exclusively of moves in \textsc{Forward-S}. Then by the unique shortest path property of $G_O$ (part 5 of Lemma \ref{lemma:outer_graph}), path $\pi^*$ has at most $q$ moves, so $\ell_I^{\pi^*} \leq q \cdot |\pi_I^{s^*, t^*}|$. It follows that
\[
\sum_i \widehat{\Delta}_I(m_i) = \sum_i |m_i^I| - q\cdot |\pi_I^{s^*, t^*}| - (\Phi(t) - \Phi(s)) = \ell_I^{\pi} - q\cdot |\pi_I^{s^*, t^*}| \leq  \ell_I^{\pi} - \ell_I^{\pi^*}
\]
Additionally, observe that the proofs of Lemma \ref{lem:stat_move} and Lemma  \ref{lem:forward_s_moves} establish that $\widehat{\Delta}_I(m_i) \geq 0$ if $m_i$ is in \textsc{Stationary} or \textsc{Forward-S}. Then it immediately follows that $\ell_I^{\pi} - \ell_I^{\pi^*} \geq \sum_i \widehat{\Delta}_I(m_i) \geq 0$. 
Consequently, $|\pi| - |\pi^*| = (\ell_I^{\pi} + \ell_O^{\pi}) - (\ell_I^{\pi^*} + \ell_O^{\pi^*}) \geq z =  \Omega(r_O^{2/3}c^{-111})$. 
\end{proof}

The following lemma is immediate from Lemma \ref{lem:bad_moves_bad} and Lemma \ref{lemma:bad_moves}.

\begin{lemma}
Let $\pi$ be an $(s, t)$-path in $G$ that takes a subdivided path not in $\pi^*$. Then $|\pi| - |\pi^*| = \Omega(r_O^{2/3}c^{-130})$.
\label{lem:gap_lem}
\end{lemma}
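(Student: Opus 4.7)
My plan is to prove Lemma \ref{lem:gap_lem} by a straightforward case analysis on the move decomposition of $\pi$, since the hypothesis that $\pi$ takes a subdivided path not in $\pi^*$ fits cleanly into the two prior lemmas. Recall that any simple $(s,t)$-path in $G$ decomposes into moves from $\mathcal{M} = \{\textsc{Forward}, \textsc{Backward}, \textsc{Zigzag}, \textsc{Stationary}\}$, where $\textsc{Forward}$ is further partitioned into $\textsc{Forward-S}$ and $\textsc{Forward-D}$. So the moves appearing in $\pi$ partition into the five subclasses $\{\textsc{Stationary}, \textsc{Forward-S}, \textsc{Forward-D}, \textsc{Backward}, \textsc{Zigzag}\}$.

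First, I would dispatch the case where $\pi$ contains at least one move from $\textsc{Zigzag}$, $\textsc{Backward}$, or $\textsc{Forward-D}$. In that case Lemma \ref{lem:bad_moves_bad} applies directly and yields $|\pi| - |\pi^*| = \Omega(r_O^{2/3} c^{-130})$, which is exactly the bound we want.

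Next, I would handle the complementary case, where every move in the decomposition of $\pi$ lies in $\textsc{Stationary} \cup \textsc{Forward-S}$. Since $\pi$ by hypothesis uses a subdivided path not appearing in $\pi^*$, Lemma \ref{lemma:bad_moves} applies and yields $|\pi| - |\pi^*| = \Omega(r_O^{2/3} c^{-111})$. For any sufficiently large constant $c$ we have $c^{-111} \ge c^{-130}$, so this is at least $\Omega(r_O^{2/3} c^{-130})$, matching the desired bound.

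Since the two cases exhaust all possibilities for the move decomposition of $\pi$, we conclude that $|\pi| - |\pi^*| = \Omega(r_O^{2/3} c^{-130})$ in every case. There is no real obstacle here: both cases are essentially invocations of the preceding lemmas, and the only subtlety is ensuring that the weaker exponent $c^{-111}$ of the second case still implies the desired $c^{-130}$ bound, which it does trivially.
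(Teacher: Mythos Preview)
Your proposal is correct and matches the paper's approach exactly: the paper simply states that the lemma is immediate from Lemma~\ref{lem:bad_moves_bad} and Lemma~\ref{lemma:bad_moves}, and your case split on whether $\pi$ contains a move in $\textsc{Zigzag}$, $\textsc{Backward}$, or $\textsc{Forward-D}$ is precisely how those two lemmas combine.
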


We can now prove our main result. 

\begin{theorem}
For any sufficiently large parameter $c_0$,  there are infinitely many $n$ for which there is an $n$-vertex graph $G$ such that any spanner of $G$ with at most $c_0n$ edges has additive distortion $+\Omega(n^{1/7}c_0^{-80})$.
\label{thm:spanner_lb}
\end{theorem}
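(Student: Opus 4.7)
Proof plan. The plan is: (i) a pigeonhole over the canonical paths produces a critical pair whose canonical path has an edge missing inside one of its subdivided paths in $H$; (ii) Lemma \ref{lem:gap_lem} converts that into an $\Omega(r_O^{2/3} c^{-130})$ additive gap for that pair; (iii) a parameter balance turns this into the claimed $\Omega(n^{1/7} c_0^{-80})$ lower bound.

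For setup, fix $c_0$, let $H \subseteq G$ be any subgraph with $|E(H)| \le c_0\,|V(G)|$, and choose the construction parameter $c$ to be a sufficiently large constant multiple of $c_0$, so that the hidden constant in Lemma \ref{lem:final_graph_2}'s estimate $|E(G)| = \Theta(c|V|)$ strictly exceeds $c_0$. By that lemma the canonical paths $\{\pi^{s,t} : (s,t)\in P\}$ edge-partition $E(G)$, each has common length $\Theta(n_O^{1/2} r_O^{-1/3} c^{-110})$, and each contains $\Theta(n_O^{1/2}/r_O)$ subdivided paths of length $z = \Theta(r_O^{2/3} c^{-111})$. A standard two-level averaging over canonical paths and then over their subdivided paths---structurally identical to the pigeonhole used in \cite{AB17jacm, HP18, LVWX22} but rebalanced with our parameters---produces a critical pair $(s^\star, t^\star) \in P$ for which at least one edge inside some subdivided path of $\pi^{s^\star, t^\star}$ is missing from $H$. (If $s^\star,t^\star$ happen to be disconnected in $H$ then the distortion is infinite, trivially satisfying the claim.)

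Next I would invoke Lemma \ref{lem:gap_lem}. Because subdivided paths are induced chains whose internal vertices have degree $2$, a single missing edge closes off the entire subdivided path, and so every $s^\star \leadsto t^\star$ walk in $H$ must contain at least one subdivided path not on $\pi^{s^\star, t^\star}$. Lemma \ref{lem:gap_lem} then yields
\[
d_H(s^\star, t^\star) - d_G(s^\star, t^\star) \;=\; \Omega\!\bigl(r_O^{2/3} c^{-130}\bigr).
\]
The final step is parameter tuning. Lemma \ref{lem:final_graph_2} gives $|V(G)| = \Theta(r_O^{4/3} c^{-116} n_O)$, so to match the target it suffices to set $r_O = \Theta(n^{3/14} c^{75})$ and $n_O = \Theta(n^{5/7} c^{16})$ (which forces $n_I = \Theta(n^{2/7} c^{-16})$ and $n_O \cdot n_I = \Theta(n)$). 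All base-graph requirements are met, in particular $r_O \le x_O/4 \sim n_O^{1/2}$ reduces to $n \gtrsim c^{469}$, which is why the theorem only asserts the bound for infinitely many $n$. Substituting,
\[
r_O^{2/3} c^{-130} \;=\; \Theta\!\bigl( n^{1/7} c^{50}\bigr) \cdot c^{-130} \;=\; \Theta\!\bigl(n^{1/7} c_0^{-80}\bigr).
\]

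The main obstacle I expect is the pigeonhole step. Finding a canonical path missing \emph{some} edge is immediate by averaging against $|E(G)|$; the subtlety is that the missing edge must sit inside a subdivided path rather than only inside an inner-graph copy, because the unique-shortest-path property of $G_I$ (Lemma \ref{lem:basegraph}) would only charge $+1$ per broken inner graph on same-subdivided-path reroutes, which is far below $n^{1/7}$. Forcing the missing edge onto a subdivided path requires playing $|E(H)|$ against both $|E(G)|$ and the total subdivided-edge count $|E_O|\cdot z$, and handling both regimes cleanly, exactly as in the analogous step of previous obstacle-product lower bounds.
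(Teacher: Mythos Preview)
Your proposal has a genuine gap, and you correctly identify it yourself in the final paragraph without resolving it. The pigeonhole you sketch cannot force the missing edge onto a subdivided path: the total number of subdivided-path edges in $G$ is only $\Theta(n)$, not $\Theta(cn)$. Indeed, by Lemma~\ref{lem:final_graph_2} the subdivided paths number $|E_O| = \Theta(c^{-5} r_O^{2/3} n_O)$ and each has length $z = \Theta(r_O^{2/3} c^{-111})$, for a total of $\Theta(r_O^{4/3} c^{-116} n_O) = \Theta(|V|)$ such edges. An adversary building $H$ with $c_0 n$ edges can therefore keep \emph{every} subdivided-path edge and still have budget left; all deletions can fall inside inner-graph copies. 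Against such an $H$, no critical pair is missing a subdivided edge, and your invocation of Lemma~\ref{lem:gap_lem} never fires.

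The paper's proof handles exactly this with a two-case split. One first finds (by averaging against $|E| = \Theta(cn)$) a canonical path $\pi^*$ missing at least half its edges in $H$. Then either the shortest $s\leadsto t$ path in $H$ uses a subdivided path not in $\pi^*$, and Lemma~\ref{lem:gap_lem} gives $\Omega(r_O^{2/3} c^{-130})$; or it reuses all of $\pi^*$'s subdivided paths, in which case the missing edges are in inner graphs, so at least half of the $\Theta(n_O^{1/2}/r_O)$ inner-graph subpaths of $\pi^*$ are broken, each costing $+1$ by inner-graph unique shortest paths, for distortion $\Omega(n_O^{1/2}/r_O)$. The value of $r_O$ is then fixed by \emph{balancing} these two terms, $r_O^{2/3} c^{-130} = \Theta(n_O^{1/2}/r_O)$, yielding $r_O = \Theta(n_O^{3/10} c^{78})$ and hence the $n^{1/7}$ bound. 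Your choice $r_O = \Theta(n^{3/14} c^{75})$ coincides with this balanced value, but in your argument it is unmotivated: without the second case there is nothing preventing you from pushing $r_O$ higher and claiming a stronger (and false) bound. The second case is not a technicality from prior work that can be black-boxed; it is the constraint that pins down $r_O$.
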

\begin{proof}
We are given a sufficiently large parameter $c_0 > 0$.
Then we will choose construction parameter $c = \Theta(c_0)$ for our infinite family of graphs $G$. Recall that by Lemma \ref{lem:final_graph_2}, every graph in our family on $n$ nodes has $m = \Theta(cn)$ edges. We choose $c$ to be sufficiently large so that for every graph $G$ on $n$ vertices and $m$ edges in our family, $\frac{c_0n}{m} < 1/2$.
Now consider any spanner $H$ of graph $G$
with at most $c_0n$ edges. Graph $H$ will contain at most half of the edges of $G$. Then because the canonical paths of $G$ are a partition of the edges of $G$, some canonical $(s, t)$-path $\pi^*$ has at most half of its edges in $H$, for some $(s, t) \in P$. 

Now fix a shortest $(s, t)$-path $\pi$ in $H$. If path $\pi$ takes a subdivided path in $G$ not in path $\pi^*$, then by Lemma \ref{lem:gap_lem}, $|\pi| - |\pi^*| = \Omega(r_O^{2/3}c^{-130})$. Else, path $\pi$ travels through exactly the inner graph copies that path $\pi^*$ travels through. Moreover, all the subdivided paths of $\pi^*$ must be in $H$.
Then at least half of the inner graph subpaths of $\pi^*$ are missing an edge in $H$. Now fix an inner graph copy $G_I$ that $\pi$ and $\pi^*$ pass through in $G$, and let $\pi_I$ and $\pi_I^*$ be the subpaths of these paths restricted to graph $G_I$. Recall that $\pi_I^*$ is a unique shortest path between its endpoints in $G_I$. Then if path $\pi_I^*$ is missing an edge in $H$, then $\pi_I \neq \pi_I^*$, so $|\pi_I| - |\pi_I^*| \geq 1$. Finally, observe that paths $\pi^*$ and $\pi$ pass through $\Omega(n_O^{1/2}r_O^{-1})$ inner graphs. It follows that $|\pi| -|\pi^*| = \Omega(n_O^{1/2}r_O^{-1})$.

Then balancing our parameters so that $r_O^{2/3}c^{-130} = \Theta(n_O^{1/2}r_O^{-1})$  gives us $r_O = \Theta(n_O^{3/10}c^{78})$. We note that $r_O \leq \frac{x_O}{4}$, so our construction requirements are satisfied. 
By Lemma \ref{lem:final_graph}, $G$ will have $n = \Theta(n_O \cdot r_O^{4/3}c^{-116}) = \Theta(n_O^{7/5}c^{-12})$ vertices. By the argument in the previous paragraph, $|\pi| - |\pi^*| =  \Omega(n_O^{1/5}c^{-78}) = \Omega(n^{1/7}c^{-80})$.  So $H$ will have additive distortion of at least $+\Omega(n^{1/7}c_0^{-80})$.  
\end{proof}

By choosing a different value for $r_O$, we are able to obtain new lower bounds against pairwise additive spanners.


\begin{theorem}
For any sufficiently large parameter $c_0$, there are infinitely many $n$ for which there is an $n$-vertex graph $G$ and a set $P$ of $p = \Theta_{c_0}(n^{1/2})$ demand pairs such that any pairwise spanner of $(G, P)$ with at most $c_0n$ edges has additive distortion $+c_0$. 
\end{theorem}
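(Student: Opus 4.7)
The plan is to re-parametrize the main construction from Theorem \ref{thm:spanner_lb}, but with $r_O$ chosen as a \emph{constant} depending only on $c_0$, rather than as a polynomial in $n_O$. Concretely, set $c = \Theta(c_0)$ and fix $r_O$ to be a constant large enough that $r_O^{2/3} c^{-130} \geq c_0$; any value on the order of $(c_0)^{\Theta(1)}$ suffices. This keeps all of the parameters of Section \ref{sec:specifying} valid (in particular $r_O \leq x_O/4$ once $n_O$ is large enough), while driving the error bound from Lemma \ref{lem:gap_lem} down to the constant $c_0$ and keeping the demand pair count at $\Theta_{c_0}(n^{1/2})$.

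With this choice, Lemma \ref{lem:final_graph_2} evaluates to
\[ |V| = \Theta_c(n_O), \qquad |E| = \Theta(c |V|), \qquad |P| = \Theta(r_O^{5/3} c^{-5} n_O^{1/2}) = \Theta_c(n^{1/2}). \]
Taking $c$ sufficiently larger than $c_0$ ensures $c_0 n < |E|/2$, so any candidate spanner $H \subseteq G$ of size at most $c_0 n$ misses more than half of $E$. Since by Lemma \ref{lem:final_graph_2} the canonical paths $\{\pi^{s,t}\}_{(s,t)\in P}$ partition $E$, a pigeonhole argument produces a pair $(s,t) \in P$ whose canonical path $\pi^*$ is missing at least half of its edges in $H$.

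I would then follow the case split used at the end of the proof of Theorem \ref{thm:spanner_lb}. Let $\pi$ be a shortest $(s,t)$-path in $H$. If $\pi$ uses a subdivided path not in $\pi^*$, then Lemma \ref{lem:gap_lem} yields $|\pi| - |\pi^*| = \Omega(r_O^{2/3} c^{-130}) \geq c_0$ by our choice of $r_O$. Otherwise, $\pi$ uses exactly the subdivided paths of $\pi^*$ and hence traverses the same inner graph copies; the missing edges must lie in inner graph subpaths, so at least a constant fraction of the $\Theta(n_O^{1/2} r_O^{-1})$ inner-graph subpaths of $\pi^*$ are missing an edge, and each such subpath contributes $+1$ additive error by the unique-shortest-path property (Lemma \ref{lem:base_graph}, part 5). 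This gives $|\pi| - |\pi^*| = \Omega_c(n_O^{1/2})$, which exceeds $c_0$ once $n_O$ is taken large enough. Either way, the spanner $H$ incurs $+c_0$ additive distortion on the pair $(s,t)$.

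The only real detail to check is that infinitely many valid $n$ arise from the construction. Since $c$ and $r_O$ are fixed, the inner graph parameter $n_I = \Theta_c(r_O^{4/3} c^{-116})$ is fixed at some particular value within the infinite arithmetic progression dictated by the congruence condition preceding Proposition \ref{prop:P_I_x_diff}, and one lets $n_O$ range over an infinite family, yielding $n = \Theta_c(n_O)$. I expect no real obstacle here: all of the heavy lifting, namely the amortized move-decomposition analysis culminating in Lemma \ref{lem:gap_lem}, has already been carried out and applies verbatim; this theorem is essentially a parameter-balance corollary of that lemma, obtained by trading the polynomial error of Theorem \ref{thm:spanner_lb} for a constant error in exchange for preserving $\Theta(n^{1/2})$ demand pairs.
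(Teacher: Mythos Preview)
Your proposal is correct and essentially identical to the paper's proof: both fix $c = \Theta(c_0)$, take $r_O$ to be a constant large enough that $r_O^{2/3}c^{-130} \ge c_0$ (the paper specifies $r_O > c_0^{300}$), and then rerun the two-case argument from Theorem \ref{thm:spanner_lb} to conclude distortion $\ge c_0$ while $|P| = \Theta_{c_0}(n^{1/2})$ and $n = \Theta_{c_0}(n_O)$.
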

\begin{proof}
Given $c_0 > 0$, we choose construction parameter $c = \Theta(c_0)$ so that for every graph $G$ on $n$ vertices and $m$ edges in our infinite family of graphs, $\frac{c_0n}{m} < 1/2$. We define our demand pairs to be the set of critical pairs $P$ in our construction.
Let $H$ be a pairwise spanner of $(G, P)$ with at most $c_0n$ edges. By an argument identical to that of Theorem \ref{thm:spanner_lb}, it follows that $H$ has additive distortion at least $+k = \min\{r_O^{2/3}c^{-130}, n_O^{1/2}r_O^{-1}\}$.

 Now instead of choosing $r_O$ to grow polynomially with $n$ as in Theorem \ref{thm:spanner_lb}, we will let $r_O = \Theta_{c_0}(1)$. Moreover, we will require that $r_O > c_0^{300}$. Note that there exists infinitely many valid choices of $r_O$ satisfying these criteria. Then for sufficiently large $c_0$ and sufficiently large $n$ relative to $c_0$, it follows that $+k = \min\{r_O^{2/3}c^{-130}, n_O^{1/2}r_O^{-1}\} \geq c_0$. Additionally, by Lemma \ref{lem:final_graph_2}, $|P| = \Theta(r_O^{5/3} \cdot c^{-5} \cdot n_O^{1/2}) = \Theta_{c_0}(n_O^{1/2})$, and $n = \Theta(n_O \cdot r_O^{4/3}c^{-116}) =  \Theta_{c_0}(n_O)$, so we conclude that $|P| = \Theta_{c_0}(n^{1/2})$. 
\end{proof}

\section*{Acknowledgements}
The authors would like to thank Shang-En Huang for helpful discussions about lower bound graphs.

\bibliographystyle{plain}
\bibliography{refs}

\begin{appendix}
\section{Proof of Lemma \ref{lem:CIS} \label{app:CIS}}

Our proof of Lemma \ref{lem:CIS} begins with a result of Balog and B{\'a}r{\'a}ny which we restate in our specific context below:

\begin{lemma}[\cite{balog1991convex}, Theorem 1] \label{lem:barany}
For $r \ge 0$, there exists a strongly convex set $W(r)$ of integer vectors in $\mathbb{Z}^2$ of size $|W(r)| = \Theta(r^{2/3})$ such that for all $\vec{v} \in W(r)$, $r - r^{-1/3} \leq \|\vec{v}\| \leq r$.
\end{lemma}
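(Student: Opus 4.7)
The plan is to take $W(r)$ to be (a suitable subset of) the vertex set of the convex hull of $\mathbb{Z}^2 \cap \overline{D_r}$, where $\overline{D_r}$ is the closed disk of radius $r$ centered at the origin, and then to verify the three required properties separately: cardinality $\Theta(r^{2/3})$, the annulus norm bound $r - r^{-1/3} \le \|\vec{v}\| \le r$, and strong convexity.

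For the cardinality, I would invoke the classical theorem of Andrews (following Jarnik) for the upper bound $O(r^{2/3})$: the convex hull of lattice points in any planar convex body of area $A$ has $O(A^{1/3})$ vertices, since consecutive-vertex triangles have integer double-area at least $1$ and together fit inside the body. The matching lower bound $\Omega(r^{2/3})$ would be obtained by an explicit construction, e.g., by counting primitive integer vectors $(a,b)$ with $\gcd(a,b)=1$ and $a^2 + b^2$ close to $r^2$ inside a fixed angular sector via a Farey-type argument; such primitive vectors automatically sit in convex position. For the annulus norm bound, I would observe that each hull vertex $\vec{v}$ admits a supporting line $\ell$ with no lattice point of the disk strictly on its outward side; if $\|\vec{v}\| < r - r^{-1/3}$, then translating $\ell$ outward by less than $r^{-1/3}$ would expose a circular segment of the disk of area $\Omega(1)$, which by a standard lattice-point density argument must contain a lattice point strictly beyond $\vec{v}$, contradicting $\vec{v}$'s extremality. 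Any finitely many ``inner'' vertices that survive this argument may simply be discarded, which does not affect the $\Theta(r^{2/3})$ count.

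The main obstacle is the strong convexity property. Suppose for contradiction that there is a nontrivial relation $\vec{v}_0 = \sum_i \lambda_i \vec{v}_i$ with $\sum_i |\lambda_i| \le 1$ and all $\vec{v}_i \in W(r)$. The triangle inequality gives $\|\vec{v}_0\| \le \sum_i |\lambda_i|\,\|\vec{v}_i\| \le r$, and combined with the annulus lower bound $\|\vec{v}_0\| \ge r - r^{-1/3}$ this leaves only $O(r^{-1/3})$ slack in the triangle inequality. Its near-equality case then forces the vectors $\lambda_i \vec{v}_i$ with $\lambda_i \neq 0$ to be nearly positively parallel along a common ray from the origin. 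The hard part is ruling out near-antipodal cancellations such as $\vec{v}_0 \approx \tfrac{1}{2}\vec{v}_1 - \tfrac{1}{2}\vec{v}_2$ with $\vec{v}_2 \approx -\vec{v}_1$; for this I would exploit the strict positive curvature of the circle together with lattice integrality, which combine to force two lattice vectors in the annulus that are nearly antipodal to be exactly negatives of one another, making the relation trivial in the sense of the lemma's definition. The $r^{-1/3}$ width in the annulus is calibrated precisely so that this last step goes through: much narrower and we would sacrifice the $\Theta(r^{2/3})$ cardinality count, much wider and genuine nontrivial linear relations would open up.
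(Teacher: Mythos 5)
Your overall plan (take $W(r)$ to be the hull vertices of $\mathbb{Z}^2\cap\overline{D_r}$, then verify cardinality, the annulus bound, and strong convexity) matches the construction the paper is citing from Balog--B\'ar\'any, and your cardinality and annulus sketches are in the right spirit. The problem is the strong-convexity step, where you deliberately avoid the convex-position structure and try to close the near-antipodal case with curvature and lattice integrality alone. That does not work. Suppose $\vec{v}_0 = \tfrac12\vec{v}_1 - \tfrac12\vec{v}_2$ with $\vec{v}_2\approx -\vec{v}_1$, and write $\vec{d} := \vec{v}_1 + \vec{v}_2$ (so $\vec{v}_1 = \vec{v}_0 + \vec{d}/2$, $\vec{v}_2 = -\vec{v}_0 + \vec{d}/2$). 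Expanding norms and using that all three vectors lie in the annulus of width $r^{-1/3}$ gives only $\|\vec{d}\|^2 = \left(\|\vec{v}_1\|^2 + \|\vec{v}_2\|^2\right) - 2\|\vec{v}_0\|^2 = O(r^{2/3})$, i.e.\ $\|\vec{d}\| = O(r^{1/3})$, and the angular gap between $\vec{v}_1$ and $-\vec{v}_2$ is $O(r^{-2/3})$. Since the hull has $\Theta(r^{2/3})$ vertices spread over circumference $\Theta(r)$, the \emph{typical} spacing between consecutive hull vertices is itself $\Theta(r^{1/3})$ in distance and $\Theta(r^{-2/3})$ in angle. So curvature plus integrality leaves exactly enough room for $\vec{d}$ to be a nonzero lattice vector; you have not forced $\vec{d}=\vec 0$. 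Your calibration claim is also backwards: to force $\vec d = \vec 0$ by integrality from this estimate you would need annulus width $\ll 1/r$, which would destroy the $\Theta(r^{2/3})$ cardinality.

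The fix, which is what the paper implicitly uses, is much simpler and does not need the annulus bound at all. Because $W(r)$ is \emph{in convex position} (each $\vec v \in W(r)$ is an extreme point of $\mathrm{conv}(W(r))$) and symmetric under $\vec v \mapsto -\vec v$, any relation $\vec v_0 = \sum_i \lambda_i \vec v_i$ with $\sum_i|\lambda_i|\le 1$ rewrites, setting $\mu_i = |\lambda_i|$ and $\vec w_i = \mathrm{sgn}(\lambda_i)\vec v_i \in W(r)$, as a convex combination $\vec v_0 = \sum_i \mu_i \vec w_i + \bigl(1 - \sum_i \mu_i\bigr)\vec 0$ of points of $\mathrm{conv}(W(r))$. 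An extreme point admits no nontrivial representation as a convex combination of points of the body, so every $\vec w_i$ with $\mu_i > 0$ must equal $\vec v_0$ and the coefficient of $\vec 0$ must vanish; equivalently every participating $\vec v_i$ is $\pm\vec v_0$, which is the trivial case. In your notation for the $k=2$ instance, $\vec v_0$ is literally the midpoint of $\vec v_1$ and $-\vec v_2 \in W(r)$, and a hull vertex cannot be a proper midpoint of two hull points, so $\vec v_1 = -\vec v_2$. You should replace the curvature/integrality sentence with this extreme-point argument; the annulus width matters for the cardinality and for the later projection estimates, but not here.
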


The fact that all vectors $\vec{v} \in W(r)$ satisfy $r - r^{-1/3} \leq \|\vec{v}\| \leq r$ is not stated explicitly in \cite{balog1991convex}. However, it follows directly from the lower bound argument in Section 3 of \cite{balog1991convex}, specifically the assignment of parameter $\Delta$.
The fact that $W(r)$ is \emph{strongly} convex is also technically not stated in \cite{balog1991convex}, but it follows immediately from symmetry of the construction; that is, if $\vec{v} \in W(r)$ then $-\vec{v} \in W(r)$.
We will now remove some of the vectors from the strongly convex set $W(r)$ from Lemma \ref{lem:barany} in order to obtain a strongly convex set satisfying Lemma \ref{lem:CIS}. 
Note that any subset of $W(r)$ will be a strongly convex set satisfying property 1 of Lemma \ref{lem:CIS}, so it remains to obtain the latter two properties.

\paragraph{Property 2.} Next, we try to enforce the second property of Lemma \ref{lem:CIS}: for any circular sector $S$ with inner angle $\psi$, there are only $O(\psi \cdot r^{2/3})$ vectors in $W(r) \cap S$.
To do so: restrict $W(r)$ to be the vectors in just one quadrant of $\rr^2$ (throwing away a constant fraction of these vectors), and let $\vec{v_1}, \vec{v_2}, \dots, \vec{v_k}$ be the vectors in $W(r)$ ordered counterclockwise about the origin.
Let $\vec{u_i} = \vec{v}_{i+1} - \vec{v_i}$ for $i \in [1, k]$ (where indices are taken mod $k$).
Observe that $\vec{u_i} \neq \vec{u_j}$ for all $i \neq j$ by the convexity of $W(r)$. Moreover, by integrality of the vectors in $W(r)$, we have that all $\vec{u_i}$ have integer coordinates.
Now observe that there are only $4(\ell+1)$ distinct integer vectors $(x, y)$ such that $|x| + |y| = \ell$, namely, the vectors
$$(i, \ell-i), (-i, \ell-i), (i, -(\ell-i)), (-i, -(\ell-i)) \qquad \text{for } i \in [0, \ell].$$
Thus, the number of these vectors $\vec{u}_i$ of magnitude $\|\vec{u}_i\| \leq m$ is at most
$$\sum_{\ell=0}^{m}4(\ell+1) \leq 8m^2.$$ 

We will throw away the vectors $\vec{v_i}$ in $W(r)$ that are close together.
Let $|W(r)| \geq \alpha r^{2/3}$ where $\alpha >0$ is some constant. We define $W'(r)$ as follows:
\[
W'(r) = \left\{ \vec{v}_i \in W(r)   \mid   \|\vec{u_i}\| > \frac{\alpha^{1/2}}{4} \cdot r^{1/3} \right\} 
\]
Since each $\vec{u_i}$ is a distinct integer vector, by our previous observations there can be no more than $\alpha /2\cdot r^{2/3}$ vectors $\vec{u_i}$ with magnitude $\|\vec{u_i}\| \leq \alpha^{1/2}/4 \cdot r^{1/3}$. Thus
$$\left|W'(r)\right| \geq \frac{\alpha}{2} \cdot r^{2/3}$$
as desired. Now we claim that $W'(r)$ has property 2.
For notational convenience, we will redefine $\vec{v}_1, \dots, \vec{v}_k$ and vectors $\{\vec{u}_i\}$ as before, over the \emph{surviving} vectors in $W'(r)$.
Since all vectors in $W'(r)$ lie in the same quadrant of $\rr^2$, all vectors $\{\vec{u}_i\}$ have the same signs for their respective coordinates, and it follows that for all $i$ we have
$$\|\vec{u_i}\| > \frac{\alpha^{1/2}}{4} \cdot r^{1/3}.$$
Let $S$ be a circular sector with inner angle $\psi$, and let $\vec{v}_i, \vec{v}_{i+1}, \dots, \vec{v}_{i + j}$ be the $j+1$ vectors in $W'(r) \cap S$. 
Now consider the convex hull $CH(\vec{v_i}, \dots, \vec{v}_{i+j})$ of the endpoints of vectors $\vec{v_i}, \dots, \vec{v}_{i+j}$.
Observe that the boundary of this convex hull has length
$$\sum_{\nu=i}^{i+j-1} \|\vec{u}_{\nu}\| + \| \vec{v}_{i+j} - \vec{v}_i \|.$$
Furthermore, if $a_S$ denotes the circular arc of radius $r$ spanning the circular sector $S$, then the convex hull $CH(a_S, \vec{v_i}, \vec{v}_{i+j})$ encloses $CH(\vec{v_i}, \dots, \vec{v}_{i+j})$. 
The length of the boundary of $CH(a_S, \vec{v_i}, \vec{v}_{i+j})$ is dominated by the arc $a_S$, so it is $O(r\psi)$.
We thus have
\begin{gather*}
\Omega\left(j \cdot r^{1/3}\right) \leq  \sum_{\nu=i}^{i+j-1} \|\vec{u}_{\nu}\| + \| \vec{v}_{i+j} - \vec{v}_i\| \leq O(r\psi)
\end{gather*}
so we conclude that $j \leq O\left(\psi \cdot r^{2/3}\right)$.

\paragraph{Property 3.}
Lastly, we enforce the property that for each vector $\vec{v}$, the vector in our strongly convex set with the largest magnitude scalar projection onto $\vec{v}$ is $\vec{v}$ itself.
To do so, we will pass from $W'(r)$ to $W''(r)$.
Note that any subset of $W'(r)$ will be a strongly convex set and satisfy properties 1 and 2 of Lemma \ref{lem:CIS}.

First we overview our strategy to construct $W''(r)$.
We repeatedly select some $\vec{v} \in W'(r)$ and add it to $W''(r)$.
Let $\ell_v$ be the chord of the circle of radius $r$ that intersects the endpoint of $\vec{v}$ and is perpendicular to it.
If there are vectors $\vec{u} \in W'(r)$ with scalar projection $\proj_{\vec{v}}\vec{u} \geq \| \vec{v} \|$, then they lie on the side of $\ell_v$ opposite the origin. 
We throw away all such vectors from $W'(r)$ before proceeding, and selecting one of the surviving vectors in $W'(r)$ as our next choice of $\vec{v}$.

Let us bound the total number of vectors discarded in this way.
Using the fact that $r - r^{-1/3} \le \|\vec{v}\| \le r$, a straightforward application of the Pythagorean theorem shows that the chord $\ell_v$ has length $O(r^{1/3})$.
Thus, the angle $\psi_v$ of the circular sector spanning $\ell_v$ satisfies $\sin(\psi_v) = O(r^{-2/3})$, and thus $\psi_v = O(r^{-2/3})$.
Applying property $2$ of Lemma \ref{lem:CIS}, this circular sector contains only $O(1)$ vectors from $W'(r)$.
Thus, we discard only $O(1)$ vectors from $W'(r)$ for each vector that we add to $W''(r)$, and so we discard only a constant fraction of the vectors in $W'(r)$ in total.

So far, this construction guarantees that, for any vector $\vec{v}$ added to $W''(r)$, no \emph{future} vector $\vec{v}'$ added to $W''(r)$ has larger projection onto $\vec{v}$ than $\vec{v}$ itself.
We still need to handle the opposite dependency, in which a \emph{past} vector $\vec{v}'$ added to $W''(r)$ has larger projection onto $\vec{v}$ than $\vec{v}$ itself.
In fact, this is easily handled by considering the vectors in $W'(r)$ in a specific order, rather than choosing an arbitrary $\vec{v} \in W'(r)$ in each round.
Notice that this problematic dependence only occurs in the case when $\|\vec{v}'\| > \|\vec{v}\|$, and so $\vec{v}$ escapes the discarded sector for $\vec{v}'$, but $\vec{v}'$ lies in the discarded sector for $\vec{v}$.
We therefore select the vector $\vec{v} \in W'(r)$ with \emph{minimum length} $\|\vec{v}\|$ in each round (breaking ties arbitrarily).
This avoid conflicts as described, since we would then consider $\vec{v}$ first, and discard $\vec{v}'$.

\section{Proof of Proposition \ref{prop:optim} \label{app:optim}}

We must show that any optimal solution to our relaxed optimization problem is of form $\delta_1' = w_{\vec{q}_1}   \cdot (q_1)_y + w_{\vec{q}_2} \cdot (q_2)_y$, where $\vec{q}_1 = ((q_1)_x, (q_1)_y) \in Q_1$, $\vec{q}_2 = ((q_2)_x, (q_2)_y) \in Q_2$, and  $w_{\vec{q}_1}   \cdot (q_1)_x + w_{\vec{q}_2} \cdot (q_2)_x = v_x - u_x$. We can accomplish this by showing that any feasible solution in general form can be converted into a feasible solution in our desired form that is at least as good. 


\vspace{3mm}

We first show that there is at most one $w_{\vec{q}_1} > 0$ where $\vec{q}_1 \in Q_1$ in any optimal solution. Corresponding to the vectors in $Q_1$, we define the following  function $f$. For all $1 - i^* \leq x \leq c - i^*$, let  
$$f(x) = -1/2 \cdot ((x + i^*)(x + i^* - 1) - i^*(i^* - 1)).$$
Note that for all $\vec{q} = (q_x, q_y) \in Q_1$, $f(q_x) = q_y$. Moreover, function $f$ is concave over its domain and strictly decreasing.

\begin{proposition}
    Let $\{w_{\vec{q}}\}_{\vec{q} \in Q_1 \cup Q_2}$ be a feasible solution to \text{\normalfont LP1}. Then we can obtain a feasible solution      $\{w'_{\vec{q}}\}_{\vec{q} \in Q_1 \cup Q_2}$
    that is at least as good as $\{w_{\vec{q}}\}$, by letting
    \begin{align*}
            w'_{(x, f(x))} &:= 0 & \text{\normalfont for } & \quad x \in [1 - i^*, -2] \\
            w'_{(-1, f(-1))} & :=  -\sum_{x  \in [1 - i^*, -1] } x \cdot w_{(x, f(x))} \\
w'_{(0, f(0))} & := 0 \\
        w'_{(1, f(1))} & := \sum_{x \in [1, c-i^*]} x \cdot w_{(x, f(x))} \\
         w'_{(x, f(x))} & := 0 & \text{\normalfont for } & \quad x \in [2, c - i^*] \\
        w'_{\vec{q}} & := w_{\vec{q}}   & \text{\normalfont for }& \quad \vec{q} \in Q_2
    \end{align*}
    \label{prop:2q1}
\end{proposition}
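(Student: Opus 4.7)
The plan is to verify the three conditions that make $\{w'_{\vec{q}}\}$ a ``feasible solution at least as good as $\{w_{\vec{q}}\}$'': nonnegativity of every $w'_{\vec{q}}$, the linear constraint $\sum_{\vec{q}} q_x w'_{\vec{q}} = v_x - u_x$, and the objective bound $\sum_{\vec{q}} q_y w'_{\vec{q}} \geq \sum_{\vec{q}} q_y w_{\vec{q}}$. The first two are mechanical; the third is where the content of the argument lies.

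For nonnegativity, I would check the two potentially nonzero $Q_1$ entries directly. The value $w'_{(-1, f(-1))} = -\sum_{x \in [1-i^*, -1]} x\, w_{(x, f(x))}$ is a sum of nonnegative terms since every $x$ in the summation range is negative and every $w$-value is nonnegative. The value $w'_{(1, f(1))}$ is manifestly a sum of nonnegative terms. All other entries of $w'$ are either copied from $\{w_{\vec{q}}\}$ or set to $0$.

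For the linear constraint, the $Q_2$ contributions agree with those of $w$ by construction, and the $Q_1$ contributions collapse to
\[
(-1)\, w'_{(-1, f(-1))} + (+1)\, w'_{(1, f(1))} \; = \; \sum_{x \in [1-i^*, -1]} x\, w_{(x, f(x))} \; + \; \sum_{x \in [1, c-i^*]} x\, w_{(x, f(x))} \; = \; \sum_{\vec{q} \in Q_1} q_x \, w_{\vec{q}},
\]
using that the $x = 0$ term contributes nothing. Hence the linear constraint is inherited from $\{w_{\vec{q}}\}$.

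The main step, and the only nontrivial one, is the objective bound. After canceling the identical $Q_2$ contributions and noting $f(0) = 0$, the difference in objective values simplifies to
\[
\sum_{x \in [1-i^*, -1]} \bigl( (-x)\, f(-1) - f(x) \bigr)\, w_{(x, f(x))} \; + \; \sum_{x \in [1, c-i^*]} \bigl( x\, f(1) - f(x) \bigr)\, w_{(x, f(x))}.
\]
The plan is then to show termwise nonnegativity by invoking concavity of $f$ through the origin: since $f$ is concave with $f(0) = 0$, the secant slope $f(x)/x$ is nonincreasing on $[1, c-i^*]$, giving $f(x) \leq x\, f(1)$; symmetrically, on $[1-i^*, -1]$ the analogous statement gives $f(x) \leq (-x)\, f(-1)$. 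Concavity of $f$ is immediate from its constant second derivative $-1$, and $f(0) = 0$ by inspection. Boundary cases $i^* = 1$ and $i^* = c$, in which one of the two summation ranges is empty, are handled naturally since the corresponding $w'$-entry is then set to $0$ by definition. I expect the only delicate point is applying the concavity inequality with the correct sign on the negative half; the rest is bookkeeping.
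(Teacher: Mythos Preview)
Your proposal is correct and follows essentially the same approach as the paper: both arguments verify feasibility by the same direct checks and reduce the objective comparison to the termwise inequalities $f(x)\le -i^*x$ for $x\ge 1$ and $f(x)\le (1-i^*)x$ for $x\le -1$. The only cosmetic difference is that the paper justifies these inequalities via the tangent lines to $f$ at $x=\pm 1$, while you invoke the monotonicity of the secant slope $f(x)/x$ through the origin (using $f(0)=0$); since $f(1)=-i^*$ and $f(-1)=i^*-1$, these are the same inequalities.
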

\begin{proof} \leavevmode
\begin{itemize}
\item \textbf{Feasible:} We must verify that $w_{\vec{q}} \geq 0$ for all $\vec{q} \in Q_1 \cup Q_2$ as required.  This follows straightforwardly from the definition of our solution and the fact that $w_{\vec{q}} \geq 0$ for all $ \vec{q} \in Q_1 \cup Q_2$, since we assumed our initial solution was feasible.

Additionally, the constraint $\sum_{\vec{q} \in Q_1 \cup Q_2} q_x w'_{\vec{q}} = v_x - u_x$ remains satisfied, since
\begin{align*}
\sum_{\vec{q} \in Q_1 \cup Q_2} q_x w'_{\vec{q}} & = 1 \cdot w'_{(1, f(1))} + -1 \cdot w'_{(-1, f(-1))} + \sum_{\vec{q} \in Q_2}  q_x w'_{\vec{q}} \\
& =  \sum_{x \in [1, c-i^*]} x \cdot w_{(x, f(x))} + \sum_{x  \in [1 - i^*, -1] } x \cdot w_{(x, f(x))}  + \sum_{\vec{q} \in Q_2} q_x w_{\vec{q}}  \\
& = \sum_{\vec{q} \in Q_1 \cup Q_2} q_x w_{\vec{q}} \\
& = v_x - u_x
\end{align*}
where the final equality follows from our assumption that our initial solution was feasible. 
\item \textbf{At least as good:}  
The value of our old solution is
\[
\sum_{\vec{q} \in Q_1 \cup Q_2} q_y  w_{\vec{q}} = \sum_{x \in [1 - i^*, c - i^*]}f(x) \cdot w_{(x, f(x))} + \sum_{\vec{q} \in Q_2} q_y w_{\vec{q}} 
\]
The value of our new solution is 
\begin{align*}
    \sum_{\vec{q} \in Q_1 \cup Q_2} q_y  w'_{\vec{q}} & =   \sum_{x \in [1 - i^*, c - i^*]}f(x) \cdot w'_{(x, f(x))} + \sum_{\vec{q} \in Q_2} q_y w_{\vec{q}} \\
    & = f(-1) \cdot w'_{(-1, f(-1))} + f(1) \cdot w'_{(1, f(1))} + \sum_{\vec{q} \in Q_2} q_y w_{\vec{q}} \\
    & = -f(-1) \cdot \sum_{x \in [1 - i^*, -1]} x \cdot w_{(x, f(x))} + f(1) \cdot   \sum_{x \in [1, c-i^*]} x \cdot w_{(x, f(x))} + \sum_{\vec{q} \in Q_2} q_y w_{\vec{q}} \\
     & = -(i^*-1) \cdot \sum_{x \in [1 - i^*, -1]} x \cdot w_{(x, f(x))} - i^* \cdot   \sum_{x \in [1, c-i^*]} x \cdot w_{(x, f(x))} + \sum_{\vec{q} \in Q_2} q_y w_{\vec{q}} 
\end{align*}

Then the difference between   the value of our new solution and our old solution is 
\begin{align*}
 \texttt{diff} := \sum_{x \in [1 - i^*, -1]} (( - i^* + 1)x - f(x) )  \cdot w_{(x, f(x))} +   \sum_{x \in [1, c-i^*]} (-i^*x - f(x)) \cdot w_{(x, f(x))}
\end{align*}
If we can establish that $f(x) \leq (-i^* + 1)x$ for $ x\leq -1$ and $f(x) \leq -i^*x$ for $x \geq 1$, then it will immediately follow that $\texttt{diff} \geq 0$. Observe that the line tangent to $f$ at $x=-1$ is $t(x) = (-i^* + 3/2)x + 1/2$. Then since $f$ is concave, $f(x) \leq t(x) \leq (-i^* + 1)x$ for $x \leq -1$. Likewise, the line tangent to $f$ at $x  = 1$ is $t(x) = (-i^* -1/2)x + 1/2$, so $f(x) \leq t(x) \leq -i^*x$ for $x \geq 1$.  We conclude that $\texttt{diff} \geq 0$, so our new solution is at least as good as our initial solution as desired. \qedhere
\end{itemize}
\end{proof}
We will use Proposition \ref{prop:2q1} to establish our desired property about weights $w_{\vec{q}_1}$ with $\vec{q}_1 \in Q_1$. 
\begin{proposition}
 There exist optimal solutions to \text{\normalfont LP1} with at most one $\vec{q}_1 \in Q_1$ such that $w_{\vec{q}_1} > 0$. 
\label{prop:1q1}
\end{proposition}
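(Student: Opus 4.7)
The plan is to chain Proposition \ref{prop:2q1} with a short pairwise cancellation on the two surviving $Q_1$ weights. Starting from any optimal solution $\{w_{\vec{q}}\}$ to LP1, I would first invoke Proposition \ref{prop:2q1} to obtain a feasible solution $\{w'_{\vec{q}}\}$ of value at least as large --- hence also optimal --- whose support in $Q_1$ lies entirely in $\{(-1, f(-1)), (1, f(1))\}$. This reduces the task to showing that one of $w'_{(-1, f(-1))}$ and $w'_{(1, f(1))}$ must vanish.

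The second step is the pairwise cancellation. These two vectors have $x$-coordinates $-1$ and $+1$ respectively, so decreasing both weights by a common amount $\Delta \ge 0$ preserves the equality constraint $\sum_{\vec{q}} q_x w_{\vec{q}} = v_x - u_x$ of LP1 exactly. A direct substitution into the definition of $f$ yields $f(1) = -i^*$ and $f(-1) = i^* - 1$, so that $f(1) + f(-1) = -1$. Hence the same joint decrease \emph{increases} the objective by $-\Delta(f(1) + f(-1)) = \Delta$.

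Combining the two steps: if both $w'_{(-1, f(-1))} > 0$ and $w'_{(1, f(1))} > 0$, setting $\Delta := \min(w'_{(-1, f(-1))},\, w'_{(1, f(1))}) > 0$ produces another feasible solution whose objective strictly exceeds that of $\{w'_{\vec{q}}\}$, contradicting the optimality established in the first step. Therefore at most one of the two weights is positive, which is exactly the statement of Proposition \ref{prop:1q1}.

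The main obstacle I expect is not conceptual but numerical: one must verify that $f(1) + f(-1)$ is strictly negative (equal to $-1$), since this sign is precisely what turns the cancellation into a \emph{strict} improvement rather than a neutral rearrangement. Once this sign is pinned down --- which is immediate from the concrete formula for $f$ --- everything else is routine bookkeeping on top of Proposition \ref{prop:2q1}. One minor technicality is checking that the vector $(-1, f(-1))$ actually lies in $Q_1$ (which requires $i^* \ge 2$, corresponding to $i = i^* - 1 \in [1,c]$); in the degenerate case $i^* = 1$, the $x$-range of $Q_1$ is already $[0, c-1]$, so $w'_{(-1, f(-1))} = 0$ automatically and the conclusion is immediate.
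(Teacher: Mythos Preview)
Your proposal is correct and follows essentially the same approach as the paper: apply Proposition~\ref{prop:2q1} to reduce the $Q_1$ support to $\{(-1,f(-1)),(1,f(1))\}$, then cancel the smaller of the two remaining weights against the larger using $f(1)+f(-1)=-1$. The only cosmetic difference is that you phrase the second step as a contradiction with optimality, whereas the paper splits into two cases and directly constructs the improved solution; the arithmetic is identical in both.
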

\begin{proof}
By Proposition \ref{prop:2q1}, we may assume our solution is of the form $\delta_1' = f(-1) \cdot w_{(-1, f(-1))} + f(1) \cdot w_{(1, f(1))} + \sum_{\vec{q} \in Q_2} q_y w_{\vec{q}}$ where $-1 \cdot w_{(-1, f(-1))} + 1 \cdot w_{(1, f(1))} + \sum_{\vec{q} \in Q_2} q_x w_{\vec{q}} = v_x - u_x$. We split our analysis into two cases:
\paragraph{Case 1: $w_{(-1, f(-1))} \leq w_{(1, f(1))}$} In this case, we define our new solution $\{w'_{\vec{q}}\}$ to be   
\begin{align*}
    w'_{(-1, f(-1))} &:= 0 \\
    w'_{(1, f(1))} &:= w_{(1, f(1))} - w_{(-1, f(-1))} \\
    w'_{\vec{q}} & := w_{\vec{q}} & \qquad \text{ for } \vec{q } \in Q_2
\end{align*}
It is immediate that  $\{w'_{\vec{q}}\}$ is feasible. Observe that the difference between the value of our new solution and our old solution is 
\begin{align*}
\texttt{diff} & = (f(1)(w_{(1, f(1))} - w_{(-1, f(-1))}))  - ( f(-1) \cdot w_{(-1, f(-1))} + f(1) \cdot w_{(1, f(1))}) \\
  & =  -f(1) \cdot w_{(-1, f(-1))} - f(-1) \cdot w_{(-1, f(-1))} \\
  & \geq (i^* - (i^* - 1)) \cdot   w_{(-1, f(-1))} \geq 0
\end{align*}
so our new solution is at least as good as the old solution.

\paragraph{Case 2:  $w_{(-1, f(-1))} > w_{(1, f(1))}$}  In this case, we define our new solution $\{w'_{\vec{q}}\}$ to be   
\begin{align*}
    w'_{(-1, f(-1))} &:=  w_{(-1, f(-1))} -  w_{(1, f(1))}  \\
    w'_{(1, f(1))} &:=0 \\
    w'_{\vec{q}} & := w_{\vec{q}} & \qquad \text{ for } \vec{q } \in Q_2
\end{align*}
It is immediate that  $\{w'_{\vec{q}}\}$ is feasible. Observe that the difference between the value of our new solution and our old solution is 
\begin{align*}
\texttt{diff} & = (f(-1)(w_{(-1, f(-1))} - w_{(1, f(1))}  ))  - ( f(-1) \cdot w_{(-1, f(-1))} + f(1) \cdot w_{(1, f(1))}) \\
  & =  -f(-1) \cdot w_{(1, f(1))} - f(1) \cdot w_{(1, f(1))} \\
  & \geq (-(i^*-1) + i^*) \cdot   w_{(1, f(1))} \geq 0
\end{align*}
Our new solution is at least as good as the old solution. The claim is established.
\end{proof}

Now to finish our proof of Proposition \ref{prop:optim}, we need to extend Proposition \ref{prop:1q1} to vectors in $Q_2$. 

\begin{proposition}
There exist optimal solutions $\{w_{\vec{q}}\}_{\vec{q} \in Q_1 \cup Q_2}$ to \text{\normalfont LP1} with at most one $\vec{q}_1 \in Q_1$ such that $w_{\vec{q}_1} > 0$ and at most one $\vec{q}_2 \in Q_2$ such that  $w_{\vec{q}_2} > 0$. 
\label{prop:lp12}
\end{proposition}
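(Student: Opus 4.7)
My plan is to start from the solution form guaranteed by Proposition \ref{prop:1q1} and then perform an analogous consolidation on the $Q_2$ weights, exploiting the fact that the geometry of $Q_2$ is simpler than that of $Q_1$. Concretely, I will first invoke Proposition \ref{prop:1q1} to reduce to an optimal solution in which at most one $\vec{q}_1^* \in Q_1$ satisfies $w_{\vec{q}_1^*} > 0$. Freezing that weight, I will treat the remaining $Q_2$ weights as an independent subproblem: given the residual x-target $X_2 := v_x - u_x - w_{\vec{q}_1^*}(q_1^*)_x$, maximize $\sum_{\vec{q}_2 \in Q_2} w_{\vec{q}_2}(q_2)_y$ subject to $\sum_{\vec{q}_2 \in Q_2} w_{\vec{q}_2}(q_2)_x = X_2$ and non-negativity.

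The crucial observation I will use is that every vector in $Q_2$ lies strictly in the open third quadrant; that is, $(q_2)_x < 0$ and $(q_2)_y < 0$ for all $\vec{q}_2 \in Q_2$. Both inequalities follow immediately from the explicit definition of $Q_2$ together with $r_I = c^{102}$ and $i, i^* \in [1, c]$. In particular $X_2 \leq 0$, and if $X_2 = 0$ then every $Q_2$ weight must already vanish, so I may focus on $X_2 < 0$. Defining the slope $\sigma(\vec{q}_2) := (q_2)_y / (q_2)_x > 0$ and picking $\vec{q}_2^*$ to minimize $\sigma$ over $Q_2$, I will replace all positive $Q_2$ weights by the single weight $w'_{\vec{q}_2^*} := X_2 / (q_2^*)_x$, which is positive because numerator and denominator are both negative.

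Finally, I will verify that this replacement is feasible (the $x$-constraint holds by construction) and does not decrease the objective. A short calculation shows that the objective change equals
\begin{equation*}
X_2 \sigma(\vec{q}_2^*) - \sum_{\vec{q}_2 \in Q_2} w_{\vec{q}_2}(q_2)_y = \sum_{\vec{q}_2 \in Q_2} w_{\vec{q}_2} (q_2)_x \bigl(\sigma(\vec{q}_2^*) - \sigma(\vec{q}_2)\bigr),
\end{equation*}
and each summand is the product of $(q_2)_x < 0$ with $\sigma(\vec{q}_2^*) - \sigma(\vec{q}_2) \leq 0$, hence non-negative. Combined with the unchanged $Q_1$ weight, this yields the claimed form. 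The main obstacle, to the extent that there is one, will be simply to recognize that this single-slope conic argument applies cleanly to $Q_2$ precisely because its vectors sit in one open quadrant, thereby sidestepping the tangent-line case analysis that Propositions \ref{prop:2q1} and \ref{prop:1q1} needed for the sign-mixed set $Q_1$.
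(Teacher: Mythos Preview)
Your proof is correct and follows essentially the same approach as the paper: start from the one-$Q_1$-weight form of Proposition \ref{prop:1q1}, then consolidate all $Q_2$ weight onto a single vector using a slope comparison, exploiting that every $\vec{q}_2 \in Q_2$ has both coordinates strictly negative. The only cosmetic difference is that the paper explicitly names the slope-minimizing vector (it takes the one with $x$-coordinate $\ell = -2r_I - i^* + c$ and verifies $g(\ell)\,x \le g(x)\,\ell$ via monotonicity of $g$), whereas you pick the slope minimizer abstractly; the underlying inequality and feasibility check are identical.
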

\begin{proof}
By Proposition \ref{prop:2q1}, we may assume our solution is of the form $\delta_1' = q_y^1 \cdot  w_{\vec{q}_1} + \sum_{\vec{q} \in Q_2} q_y \cdot w_{\vec{q}}$ where $\vec{q}_1 = (q_x^1, q_y^1) \in Q_1$, and $ q_x^1 \cdot  w_{\vec{q}_1} + \sum_{\vec{q} \in Q_2} q_x \cdot w_{\vec{q}} = v_x - u_x$. Now corresponding to the vectors in $Q_2$, we define the following  function $g$. For all $-2r_I - i^* + c  \leq x \leq -2r_I - i^* + 2c - 1$, let  
$$g(x) = 1/2 \cdot ((x + 2r_I - 2c + i^*)(x + 2r_I - 2c + i^* + 1) + i^*(i^* - 1)) -c(c+1) .$$
Note that for all $\vec{q} = (q_x, q_y) \in Q_2$, $g(q_x) = q_y$. Function $g$ is strictly decreasing. Moreover, the domain and range of $g$ contain only negative numbers. Let $\ell := -2r_I - i^* + c$.
We define the following new solution $\{w'_{\vec{q}}\}_{\vec{q} \in Q_1 \cup Q_2}$. 
\begin{align*}
    w'_{\vec{q}_1} &:= w_{\vec{q}_1} \\
    w'_{(\ell, g(\ell))} &:= \ell ^{-1} \cdot \sum_{x \in [\ell , \ell + c-1]} x \cdot w_{(x, g(x))} \\
    w'_{\vec{q}} & := 0 & \text{ if } \vec{q} \not \in \{ \vec{q}_1, (\ell, g(\ell)) \}
\end{align*}
    \begin{itemize}
        \item \textbf{Feasible:} Since $\ell^{-1} < 0$ and $x \leq \ell + c - 1 < 0$, it follows that $w'_{\vec{q}} \geq 0$ for all $\vec{q} \in Q_1 \cup Q_2$. Additionally, 
        $$q_x^1 \cdot  w_{\vec{q}_1} + \ell \cdot \ell^{-1} \cdot  \sum_{x \in [\ell , \ell + c-1]} x \cdot w_{(x, g(x))}   =  q_x^1 \cdot  w_{\vec{q}_1} + \sum_{\vec{q} \in Q_2} q_x \cdot w_{\vec{q}} = v_x - u_x$$
        so our constraint is satisfied.
        \item \textbf{Optimal:} The difference between our new solution and our old solution is 
        \begin{align*}
        \texttt{diff}  & =   \left( g(\ell) \cdot \ell ^{-1} \cdot \sum_{x \in [\ell , \ell + c-1]} x \cdot w_{(x, g(x))} \right) - \sum_{x \in [\ell, \ell + c - 1]} g(x) \cdot w_{(x, g(x))} \\
       & =  \sum_{x \in [\ell, \ell + c - 1]} (g(\ell) \cdot \ell^{-1} \cdot x - g(x)) \cdot w_{(x, g(x))} 
        \end{align*}
        If we can establish that $g(\ell) \cdot \ell^{-1} \cdot x - g(x) \geq 0$, then $\texttt{diff} \geq 0$. Since $\ell < 0$, this inequality can be restated as
        \[
        g(\ell) \cdot x \leq g(x) \cdot \ell 
        \]
        Now note that $\ell \leq x < 0$ and $g(x) \leq g(\ell) < 0$, since $g$ is strictly decreasing over its domain. Then $0 < -x \leq -\ell$ and $0 < -g(\ell) \leq -g(x)$, so combining these inequalities gives us $g(\ell) \cdot x \leq g(x) \cdot \ell$, so $\texttt{diff} \geq 0$, as desired.  \qedhere
    \end{itemize}
\end{proof}

Proposition \ref{prop:optim} follows immediately from Proposition \ref{prop:lp12}.


\end{appendix}

\end{document}